\newcommand{\bc}{\begin{center}}
\newcommand{\ec}{\end{center}}
\newcommand{\be}{\begin{equation}}
\newcommand{\ee}{\end{equation}}
\newcommand{\bea}{\begin{eqnarray}}
\newcommand{\eea}{\end{eqnarray}}
\newcommand{\bean}{\begin{eqnarray*}}
\newcommand{\eean}{\end{eqnarray*}}
\newcommand{\bt}{\begin{tabular}}
\newcommand{\et}{\end{tabular}}
\newtheorem{theorem}{Theorem}
\newtheorem{assumption}[theorem]{Assumption}
\newtheorem{definition}[theorem]{Definition}
\newtheorem{lemma}[theorem]{Lemma}
\newtheorem{proposition}[theorem]{Proposition}
\newtheorem{remark}[theorem]{Remark}
\numberwithin{theorem}{section}
\newcommand{\argmin}{\operatorname*{argmin}}
\newcommand{\PR}{\mathbb{P}}
\newcounter{saveeqn}
\begin{document}

\title{\vspace*{-0.5 in} \textbf{Robust Forecasting}}

\author{
		Timothy Christensen\\ {\em \small New York University} 
		\and
        Hyungsik Roger Moon\\ {\em \small Univ. of Southern California} \\
        {\em \small Schaeffer Center, and Yonsei University}
        \and
        Frank Schorfheide\thanks{Correspondence:
        		T. Christensen: Department of Economics, New York University, 19 West 4th Street, 6th floor, New York, NY 10012. Email: \texttt{timothy.christensen@nyu.edu}.
                H.R. Moon: Department of Economics, University of Southern California, KAP 300, Los Angeles, CA 90089. Email: \texttt{moonr@usc.edu}. 
                F. Schorfheide: Department of Economics, 133 S. 36th Street, 
                University of Pennsylvania, Philadelphia, PA 19104-6297. Email: 
                \texttt{schorf@ssc.upenn.edu}. 
                We are grateful for comments and suggestions from the 
                participants of the 2019 USC INET Panel Data Forecasting 
                Conference, the 2020 Econometric Society World Congress, the Penn 
                Econometrics Lunch seminar, and the Yale Econometrics seminar. We thank Zhan Gao, James Nesbit, and Boyuan Zhang for proofreading the manuscript.    
                This material is based upon work supported by the National Science Foundation under Grants No. SES-1919034 (Christensen), SES-1625586 (Moon), and SES-1851634 (Schorfheide).} \\
        {\em \small University of Pennsylvania} \\ {\em \small CEPR, NBER, and PIER} 
     }

\date{This Version: December 10, 2020}
\maketitle

\begin{abstract}
\noindent
	We use a decision-theoretic framework to study the problem of forecasting discrete outcomes when the forecaster is unable to discriminate among a set of plausible forecast distributions because of partial identification or concerns about model misspecification or structural breaks. We derive ``robust'' forecasts which minimize maximum risk or regret over the set of forecast distributions. We show that for a large class of models including semiparametric panel data models for dynamic discrete choice, the robust forecasts depend in a natural way on a small number of convex optimization problems which can be simplified using duality methods. Finally, we derive ``efficient robust'' forecasts to deal with the problem of first having to estimate the set of forecast distributions and  develop a suitable asymptotic efficiency theory. Forecasts obtained by replacing nuisance parameters that characterize the set of forecast distributions with efficient first-stage estimators can be strictly dominated by our efficient robust forecasts. 
\end{abstract}

\noindent JEL CLASSIFICATION: C11, C14, C23, C53

\noindent KEY\ WORDS: Statistical Decision Theory, Dynamic Discrete Choice, Forecasting, Identification, Minimax Loss, Minimax Regret, Panel Data Models, Robustness, Structural Breaks.

\thispagestyle{empty}
\setcounter{page}{0}
\newpage

\section{Introduction}

In this paper, we study the problem of forecasting discrete outcomes when the researcher is unable to discriminate among a set of plausible forecast distributions. There are several reasons why the forecaster might face uncertainty about the forecast distribution. A leading case is partial identification, in which the data up to the forecast origin only set-identify a subset of parameters of the forecasting model. Uncertainty about the forecast distribution can also arise when the forecaster expands the set of  models to accommodate concerns about model misspecification or structural breaks between the in-sample and forecast period.

Suppose that a subset of parameters of a forecasting model are only set-identified. Should the lack of point identification be a concern for the forecaster? At first glance the answer appears to be ``no'': if the parameters in the identified set generate different forecasts and some of these forecasts are less accurate than others, then we should be able to discriminate among the parameters based on the observed data. To the extent that we are unable to do so, the parameterizations should be observationally equivalent and therefore generate the same forecasts. This intuition is confirmed in the context of vector autoregressions (VARs): while the structural form of the VAR may only be set-identified, forecasts only utilize the reduced form of the VAR which is directly identifiable. This intuition is also confirmed in the context of dynamic linear factor models. The parameters are only identified up to a particular normalization of the latent factors, but each normalization leads to identical forecasts. However, the intuition is wrong in many other important settings.

Our paper makes several contributions. First,  we show that the VAR intuition does not apply when forecasting using dynamic discrete choice models for panel data. As is well known  \citep{HonoreTamer2006,Chamberlain2010,Chernozhukovetal2013}, the homogeneous parameters and the correlated random effects distribution are set-identified when no parametric assumptions are made about the random effects distribution. We demonstrate that in the panel dynamic discrete choice setting different parameters in the identified set lead to different forecasts, some more accurate than others. Unlike a VAR, the panel dynamic discrete choice model has a non-Markovian structure due to the sequential learning about  heterogeneous coefficients. As a consequence, parameterizations that are indistinguishable based on a panel of length $T$ may become distinguishable in a panel of length $T+1$.

Second, we construct forecasts that are ``robust'' to uncertainty about the parameterization $\theta$ of the forecasting model among a set of parameterizations $\Theta_0$ that are observationally equivalent at the forecast origin $T$. We refer to this as \emph{uncertainty about the forecast distribution} for short. Our robust forecasts minimize either maximum risk or maximum regret (i.e. risk relative to the infeasible Bayes decision under the true forecast distribution) over the set of  forecast distributions. We show that for binary (or classification) loss, quadratic loss, and logarithmic loss, the optimal binary forecast under either robustness criterion depends on two extremum problems which characterize the smallest and largest conditional probabilities for the  outcomes being forecast over the set of forecasting distributions. Similarly, robust forecasts in the multinomial case depend in a natural way on a small number of extremum problems. Further, we show that these extremum problems can be simplified by duality arguments for a broad class of models.

Our robust forecasts are not only applicable in settings in which parameters are set-identified, but also in environments in which the forecaster is concerned about model misspecification or there is a structural break at the forecast origin. The common feature is that the forecasts depend on an unknown parameter $\theta$ which takes values in a set $\Theta_0$. Under misspecification or structural breaks, the set $\Theta_0$ indexes an enlarged class of models representing plausible deviations from a benchmark model.

Third, we derive ``efficient robust'' forecasts to deal with the problem of 
first estimating the set $\Theta_0$ prior to making the forecast. To do so,  we 
express $\Theta_0$ as a (set-valued) function of an identifiable 
reduced-form parameter $P$. In order to develop an optimality theory, we 
evaluate forecasts by their integrated maximum risk or regret, averaging over 
both $P$ and the data. Under this criterion, the optimal forecast is what we 
call the \emph{Bayesian robust forecast}. It is obtained by minimizing the posterior 
maximum risk or regret which conditions on the data and averages out $P$ based 
on its posterior distribution.

Fourth, we develop an asymptotic efficiency theory for forecasting discrete outcomes under uncertainty about the parameterization of the forecast distribution when $\Theta_0$ is estimated. We show that in binary and multinomial discrete forecasting problems, forecasts that are asymptotically equivalent to the Bayesian robust forecasts minimize asymptotic integrated maximum risk or regret. We refer to such forecasts as \emph{asymptotically efficient-robust}. We demonstrate that forecasts obtained by replacing $P$ with an efficient first-stage estimator can be strictly dominated by the Bayesian robust forecast. This suboptimality of plug-in forecasts arises in settings in which key statistics that determine the robust forecast are only directionally, but not fully, differentiable with respect to the identifiable reduced-form parameters. Bagged predictors (see \cite{Breiman1996}) that replace posterior averaging with averaging across the bootstrap distribution of an efficient estimator of $P$, on the other hand, tend to be asymptotically efficient-robust.  

Our paper is related to several literatures. For forecasting short time-series using panel data see, e.g., \cite{Baltagi2008}, \cite{GuKoenker2014}, \cite{Liu2016}, and \cite{LiuMoonSchorfheide2018,LiuMoonSchorfheide2015}. Applications of partial identification in nonlinear panel data analysis include \cite{HonoreTamer2006}  and \cite{Chernozhukovetal2013}. Much of our paper is devoted to forecasting binary outcomes which has been previously considered by, for instance, \cite{ElliottLieli2013}, \cite{LahiriYang2013}, and \cite{ElliottTimmermann2016}.

There is an extensive literature on statistical decision theory following \cite{Wald1950}. Closely related to our approach are $\Gamma$-minimax (or $\Gamma$-minimax regret) decisions in robust Bayes analysis \citep{Robbins1951,Berger1985}. In economics, this approach is also related to the multiple priors framework of  \cite{GilboaSchmeidler1989} and the robustness literature following \cite{HansenSargent2001}. For econometric applications, \cite{Chamberlain2000,Chamberlain2001} derives minimax decision rules under point identification. 
\cite{Kitagawa2012},  \cite{GiacominiKitagawa2018}, and \cite{GiacominiKitagawaUhlig2019} study robust Bayesian analysis under set identification.

\cite{HiranoWright2017} study  the problem of forecasting continuous outcomes under uncertainty about predictor variables in a weak predictor local asymptotic setting. Despite several differences between their work and ours,\footnote{For instance, uncertainty about the parameterization of the forecasting model is resolved asymptotically in their framework whereas it persists in our setting.} they also find that bagging can reduce asymptotic risk.
Discrete forecasting has a similar structure to statistical treatment assignment and our efficiency results are related to efficiency results in that literature, most notably \cite{HiranoPorter2009}. In their setting, a welfare contrast is a smooth function of a point-identified, regularly estimable parameter. Their efficient rules are based on plugging-in an efficient estimator of the parameter. In our setting, uncertainty about the forecast distribution can introduce a type of non-smoothness to the robust forecasting problem. In consequence, our efficient robust forecasts differ from plug-in rules.

The remainder of this paper is organized as follows. Section~\ref{sec:setup} describes the setup, our objectives, and introduces motivating examples. Sections~\ref{sec:binary} and \ref{sec:multinomial} derive our robust and efficient robust forecasts for binary and multinomial forecasting settings, respectively. Section~\ref{sec:binary} also contains an application to panel models for dynamic binary choice. Section~\ref{sec:uncertainty} presents the main results on asymptotic efficiency. Appendix~\ref{appsec:computation} discusses computation for a broad class of models including semiparametric panel data models. Appendix~\ref{appsec:binary} contains additional results on robust binary forecasts and all proofs are relegated to Appendix~\ref{appsec:proofs}.

\section{Setup, Motivating Examples, and Objectives}
\label{sec:setup}

\subsection{Setup}

The econometrician wishes to forecast a random variable $Y$ taking values in a finite set $\mathcal Y$. The econometrician assumes $Y$ is distributed according to an (unknown) distribution in a family of forecast distributions $\big\{ \PR_\theta(Y = y ) : \theta \in \Theta_0 \big\}$, where $\theta \in \Theta$ denotes a vector of parameters and $\Theta_0 \subseteq \Theta$ indexes the set of forecast distributions over which the forecaster seeks robustness. The forecast distributions $\PR_\theta(Y = y ) $ may be conditioned on covariates observed by the econometrician when making the forecast, but we suppress this dependence in what follows to simplify notation.

As we discussed in the Introduction, there are several reasons why the forecaster might be uncertain about the forecast distribution. A leading case is partial identification, in which $\Theta_0$ represents the identified set of parameters that are observationally equivalent up to the forecast origin. 
Uncertainty about the forecast distribution can also arise under concerns about model misspecification or structural breaks. In these settings, $\Theta_0$ indexes an enlarged class of models representing plausible deviations from a benchmark model. 

\subsection{Motivating Examples}

To fix ideas and illustrate the broad applicability of our results, we now present a number of examples of where this forecasting problem arises. The first four examples use a panel data model for dynamic discrete choice to show how our approach accommodates concerns about model misspecification and structural breaks in a unified manner, though these are relevant concerns for {\em any} forecasting model. The last two examples involve counterfactuals and treatment assignments.

\paragraph*{Example 1: Semiparametric random effects model for dynamic binary choice.}\label{example:binary}
Let
\be
   Y_{it+1} = \mathbb{I} [ \lambda_i + \beta Y_{it} \ge U_{it+1} ], \quad
   \PR( U_{it+1} \le u | Y_i^t = y^t, \lambda_i = \lambda) = \Phi_{t+1}(u),
   \label{eq:benchmark.pddc}
\ee
where $\mathbb{I}[ y \ge a] = 1$ if $y \ge a$ and $0$ otherwise, $Y_i^t = (Y_{i1},...,Y_{it})'$, and $y^t \in \{0,1\}^t$.
The econometrician observes $Y_i^T = (Y_{i1},...,Y_{iT})'$ for $i=1,\ldots,n$ where $T$  is fixed. 
To avoid the initial conditions problem, the econometrician treats $Y_{i0}$ as unobserved and specifies a joint distribution $\Pi_{\lambda,y}$ over $Y_{i0}$ and $\lambda_i$. 
As is well known, $\beta$ is not point-identified when $T$ is small and no parametric restrictions are placed on $\Pi_{\lambda,y}$ (see, e.g., \cite{Cox1958}, \cite{Chamberlain1985}, and \cite{Magnac2000}). Moreover, $\Pi_{\lambda,y}$ and the $\Phi_t$ are not nonparametrically point-identified for any $T$.

The econometrician wishes to forecast individual-level outcomes $Y_{iT+1}$ conditional upon an individual's history $Y_i^T = y^T \in \{0,1\}^T$. Suppose that the econometrician assumes each of the $\Phi_t$ takes a parametric form $\Phi$, such as logistic or standard normal. The identified set $\Theta_0$ then consists of all $(\Pi_{\lambda,y},\beta)$ for which the model-implied probabilities of observing sequences $Y_i^T = y^T \in \{0,1\}^T$ are equal to the probabilities observed in the data up to date $T$:
\begin{equation}\label{eq:idset:panel} 
 \Theta_0 = \big\{ \theta = (\beta,\Pi_{\lambda,y}) \in \Theta : p(y^T |  \beta, \Pi_{\lambda,y}) = p(y^T)  \;\; \forall \, y^T \in \{0,1\}^T  \big\}\,,
\end{equation}
where $ p(y^T |  \beta, \Pi_{\lambda,y})$ denotes the model-implied probabilities and $p(y^T)$ denotes the true (population) probabilities of observing $Y_i^T = y^T$. 
In the above notation, $Y = Y_{iT+1}$ and the forecast probability $\PR_\theta$ denotes the conditional probability over $Y_{iT+1}$ given $Y_i^T = y^T$:
\begin{equation}\label{eq:prob:panel} 
 \PR_\theta(Y = 1) := \PR_\theta ( Y_{iT+1} = 1 |Y_i^T =y^T ) = \frac{ \int  \Phi(\beta y_{iT} + \lambda) p(y^T | y_0,\lambda;  \beta ) \mathrm{d}\Pi_{\lambda,y}(\lambda,y_0) } 
	{ \int   p(y^T | y_0,\lambda;  \beta ) \mathrm{d}\Pi_{\lambda,y}(\lambda,y_0) } \,. \quad \Box
\end{equation}

\paragraph*{Example 2: Misspecification.} Consider the setup described in Example 1, but suppose that the econometrician adopted a parametric correlated random effect model, $\Pi_{\lambda,y} = \Pi(\lambda,y_0; \xi)$ for $\xi \in \Xi$, a set of auxiliary parameters. The econometrician is worried that this parametric random effects specification is misspecified, and so allows for the possibility that $\Pi_{\lambda,y} \in N(\xi)$, a neighborhood of $\Pi(\lambda,y; \xi)$. Suppose the econometrician again sets $\Phi_t = \Phi$ for all $t$.
The  set $\Theta_0$ is 
\[ 
 \Theta_0 = \{ \theta = (\beta,\xi,\Pi_{\lambda,y}) \in \Theta : p(y^T |  \beta, \Pi_{\lambda,y}) = p(y^T)  \;\; \forall \, y^T \in \{0,1\}^T \;\; \mbox{and} \;\; \Pi_{\lambda,y} \in N(\xi) \}\,.
\]
This setup was considered by \cite{BonhommeWeidner} under \emph{local} misspecification,  where $N(\xi)$ are Kullback--Leibler neighborhoods $N(\xi) = \big\{ \Pi : K(\Pi \,\|\, \Pi(\,\cdot\,;\xi)) \leq \delta \big\}$ for each $\xi \in \Xi$ with  $\delta \downarrow 0$ as $n \to \infty$, so that worst-case misspecification bias and sampling uncertainty are of the same order asymptotically.\footnote{Note that the emphasis of \cite{BonhommeWeidner} is on estimating posterior average effects whereas we focus on forecasting discrete (e.g. individual-level) outcomes.} We instead treat $\delta > 0$ as fixed, allowing \emph{global} misspecification. $\Box$

\paragraph*{Example 3: Structural breaks.} Three types of breaks can, in principle, occur at the forecast origin $T$ in Example 1: a break in the distribution of the $U_{it}$, a break in the individual effects $\lambda_i$, and a break in $\beta$.
Suppose the econometrician again takes $\Phi_t = \Phi$ for dates $t = 1,\ldots,T$, but allows for the possibility that $\Phi_{T+1} \neq \Phi$. For instance, the econometrician might want to allow for $\Phi_{T+1} \in N$, a neighborhood of $\Phi$. 
Even if $\beta$ and $\Pi_{\lambda,y}$ were known at date $T$, there would still be a set of forecast distributions for $Y_{iT+1}$ corresponding to different $\Phi_{T+1} \in N$. 
Using the above notation, we can redefine $\Theta_0$ as
\[
  \Theta_0 = \big\{ \theta = (\beta,\Pi_{\lambda,y},\Phi_{T+1}) \in \Theta : p(y^T |  \beta, \Pi_{\lambda,y}) = p(y^T)  \;\; \forall \, y^T \in \{0,1\}^T \;\; \mbox{and} \;\; \Phi_{T+1} \in N \big\}\,, 
\]
and replace $\Phi$ in the definition of $\mathbb{P}_\theta(Y=1)$ in (\ref{eq:prob:panel}) with $\Phi_{T+1}$.
Breaks in $\lambda_i$ can be viewed as a location shift of the distribution $\Phi_t$ and are subsumed under breaks in the distribution of $U_{it}$ for suitable choice of $N$. 
Breaks in $\beta$ can be handled by defining 
\[
  \Theta_0 = \big\{ \theta = (\beta,\beta_{T+1},\Pi_{\lambda,y}) \in \Theta : p(y^T |  \beta, \Pi_{\lambda,y}) = p(y^T)  \;\; \forall \, y^T \in \{0,1\}^T \;\; \mbox{and} \;\; | \beta - \beta_{T+1}| \leq \delta \big\}\,, 
\]
and by replacing $\Phi(\beta y_{iT} + \lambda) $ in (\ref{eq:prob:panel}) with $\Phi(\beta_{T+1} y_{iT} + \lambda) $. $\Box$

\paragraph*{Example 4: Semiparametric random effects model for dynamic multinomial choice.} Let
\[
 Y_{it+1} = \arg \max_m \left( U_{it+1}^m + \varepsilon_{it+1}^m \right) \,, \;\;
 U_{it+1}^0 = 0 \,, \;\; 
 U_{it+1}^m = u_{mt+1}(X_{it},Y_i^{t};\phi,\lambda_i,Y_{i0}) \,, \;\; m = 1,\ldots,M \,,
\]
where $\varepsilon_{it} = (\varepsilon_{it}^0,\ldots,\varepsilon_{it}^M)'$ is a vector of utility shocks with $\varepsilon_{it}|X_{it},\lambda_i \sim \Phi_t$ for each $t$, $\Phi_t$ is a potentially time-varying distribution, $\phi$ is a vector of homogeneous parameters, $\lambda_i$ is a vector of heterogeneous parameters, and $X_{it}$ is a vector of exogenous regressors. 
The econometrician observes $Y_i^T = (Y_{i1},\ldots,Y_{iT})'$ and $X_i^T = (X_{i1},\ldots,X_{iT})'$ for $i = 1,\ldots,n$ where $T$ is fixed and $n \to \infty$. To avoid the initial conditions problem, the econometrician specifies a joint distribution $\Pi_{\lambda,y}$ for $(\lambda_i,Y_{i0})$. As with Example 1, identification of model parameters with fixed $T$ and $n \to \infty$ is delicate, especially when parametric assumptions about the $\Phi_t$ and/or $\Pi_{\lambda,y}$ are relaxed; see \cite{HonoreKyriazidou2000}, \cite{Chernozhukovetal2013}, \cite{KhanOuyangTamer2019}, and references therein. Identified sets and forecast probabilities for individual-level outcomes are constructed in a similar manner to Example 1. $\Box$

\paragraph*{Example 5: Counterfactuals in structural models.} Counterfactuals in structural models are also subsumed in our framework when the outcome of interest is discrete, as is often the case for static or dynamic models of discrete choice or discrete games (e.g. firm entry/exit). In the above notation, $\theta$ are the structural parameters estimated under one policy regime, the econometrician wishes to predict a variable $Y$, and the model implies that $Y$ is distributed according to $\PR_\theta$ under the intervention. Partial identification can arise on two fronts. First, the model may itself be specified flexibly, leading to a non-singleton identified set $\Theta_0$ of structural parameters. Second, the potential for multiple equilibria and lack of knowledge about an equilibrium selection mechanism under the intervention may lead to a nontrivial set of forecast distributions. This can be subsumed by treating the selection mechanism itself as part of $\theta$, with $\Theta_0$ indexing distributions in a manner that is robust to the type of selection mechanism (see, e.g.,  \cite{Jia2008}, \cite{CilibertoTamer}, and \cite{Grieco2014}). $\Box$

\paragraph*{Example 6: Treatment assignment.} The problem of making discrete forecasts has a very similar structure to a treatment assignment problem, e.g., determining whether an individual should be vaccinated. Suppose the econometrician has access to a sample of observational data of size $n$ and observes for an individual $i$ the triplet $X_i = \big(D_i, W_{0i}(1-D_i), W_{1i} D_i \big)$, where $D_i$ is a treatment indicator, and $W_0$ and $W_1$ are the potential outcomes (``welfare'') of the untreated and the treated individuals. As the sample size tends to infinity, the econometrician is able to estimate the reduced form expectations $P= \big( \mathbb{E}[D], \mathbb{E}[W_0(1-D)], \mathbb{E}[W_1D] \big)$. 

Let $\mathbb{P}_\theta$ denote the joint distribution of $(D,W_{0},W_{1})$. To keep the example simple, we assume that the potential outcomes are binary and take values $\{a_{0}, a_{1}\}$ and $\{b_{0}, b_{1}\}$. Thus, the distribution $F_\theta$ is discrete with support on 
$\{0,1\} \times \{a_{0}, a_{1}\} \times \{b_{0}, b_{1}\}$. The support points of the potential outcome distribution can be easily point-identified based on two untreated (and two treated) individuals with different outcomes. Thus, we exclude the support points from the definition of $\theta$ and $P$. Using the notation that $\theta_{ijk} = \mathbb{P}\big( D=i, W_0=a_{j}, W_1=b_{k} \big)$, the identified set $\Theta_0(P)$ is defined by the following set of linear restrictions: 
\begin{eqnarray*}
	\quad 0 \le \theta_{ijk} \le 1, && \sum_{i=0,1} \sum_{j=0,1} \sum_{k=0,1} \theta_{ijk} = 1, \quad \mathbb{E}[D]  = \sum_{j=0,1} \sum_{k=0,1} \theta_{1jk}, \\
	\mathbb{E}[W_0(1-D)] &=& \sum_{k=0,1} a_0 \theta_{00k} + a_1\theta_{01k}, \quad
	\mathbb{E}[W_1 D] = \sum_{j=0,1} b_0 \theta_{1j0} + b_1\theta_{1j1},
\end{eqnarray*}
where $\theta$ stacks the $\theta_{ijk}$ probabilities.

Define the indicator variable $Y=\mathbb{I}[W_1 \ge W_0]$ which measures whether the treatment effect is (weakly) positive or not. From a policy maker's perspective the treatment effect for an individual not included in the initial trial is uncertain and the treatment decision $d \in \{0,1\}$ can be viewed as a forecast of $Y \in \{ 0,1\}$ with the understanding that the individual should be treated if the point forecast of $Y$ is one and not treated otherwise. 

\cite{Dehejia2005} analyzed this problem in a decision-theoretic framework under point identification with binary treatments. However, the example highlights the well-known result that the distribution of welfare rankings can be partially identified (see, e.g., \cite{Manski1996,Manski2000} and \cite{HeckmanSmithClements1997}).
\cite{Manski2000,Manski2002,Manski2004,Manski2007treatment} used a decision-theoretic framework to analyze optimal treatment in a planning problem under partial identification and advocated minimax and minimax regret approaches.\footnote{\citeauthor{Manski2002} focuses on population welfare objective whereas here we focus on individual-level outcomes. See also \cite{ManskiTetenov2007}, \cite{HiranoPorter2009}, \cite{Tetenov2012}, and \cite{KitagawaTetenov2018} amongst others, for the analysis of treatment rules under social welfare objectives.} $\Box$

\subsection{Objectives}

We derive two types of forecasts that deal with uncertainty about the forecast distribution. The first are \emph{robust forecasts} which seek to robustify the forecast with respect to $Y$ being distributed according to any distribution in the class $\{\PR_\theta : \theta \in \Theta_0\}$. We use minimax and minimax regret criteria as our notion of robustness. The second are \emph{efficient robust forecasts} which deal with the additional problem of having to first estimate $\Theta_0$ from data. The exposition in the remainder of this section and in Section~\ref{sec:binary} focuses on binary outcomes. We will consider extensions to multinomial outcomes in Section~\ref{sec:multinomial}.

\noindent {\bf Known $\Theta_0$.} Given a decision space $\mathcal D \subseteq [0,1]$, a loss function $\ell : \{0,1\} \times \mathcal{D} \to \mathbb{R}_+$ and $\theta \in \Theta_0$, the \emph{risk} of $ d \in \mathcal D$ under the forecast distribution $\PR_\theta$ is 
\[
 \mathbb E_\theta[ \ell(Y,d) ] = \ell(0,d) \, {\textstyle \PR_\theta} ( Y = 0 ) + \ell(1, d) \, {\textstyle \PR_\theta}( Y = 1 ) \,.
\]
The expectation $\mathbb E_\theta$ and forecast probabilities $\PR_\theta$ may condition on covariates observed by the econometrician when making the forecast; we have suppressed this dependence to simplify notation. The \emph{$\theta$-optimal forecast}, denoted  $d_\theta^*$, minimizes risk under $\PR_\theta$:
\[
 \mathbb E_\theta[ \ell(Y, d_\theta^* ) ] = \inf_{d \in \mathcal D} \mathbb E_\theta[ \ell(Y, d ) ] \,.
\]
A \emph{minimax} forecast solves
\be
 \inf_{d \in \mathcal{D}} \; \sup_{\theta \in \Theta_0} \; \mathbb{E}_{\theta} [ \ell(Y, d) ] \,.
 \label{eq:minimax}
\ee
The \emph{regret} of a forecast is its risk in excess of the risk of the $\theta$-optimal forecast. A \emph{minimax regret} forecast  solves
\be
 \inf_{d \in \mathcal{D}} \; \sup_{\theta \in \Theta_0} \Big( \mathbb{E}_{\theta} [ \ell(Y, d) ] - \mathbb{E}_\theta [ \ell(Y,d_\theta^*) ] \Big) \,.
 \label{eq:minimax.regret}
\ee
Robust forecasts are derived under these criteria in Sections \ref{subsec:binary.robust} and \ref{subsec:multinomial.robust} for binary and multinomial outcomes, respectively.

\noindent {\bf Estimated $\Theta_0$.} In many scenarios the researcher might not know $\Theta_0$ and will therefore need to first estimate the set (or features of $\Theta_0$ germane to the forecasting problem) from a sample of data of size $n$.\footnote{The known-$\Theta_0$ case can be viewed as the limit as $n \to \infty$.} Our \emph{efficient robust forecasts} deal with the additional uncertainty that arises from not knowing $\Theta_0$. Here ``efficient robust'' forecasts are those for which the maximum risk or regret is as close as possible to the maximum risk or regret of an oracle forecast with known $\Theta_0$ (see Section \ref{sec:uncertainty}). This efficiency notion recognizes that uncertainty about the true identity of the forecast distribution among the class $\{\PR_\theta : \theta \in \Theta_0\}$ is the dominant consideration asymptotically, but that estimation error may nevertheless have a material impact on the forecast in any finite sample. 

To make the analysis of efficient robust forecasts tractable but reasonably general, we assume that the model for the data, say $X_n$, and outcome $Y$ is indexed by $\theta$ and a $k$-dimensional vector of reduced-form parameters $P \in \mathcal P \subseteq \mathbb R^k$. The parameters $\theta$ and $P$ are linked by a known mapping $P \mapsto \Theta_0(P)$. For partially identified forecasting models,  $\Theta_0(P)$ denotes the identified set if $P$ was the true reduced form parameter value.  We assume that $X_n$ and $Y$ are related to $\theta$ and $P$ in the following manner:
\begin{align}
 {\textstyle \PR_\theta}(Y = y| X_n, P) & = {\textstyle \PR_\theta}(Y = y) \,, \label{eq:pr:forecast}\\
 X_n | \theta, P & \sim F_{n,P} \,. \label{eq:pr:data}
\end{align}
Condition (\ref{eq:pr:forecast}) implies that $Y$ does not depend on the data $X_n$ or $P$ beyond dependence through $\theta$. Condition (\ref{eq:pr:data}) says that the distribution of the data is fully summarized by $P$, which is standard for estimation and inference under partial identification; see, e.g., \cite{MoonSchorfheide2012}. Examples 1-6 can be shown to fit this framework. Here we just discuss Example 1 for brevity.

\paragraph*{Example 1 (continued).} In this example, the econometrician observes data $X_n = (Y_i^T)_{i=1}^n$. The data are used to estimate the vector $P = (p(y^T))_{y^T \in \{0,1\}^T}$, which collects the probabilities of observing sequences $y^T \in \{0,1\}^T$. The mapping $\Theta_0(P)$ from $P$ to $\theta = (\beta,\Pi_{\lambda,y})$ is defined in (\ref{eq:idset:panel}). Given a history $y^T$, the distribution of $Y \equiv Y_{iT+1}$ is fully summarized by $\theta$; see (\ref{eq:prob:panel}). Moreover, the distribution of the data is itself multinomial over the different realizations of $y^T$ with probabilities $P$. Thus $F_{n,P}$ is the product of $n$ multinomial distributions with probabilities $P$. In this model $y_0$ and $\lambda$ are unit specific and the forecasts are constructed based on the posterior distribution of $(y_0,\lambda)$ conditional on $\theta = (\beta,\Pi_{\lambda,y})$, see (\ref{eq:prob:panel}). $\Box$

\section{Binary Forecasts}
\label{sec:binary}

In this section we consider the binary forecasting problem. First, in Section \ref{subsec:binary.point} we review several common binary loss functions and their corresponding $\theta$-optimal forecasts. In Section~\ref{subsec:binary.robust} we derive forecasts that are robust to uncertainty about the forecast distribution and in Section \ref{subsec:binary.efficient} we construct efficient robust forecasts for the case of an estimated $\Theta_0$. The forecasts are summarized in Table \ref{tab:summary.binary} below. Section \ref{subsec:example} presents an application to semiparametric panel data models for dynamic binary choice. 

\subsection{$\theta$-Optimal Forecasts}
\label{subsec:binary.point}

We consider three loss functions  to evaluate forecast accuracy: binary (or classification) loss,  quadratic loss, and log predictive probability score.

\paragraph*{Binary (or Classification) Loss.} The binary loss function for $\mathcal D = \{0,1\}$ is
\be
    \ell_b \big( y,d \big) = a_{10} \mathbb{I}[ y=1, \, d=0 ] + a_{01} \mathbb{I}[ y=0, \, d=1 ] \,,
    \label{eq:loss.binary}
\ee
where $a_{10}, a_{01} \geq 0$. A special case with $a_{10} = a_{01}$ is classification loss $\ell_b(y,d) = \mathbb{I}[ y \neq d ]$.\footnote{When $a_{10} = a_{01}$, it is without loss of generality to normalize their common value to $1$.} The $\theta$-optimal forecast is
\be
d_{b, \theta}^* = \mathbb{I} \left[ {\textstyle \PR_\theta}(Y = 1) \ge {\textstyle \frac{a_{01}}{a_{01}+a_{10}}} \right] \label{eq:forecast.bayes}
\ee
and its risk is
\be
 a_{10} {\textstyle \PR_\theta}(Y = 1) \, \wedge \, a_{01} {\textstyle \PR_\theta}(Y = 0) \,, \label{eq:risk.bayes}
\ee
where $a \wedge b = \min\{a,b\}$. The $\theta$-optimal forecast is not unique when $\PR_\theta(Y = 1) = \frac{a_{01}}{a_{01}+a_{10}}$. In this case, however, all $\theta$-optimal forecasts differ only in their handling of ties and have the same risk.

\paragraph*{Quadratic Loss.} The quadratic loss for $d \in \mathcal D = [0,1]$ is 
\be
    \ell_q \big( y,d \big) = \big( y - d \big)^2 \,.
    \label{eq:loss.quadratic}
\ee
 With ${\cal D} = [0,1]$ the $\theta$-optimal forecast is the mean of $Y$ under the forecast distribution:
\be
    d_{q,\theta}^* = \mathbb{E}_\theta[ Y ] ={\textstyle \PR_\theta}(Y = 1) \,. \label{eq:forecast.bayes.quadratic}
\ee

\paragraph*{\bf Log Loss.}  Here the loss function for $\mathcal D = [0,1]$ is
\begin{equation}
   \ell_p \big( y,d \big) 
    = - \mathbb{I}[ y=1 ] \log d - \mathbb{I}[ y=0 ] \log (1-d) \,.
    \label{eq:loss.log}
\end{equation}
The $\theta$-optimal forecast is also the mean:
\be
d_{p,\theta}^* = {\textstyle \PR_\theta}(Y = 1) \,. \label{eq:forecast.bayes.log}
\ee
Although the $\theta$-optimal forecasts under quadratic loss and log loss are the same, their risks are different: the risk under quadratic loss is the variance of the forecast distribution,  whereas the risk under log loss is the entropy of the forecast distribution.

\subsection{Robust Forecasts}
\label{subsec:binary.robust}

We now relax the assumption that the forecast distribution $\PR_\theta$ is known and derive forecasts that are robust with respect to $\PR_\theta$ being any member of the set of forecast distributions $\{\PR_\theta : \theta \in \Theta_0\}$. Note, however, that in this section we treat $\Theta_0$ as known. 

The minimax and minimax regret forecasts will depend on the lower and upper values of the forecast probabilities as $\theta$ varies over $\Theta_0$:
\begin{align}
 p_L & := \inf_{\theta \in \Theta_0} \; {\textstyle \PR_\theta}(Y = 1) \,, \label{eq:p.lower}\mbox{ and}  \\
 p_U & := \sup_{\theta \in \Theta_0} \; {\textstyle \PR_\theta}(Y = 1) \,. \label{eq:p.upper} 
\end{align}
Although our characterizations are general, the challenge in implementing robust forecasts is to solve these extremum problems which will typically require exploiting some additional structure. 
Appendix \ref{appsec:computation} shows how duality methods may be used to simplify computation in a broad class of models that includes, but is not limited to, semiparametric dynamic binary choice models.

\subsubsection{Minimax Forecasts}

\paragraph*{Binary (or Classification) Loss.} We first derive the forecast that solves (\ref{eq:minimax}) for the binary loss function $\ell_b$ from (\ref{eq:loss.binary}) and decision space $\mathcal D = \{0,1\}$. The maximum risk of $d\in \{0,1\}$ is
\begin{equation}\label{eq:r.d.minimax}
	\sup_{\theta \in \Theta_0} \; \mathbb{E}_\theta [ \ell_b(Y, d)  ] 
	= \left[ \begin{array}{ll}
	a_{01} - a_{01} p_L & \mbox{if $d = 1$}\,, \\
	a_{10} \, p_U  & \mbox{if $d = 0$}\,.
	\end{array} \right.
\end{equation}
The minimax forecast for binary (or classification) loss is therefore
\be
	d_{b,mm} = \mathbb{I} \left[ a_{01} \le a_{01} p_L  +  a_{10} p_U  \right] \label{eq:d.minimax}
\ee
and the minimax risk is 
\[
	{\cal R}^*_{b,mm} =\left(   a_{01}- a_{01}p_L \right)  \, \wedge \, \left( a_{10} p_U \right) \,. 
\]
The minimax binary forecast is not unique when $a_{01} = a_{01} p_L  +  a_{10} p_U$. In this case, each minimax forecast differs only in its handling of ties and has the same maximum risk.

\paragraph*{Quadratic Loss.} We now derive the forecast that solves (\ref{eq:minimax}) for the quadratic loss function $\ell_q$ from (\ref{eq:loss.quadratic}) and decision space $\mathcal D = [0,1]$. The maximum risk of $d \in [0,1]$ is
\begin{equation} \label{eq:r.q.minimax}
  \sup_{\theta \in \Theta_0} \; \mathbb{E}_\theta [ \ell_q(Y, d)  ] 
  = \left[ \begin{array}{ll}
  p_U(1-2d) + d^2 & \mbox{if $d < \frac{1}{2}$}\,, \\
  p_L(1-2d) + d^2 & \mbox{if $d > \frac{1}{2}$}\,, \\
  \frac{1}{4} & \mbox{if $d = \frac{1}{2}$} \,.
  \end{array} \right.
\end{equation}
The minimax forecast is therefore
\be
 d_{q,mm} = \left[ \begin{array}{ll}
  p_U & \mbox{if $p_U \leq \frac{1}{2}$}\,, \\
  p_L & \mbox{if $p_L \geq \frac{1}{2}$}\,, \\
  \frac{1}{2} & \mbox{otherwise}\,, 
  \end{array} \right. \label{eq:minimax.quadratic}
\ee
and the minimax risk is 
\[
{\cal R}^*_{q,mm} = \left[ \begin{array}{ll}
  p_U(1-p_U)  & \mbox{if $p_U \leq \frac{1}{2}$}\,, \\
  p_L(1-p_L)  & \mbox{if $p_L \geq \frac{1}{2}$}\,, \\
  \frac{1}{4} & \mbox{otherwise} \,.
  \end{array} \right. 
\]

\paragraph*{Log Loss.} The minimax forecast $d_{q,mm}$ is also minimax for the log loss function $\ell_p$ from (\ref{eq:loss.log}) and decision space $\mathcal D = [0,1]$; see Appendix \ref{appsec:binary}.

\subsubsection{Minimax Regret Forecasts}

\paragraph*{Binary (or Classification) Loss.}
We first derive the forecast that solves (\ref{eq:minimax.regret}) 
for the binary loss function from (\ref{eq:loss.binary}) and decision space  ${\cal D} = \{0,1\}$. In view of (\ref{eq:risk.bayes}), the inner maximization problem in (\ref{eq:minimax.regret}) becomes
\begin{equation} \label{eq:regret.binary.pointwise}
  \sup_{\theta \in \Theta_0} \bigg( \mathbb{E}_\theta [ \ell_b(Y,d)  ] - a_{10} {\textstyle \PR_\theta}(Y = 1) \, \wedge \, a_{01} {\textstyle \PR_\theta}(Y = 0) \bigg) 
  = \left[ \begin{array}{ll}
 \left(a_{01} - (a_{01} + a_{10}) \, p_L \right)_+ & \mbox{if $d = 1$,} \\
 \left( (a_{01} + a_{10}) \, p_U - a_{01} \right)_+ & \mbox{if $d = 0$,}
 \end{array} \right.
\end{equation}
where $a_+ = \max\{a,0\}$. 
Therefore, the minimax regret forecast is
\be
 d_{b,mmr}
 = \mathbb{I} \left[  \left({\textstyle \frac{a_{01}}{a_{01} + a_{10}}} - p_L \right)_+ \leq \left( p_U- {\textstyle \frac{a_{01}}{a_{01} + a_{10}}} \right)_+ \right] \label{eq:forecast.mmr}  
\ee
and its maximum regret is
\be
 {\cal R}^*_{b,mmr} = \left(a_{01} - (a_{01} + a_{10})  p_L \right)_+  \wedge \left( (a_{01} + a_{10})  p_U - a_{01} \right)_+\,. \nonumber
\ee
The minimax and minimax regret binary forecasts for classification loss (i.e., $a_{01} = a_{10}$) are the same; see Appendix \ref{appsec:binary}.
As with other forecasts for $\mathcal D = \{0,1\}$, the minimax regret forecast is not necessarily unique. Non-uniqueness arises whenever $( \frac{a_{01}}{a_{01} + a_{10}} - p_L )_+ = ( p_U - { \frac{a_{01}}{a_{01} + a_{10}}} )_+$. If so, each minimax regret forecast has the same maximum regret and differs only in its handling of ties.

\paragraph*{Quadratic Loss.} We now derive the forecast that solves (\ref{eq:minimax.regret}) for the quadratic loss function $\ell_q$ from (\ref{eq:loss.quadratic}) and decision space  ${\cal D} = [0,1]$. By convexity,  the maximum regret of $d \in [0,1]$ is
\begin{equation} \label{eq:r.q.minimax.regret}
  \sup_{\theta \in \Theta_0} \; \mathbb{E}_{\theta} [ \ell_q(Y,d) ] 
  = \left[ \begin{array}{ll}
  (p_U - d)^2 & \mbox{if $d \leq \frac{p_L + p_U}{2}$}\,, \\
  (p_L - d)^2 & \mbox{if $d \geq \frac{p_L + p_U}{2}$}\,.
  \end{array} \right.
\end{equation}
The minimax regret forecast for quadratic loss is therefore the midpoint of the extreme forecast probabilities:
\[
 d_{q,mmr} = \frac{p_L + p_U}{2} 
\]
and the minimax regret is
\[
 {\cal R}^*_{q,mmr} =\left( \frac{p_U - p_L}{2} \right)^2\,. \nonumber
\]

\paragraph*{Log Loss.} Finally, we derive the forecast that solves (\ref{eq:minimax.regret}) 
for the log loss function from (\ref{eq:loss.log}) and decision space  ${\cal D} = [0,1]$. The minimax forecast is the $\theta$-optimal forecast if $p_L = p_U$. Suppose $p_L < p_U$. The regret of any $d \in [0,1]$ is the Kullback--Leibler (KL) divergence
\[
 {\textstyle \PR_\theta} ( Y = 1 ) \log \left(\frac{{\textstyle \PR_\theta} ( Y = 1 )}{d}\right) 
 + {\textstyle \PR_\theta} ( Y = 0 ) \log \left(\frac{{\textstyle \PR_\theta} ( Y = 0 )}{1-d}\right)\,.
\]
By convexity, the maximum regret must be obtained at either $p_L$ or $p_U$:
\begin{equation} \label{eq:r.p.minimax.regret}
 \sup_{\theta \in \Theta_0} \Big( \mathbb E_\theta [ \ell_p(Y , d) ] -\mathbb E_\theta [ \ell_p(Y , d^*_{p,\theta}) ]  \Big) 
 = \max_{p \in \{p_L, p_U\}} \left(p \log \left( \frac{p}{d} \right) + (1-p) \log \left( \frac{1-p}{1-d} \right) \right) \,.
\end{equation}
When $p = p_L$, term in parentheses is increasing for $d \geq p_L$ and when $p = p_U$ the term in parentheses is decreasing for $d \leq p_L$. The maximum regret is therefore minimized by choosing $d$ to equate the two values. The minimax regret forecast under the log-scoring rule $d_{p,mmr}$ uniquely solves
\begin{equation} \label{eq:r.p.minimax.regret.forecast}
 \log \left( \frac{ d_{p,mmr} }{1 - d_{p,mmr}} \right)  = \frac{h(p_U) - h(p_L)}{p_U - p_L} \,,
\end{equation}
where $h(p) = - p \log p - (1-p) \log (1-p)$ is the entropy of a Bernoulli distribution with success probability $p$. The minimax regret forecast is therefore the value $p$ that minimizes the maximum KL divergence between the Bernoulli distribution and the forecast distribution $\PR_\theta$ over $\theta \in \Theta_0$.

\subsection{Efficient Robust Forecasts}
\label{subsec:binary.efficient}

We now dispense with the assumption that $\Theta_0$ is known. We consider the setting described at the end of Section \ref{sec:setup} in which the econometrician wishes to forecast $Y$ having observed data $X_n$. In order to develop an optimality theory, we evaluate forecasts by their {\em integrated maximum risk}, defined as
\be
 {\cal B}_{mm}^n(d_n;\pi) = \int \int \sup_{\theta \in \Theta_0(P)} \; \mathbb{E}_\theta[\ell(Y,d(X_n))] \, \mathrm d \Pi_n(P|X_n) \, \mathrm dF_n(X_n) \,.
\ee
We will use a similar definition of {\em integrated maximum regret}, denoted by ${\cal B}_{mmr}(d_n;\pi)$.
Here, $\pi$ is a prior distribution for $P$ with support ${\cal P}$, $\Pi_n(P|X_n)$ is the posterior distribution of $P$ after having observed the data $X_n$, and $F_n(X_n)$ is the marginal distribution of the data $X_n$. As in Section~\ref{subsec:binary.robust}, conditional on $P \in \mathcal P$, we consider the maximum risk or regret over $\Theta_0(P)$. This is the maximum risk faced by the forecaster if $P$ were the true reduced-form parameter. We then average across the joint distribution of $(P,X_n)$ to obtain the integrated maximum risk. The factorization of the joint distribution in the conditional distribution $\Pi_n(P|X_n)$ and the marginal distribution $F_n(X_n)$ highlights the well-known result that the integrated (maximum) risk is minimized by choosing the forecast that minimizes the {\em posterior (maximum) risk} for each realization $X_n$. We denote the optimal forecasts under the risk and regret objectives by $d_{b,mm}$ and $d_{b,mmr}$, respectively. We refer to them as {\em Bayesian robust forecasts} and derive explicit formulas in the remainder of this subsection.

\begin{remark} \normalfont
While it may seem asymmetric to use an integrated (or Bayes) criterion to deal with $P$ but minimax risk or regret to deal with $\theta$ conditional on $P$, there are two reasons for doing so. The first is from a robust Bayes perspective on the forecasting problem under partial identification. If the true value $P_0$ is identified and consistently estimable, then the posterior for $P$ will not depend on $\pi$ asymptotically. In contrast, the data do not update prior beliefs about $\theta$ over the identified set $\Theta_0$. Therefore, the posterior for $\theta$ in a Bayesian implementation will depend on the prior asymptotically (see, e.g., \cite{MoonSchorfheide2012}). Our use of minimax criteria to deal with partial identification of $\theta$ can be motivated from robustness considerations with respect to the prior on $\theta$ \citep{Kitagawa2012,GiacominiKitagawa2018}. The second is practical: average criteria  lead to tractable, easily implementable forecasts. $\Box$
\end{remark}

\subsubsection{Minimax Forecasts}

We now make dependence of $p_L$ and $p_U$ on the reduced-form parameter explicit by writing 
\begin{align*}
 p_L(P) & := \inf_{\theta \in \Theta_0(P)} \mathbb P_\theta(Y = 1) \,,  &
 p_U(P) & := \sup_{\theta \in \Theta_0(P)} \mathbb P_\theta(Y = 1) \,.
\end{align*} 

\paragraph*{Binary (or Classification) Loss.}
In view of (\ref{eq:r.d.minimax}), the posterior average maximum risk of choosing $d \in \mathcal \{0,1\}$ is 
\begin{equation}\label{eq:r.d.minimax}
	\int \sup_{\theta \in \Theta_0(P)} \; \mathbb{E}_\theta [ \ell_b(Y, d)  ]  \, \mathrm d \Pi_n(P|X_n)
	= \left[ \begin{array}{ll}
	a_{01}(1 - \int p_L(P) \, \mathrm d \Pi_n(P|X_n)) & \mbox{if $d = 1$}\,, \\
	a_{10}  \int p_U(P) \, \mathrm d \Pi_n(P|X_n)  & \mbox{if $d = 0$}\,.
	\end{array} \right.
\end{equation}
The Bayesian robust forecast is therefore
\begin{equation} \label{eq:efficient.robust.binary}
 d_{b,mm}(X_n) = \mathbb{I} \left[ a_{01} \le \int \left( a_{01} p_L(P)  +  a_{10} p_U(P) \right) \, \mathrm d \Pi_n(P | X_n)  \right] \,.
\end{equation}

\paragraph*{Quadratic Loss.} In view of (\ref{eq:r.q.minimax}), the posterior average maximum risk of choosing $d \in [0,1]$ is
\[
  \int \sup_{\theta \in \Theta_0(P)} \; \mathbb{E}_\theta [ \ell_q(Y, d)  ] \, \mathrm d \Pi_n(P|X_n)
  = \left[ \begin{array}{ll}
  (\int p_U(P) \, \mathrm d \Pi_n(P|X_n) ) (1-2d) + d^2 & \mbox{if $d < \frac{1}{2}$}\,, \\
  (\int p_L(P) \, \mathrm d \Pi_n(P|X_n) )(1-2d) + d^2 & \mbox{if $d > \frac{1}{2}$}\,, \\
  \frac{1}{4} & \mbox{if $d = \frac{1}{2}$} \,.
  \end{array} \right.
\]
The Bayesian robust forecast is therefore 
\be
 d_{q,mm}(X_n) = \left[ \begin{array}{ll}
  \int p_U(P) \, \mathrm d \Pi_n(P|X_n)  & \mbox{if $\int p_U(P) \, \mathrm d \Pi_n(P|X_n)  \leq \frac{1}{2}$}\,, \\
  \int p_L(P) \, \mathrm d \Pi_n(P|X_n)  & \mbox{if $\int p_L(P) \, \mathrm d \Pi_n(P|X_n) \geq \frac{1}{2}$}\,, \\
  \frac{1}{2} & \mbox{otherwise} 
  \end{array} \right. 
\ee

\paragraph*{Log Loss.} The forecast $d_{q,mm}(X_n)$ is also the Bayesian robust forecast for the log loss function $\ell_p$ from (\ref{eq:loss.log}) and decision space $\mathcal D = [0,1]$; see Appendix \ref{appsec:binary}.

\subsubsection{Minimax Regret Forecasts}

\paragraph*{Binary Loss.} In view of (\ref{eq:regret.binary.pointwise}), the posterior average maximum regret for $d \in \{0,1\}$ is
\begin{align*}
  & \int \sup_{\theta \in \Theta_0(P)} \bigg( \mathbb{E}_\theta [ \ell_b(Y,d)  ] - a_{10} {\textstyle \PR_\theta}(Y = 1) \, \wedge \, a_{01} {\textstyle \PR_\theta}(Y = 0) \bigg) \, \mathrm d \Pi_n(P|X_n) \\
 & = \left[ \begin{array}{ll}
 \left(a_{01} - (a_{01} + a_{10}) (\int p_L(P) \, \mathrm d \Pi_n(P|X_n))  \right)_+ & \mbox{if $d = 1$,} \\
 \left( (a_{01} + a_{10}) \, (\int p_U(P) \, \mathrm d \Pi_n(P|X_n)) - a_{01} \right)_+ & \mbox{if $d = 0$.}
 \end{array} \right.
\end{align*}
The Bayesian robust forecast is therefore
\begin{equation} \label{eq:efficient.robust.binary.regret}
 d_{b,mmr}(X_n)
 = \mathbb{I} \left[   \int  \left( {\textstyle \frac{a_{01}}{a_{01} + a_{10}}} - p_L(P)   \right)_+ \! \mathrm d \Pi_n(P | X_n) \leq \int  \left( p_U(P)   - {\textstyle \frac{a_{01}}{a_{01} + a_{10}}} \right)_+ \! \mathrm d \Pi_n(P | X_n) \right] \,.
\end{equation}

\paragraph*{Quadratic Loss.} In view of (\ref{eq:r.q.minimax.regret}), the posterior average maximum risk of choosing $d \in [0,1]$ is
\[
  \int (p_U(P) - d)^2 \mathbb I \left[ d < \frac{p_L(P) + p_U(P)}{2} \right] + (p_L(P) - d)^2  \mathbb I \left[ d \geq \frac{p_L(P) + p_U(P)}{2}\right] \, \mathrm d \Pi_n(P|X_n) \,.
\]
The Bayesian robust forecast $d_{q,mmr}(X_n)$ is the minimizing value for $d \in [0,1]$, which can be computed numerically (e.g. by replacing the integral with the average across a large number of draws from the posterior then minimizing with respect to $d$).

\paragraph*{Log Loss.} In view of (\ref{eq:r.p.minimax.regret}), the posterior average maximum risk of choosing $d \in [0,1]$ is 
\[
 \int \max_{p \in \{p_L(P), p_U(P)\}} \left(p \log \left( \frac{p}{d} \right) + (1-p) \log \left( \frac{1-p}{1-d} \right) \right)  \, \mathrm d \Pi_n(P |X_n)
\]
The Bayesian robust forecast $d_{p,mmr}(X_n)$ is the minimizing value for $d \in [0,1]$, which again can be computed numerically.

\subsubsection{Summary} Table~\ref{tab:summary.binary} summarizes the $\theta$-optimal, the robust, and the Bayesian robust decisions under binary, quadratic, and log loss functions.

\begin{table}[t!]
	\begin{center}
		\begin{tabular}{ll}
			\hline \hline
			\multicolumn{2}{c}{\bf $\theta$-optimal:} \\
			\multicolumn{2}{c}{$d^*_\theta := \argmin_{d \in \mathcal{D}} 
				\mathbb{E}_\theta[\ell(Y,d)]$} \\[4pt]
			\hline
			Binary $\ell_b$ & $d_{b, \theta}^* = \mathbb{I} [ (a_{01} + a_{10}){\textstyle \PR_\theta}(Y = 1) \ge a_{01} ]$ \\
			Quadratic $\ell_q$ & $d_{q,\theta}^* = {\textstyle \PR_\theta}(Y = 1)$ \\
			Log $\ell_p$ & $d_{p,\theta}^* = d_{q,\theta}^* $ \\ \hline
			\multicolumn{2}{c}{\bf Robust (minimax):}\\
			\multicolumn{2}{c}{$d_{mm} := \argmin_{d \in \mathcal{D}} \left( \sup_{\theta \in 
					\Theta_0} \mathbb{E}_\theta[\ell(Y,d)] \right) $} \\[4pt] \hline
			Binary  $\ell_b$ & $d_{b,mm} = \mathbb{I} \left[ a_{01} p_L  +  a_{10} p_U  \geq a_{01}  \right]$ \\[2pt]
			Quadratic $\ell_q$ & $
			d_{q,mm} = \left[ \begin{array}{ll}
			p_U & \mbox{if $p_U \leq \frac{1}{2}$}\,, \\[2pt]
			p_L & \mbox{if $p_L \geq \frac{1}{2}$}\,, \\[2pt]
			\frac{1}{2} & \mbox{otherwise}\,, 
			\end{array} \right. \label{eq:minimax.quadratic}
			$ \\
			Log  $\ell_p$ & $d_{p,mm} = d_{q,mm} $ \\ \hline
			\multicolumn{2}{c}{\bf Robust (minimax regret):}\\
			\multicolumn{2}{c}{$d_{mmr} := \argmin_{d \in \mathcal{D}} \left( \sup_{\theta 
					\in 
					\Theta_0} \mathbb{E}_\theta[\ell(Y,d)] - 
				\mathbb{E}_\theta[\ell(Y,d^*_\theta)]  \right) $}  \\[4pt] \hline
			Binary  $\ell_b$ & $ d_{b,mmr}
			= \mathbb{I} [  ({\textstyle \frac{a_{01}}{a_{01} + a_{10}}} - p_L )_+ \leq ( p_U- {\textstyle \frac{a_{01}}{a_{01} + a_{10}}} )_+ ]$ \\
			Quadratic  $\ell_q$ & $d_{q,mmr} = \frac{1}{2}(p_L + p_U)$ \\
			Log  $\ell_p$ & $d_{q,mmr} = $ see equation (\ref{eq:r.p.minimax.regret.forecast}) \\ \hline
			\multicolumn{2}{c}{\bf Bayesian robust (minimax):}  \\
			\multicolumn{2}{c}{$d_{mm}(X_n) := \argmin_{d \in \mathcal{D}} \int \sup_{\theta \in 
					\Theta_0(P)} \mathbb{E}_\theta[\ell(Y,d)] \, \mathrm d \Pi_n $} \\[4pt] \hline
			Binary  $\ell_b$ & $d_{b,mm}(X_n) = \mathbb{I} [ a_{01} \le \int \left( a_{01} p_L(P)  +  a_{10} p_U(P) \right) \, \mathrm d \Pi_n  ]$ \\[2pt]
			Quadratic  $\ell_q$ & $d_{q,mm}(X_n) = \left[ \begin{array}{ll}
			\int p_U(P) \, \mathrm d \Pi_n  & \mbox{if $\int p_U(P) \, \mathrm d \Pi_n  \leq \frac{1}{2}$}\,, \\[2pt]
			\int p_L(P) \, \mathrm d \Pi_n  & \mbox{if $\int p_L(P) \, \mathrm d \Pi_n \geq \frac{1}{2}$}\,, \\[2pt]
			\frac{1}{2} & \mbox{otherwise} 
			\end{array} \right.  $ \\
			Log  $\ell_p$ & $d_{p,mm}(X_n) = d_{q,mm}(X_n)$ \\ \hline
			\multicolumn{2}{c}{\bf Bayesian robust (minimax regret)}  \\
			\multicolumn{2}{c}{$d_{mmr}(X_n) := \argmin_{d \in \mathcal{D}} \int (
				\sup_{\theta 
					\in 
					\Theta_0(P)}  \mathbb{E}_\theta[\ell(Y,d)] - 
				\mathbb{E}_\theta[\ell(Y,d^*_\theta)] ) \, \mathrm d \Pi_n  $} \\[4pt]
			\hline
			Binary $\ell_b$ & $d_{b,mmr}(X_n)
			= \mathbb{I} [   \int  ( {\textstyle \frac{a_{01}}{a_{01} + a_{10}}} - p_L(P)   )_+  \mathrm d \Pi_n \leq \int  ( p_U(P)   - {\textstyle \frac{a_{01}}{a_{01} + a_{10}}} )_+  \mathrm d \Pi_n ]$ \\
			Quadratic  $\ell_q$ & $d_{q,mmr}(X_n) = \arg \min_{d \in [0,1]} 
			\int (p_U(P) - d)^2 \mathbb I [ d < \frac{p_L(P) + p_U(P)}{2} ] \, \mathrm d \Pi_n $\\
			& \hspace*{2.3cm} $+ \int(p_L(P) - d)^2  \mathbb I [ d \geq \frac{p_L(P) + p_U(P)}{2}] \, \mathrm d \Pi_n
			$ \\
			Log  $\ell_p$ & $d_{p,mmr}(X_n) = \arg \min_{d \in [0,1]} 
			\int \max_{p \in \{p_L(P), p_U(P)\}} (p \log ( \frac{p}{d} ) + (1-p) \log ( \frac{1-p}{1-d} ) )  \, \mathrm d \Pi_n
			$ \\ \hline
		\end{tabular}

\vspace*{0.5cm}

\parbox{14cm}{\caption{ \label{tab:summary.binary} Summary of binary forecasts for binary loss $\ell_b$ from (\ref{eq:loss.binary}), quadratic loss $\ell_q$ from (\ref{eq:loss.quadratic}), and log loss $\ell_p$ from (\ref{eq:loss.log}). To simplify notation, $\Pi_n$ denotes $\Pi_n(P|X_n)$.}}

	\end{center}
	
\end{table}

\subsection{Numerical Illustration}
\label{subsec:example}

We close this section with a numerical illustration to show how uncertainty about the true $\theta \in \Theta_0$ can induce substantial variation in the implied forecast distribution in nonlinear models. We use a panel probit design from \cite{HonoreTamer2006}. The model is as in Example 1 with $\Phi_t$ taken to be the standard normal cdf for all $t$. The distribution $\Pi_{\lambda,y}$ is unspecified, but $\lambda$ is assumed to be supported on the discrete evenly-spaced grid $\{-3,-2.8,\ldots,2.8,3\}$. Under the true data-generating process, $\lambda$ and $Y_{i0}$ are independent with $Y_{i0}$ taking the value $0$ or $1$ with probability $\frac{1}{2}$ and the probability mass for $\lambda$ is assigned by interpolating a $N(0,1)$ distribution on the support points.\footnote{See p.619 in \cite{HonoreTamer2006} for details.}

In this example, we wish to forecast $Y_{iT+1}$ having observed $Y_i^T$. In our earlier notation, the outcome of interest $Y$ represents $Y_{iT+1}$ and the forecast probability $\PR_\theta(Y = 1)$ denotes the probability under $\theta$ that $Y_{iT+1} = 1$ given $Y_i^T$; see display (\ref{eq:prob:panel}). The identified set $\Theta_0$ consists of all $(\beta,\Pi_{\lambda,y})$ that can match the model-implied probabilities of observing sequences $Y_i^T = y^T$ with the true probabilities for all $y^T \in \{0,1\}^T$; see display (\ref{eq:idset:panel}).

We may compute the set of forecast probabilities $\{\PR_\theta(Y = 1) : \theta \in \Theta_0\}$ by adapting linear programming methods from \cite{HonoreTamer2006} as follows. Denote the support of $\Pi_{\lambda,y}$ by $(\lambda_1,y_{01}), \ldots, (\lambda_L,y_{0L})$. As the support of $\Pi_{\lambda,y}$ is discrete, we identify $\Pi_{\lambda,y}$ with a $L$-vector $\pi \in \Delta^{L-1} := \big\{ x \in \mathbb R^L_+ : \sum_{l=1}^L x_l = 1 \big\}$ and write the restrictions defining $\Theta_0$ in  display (\ref{eq:idset:panel}) as $G(\beta) \pi = r$. The matrix $G(\beta)$ is a $2^T \times L$ matrix whose $l^{\mathrm{th}}$ column $G_l(\beta)$ is the $2^T$-vector of model-implied probabilities of observing different realizations of $Y_i^T$ conditional on $\lambda_i = \lambda_l$ and $Y_{i0} = y_{0l}$:
\[
 G_l(\beta) = \Big( p(y^T | y_{0l},\lambda_l;  \beta ) \Big)_{y^T \in \{0,1\}^T} \,,
\]
where
\be
 p(y^T | y_0,\lambda;  \beta ) = \prod_{t=1}^{T} \Phi(\beta y_{t-1} + \lambda)^{y_t}(1-\Phi(\beta y_{t-1} + \lambda))^{1-y_t} \,, \label{eq:example.cond.prob}
\ee
and $r = \big(p(y^T)\big)_{y^T \in \{0,1\}^T}$ is the $2^T$-vector that collects the true probabilities of each realization of the sequences $Y_{i}^T$. The forecast probability ${\textstyle \PR_\theta} ( Y_{iT+1} = 1 |Y_i^T =y^T )$ from display (\ref{eq:prob:panel}) may also be written as $b(\beta)'\pi$ where $b(\beta)$ is an $L$-vector whose $l$\textsuperscript{th} entry is
\[
 b_l(\beta) =  \frac{ \Phi(\beta y_T + \lambda_l) p(y^T | y_{0l},\lambda_l;  \beta ) }{ p(y^T ) } \,,
\]
where
 $y_T$ denotes the element of $y^T$ corresponding to date $T$. Note that we can place $p(y^T)$ in the denominator because $\int   p(y^T | y_0,\lambda;  \beta ) \mathrm{d}\Pi_{\lambda,y}(\lambda,y_0) =p(y^T)$ for any $\theta \in \Theta_0$. 

The problem of computing $p_U$ can be written as
\[
 p_U = \sup_\beta \left( \sup_{\pi \in \Delta^{L-1}} b(\beta)'  \pi \quad \mbox{s.t.} \quad  G(\beta) \pi = r \right) \,,
\]
with the understanding that the value of the inner maximization over $\pi$ is $-\infty$ if there does not exist a $\pi$ for which $G(\beta) \pi = r$. For such values of $\beta$ there does not exist a $\Pi_{\lambda,y}$ such that the model can explain the observed probabilities up to date $T$. As shown in Appendix~\ref{appsec:computation}, the inner optimization over $\Pi_{\lambda,y}$ can be rewritten as a linear program, leading to the equivalent representation
\be
 p_U 
 = \sup_\beta \left(  \inf_{ v \in \mathbb{R}^{K+1}}  \left[  0_{1 \times K} \, , \, 1 \right] v \quad \mbox{s.t.} \quad  A(\beta)  v \leq  - b(\beta) \right) \,, \label{eq:ex7.pU}
\ee
where $K = 2^T$ and $A(\beta) = \left[ G(\beta)' - (1_{L \times 1} \otimes r') \, , \, - 1_{L \times 1} \right]$ with $\otimes$ denoting the Kronecker product. The lower value is computed similarly; see Appendix~\ref{appsec:computation}.

Suppose $T = 2$ and the true $\beta_0 = 0.2$. The identified set for $\beta$ is approximately $[-2.4403,1.2428]$; these are all values of $\beta$ for which there is a $\Pi_{\lambda,y}$ such that the model can explain the observed probabilities up to date $T = 2$. The limits of the identified set for $\beta$ are denoted as grey dashed vertical lines in Figure \ref{f:example1}. For each value of $\beta$ in this set, we compute the smallest and largest values of the forecast probability $\PR_\theta(Y = 1)$ subject to the constraint that $(\beta,\Pi_{\lambda,y}) \in \Theta_0$. The linear programming problem to compute the upper probability conditional on $\beta$ is characterized in parentheses on the right-hand side of (\ref{eq:ex7.pU}).
The range of forecast probabilities as a function of $\beta$ is shown as the shaded regions in Figure \ref{f:example1} for different values of $Y_{iT}$. Maximizing and minimizing with respect to $\beta$ yields the values $p_L$ and $p_U$; these are marked as black dotted horizontal lines in Figure \ref{f:example1}. Although each $(\beta, \Pi_{\lambda,y}) \in \Theta_0$ induces identical distributions over $Y_i^T$, they induce different distributions over $Y_{iT+1}$ and therefore different $\theta$-optimal forecasts. As a consequence, some parameterizations that are indistinguishable based on $T$ observations become distinguishable based on $T+1$ observations and the identified set shrinks over time. This feature is due to the sequential learning about heterogeneous parameters that generates a non-Markovian structure of the model.

\begin{figure}
\centering
\begin{subfigure}{.5\textwidth}
\centering
\includegraphics[width=\linewidth]{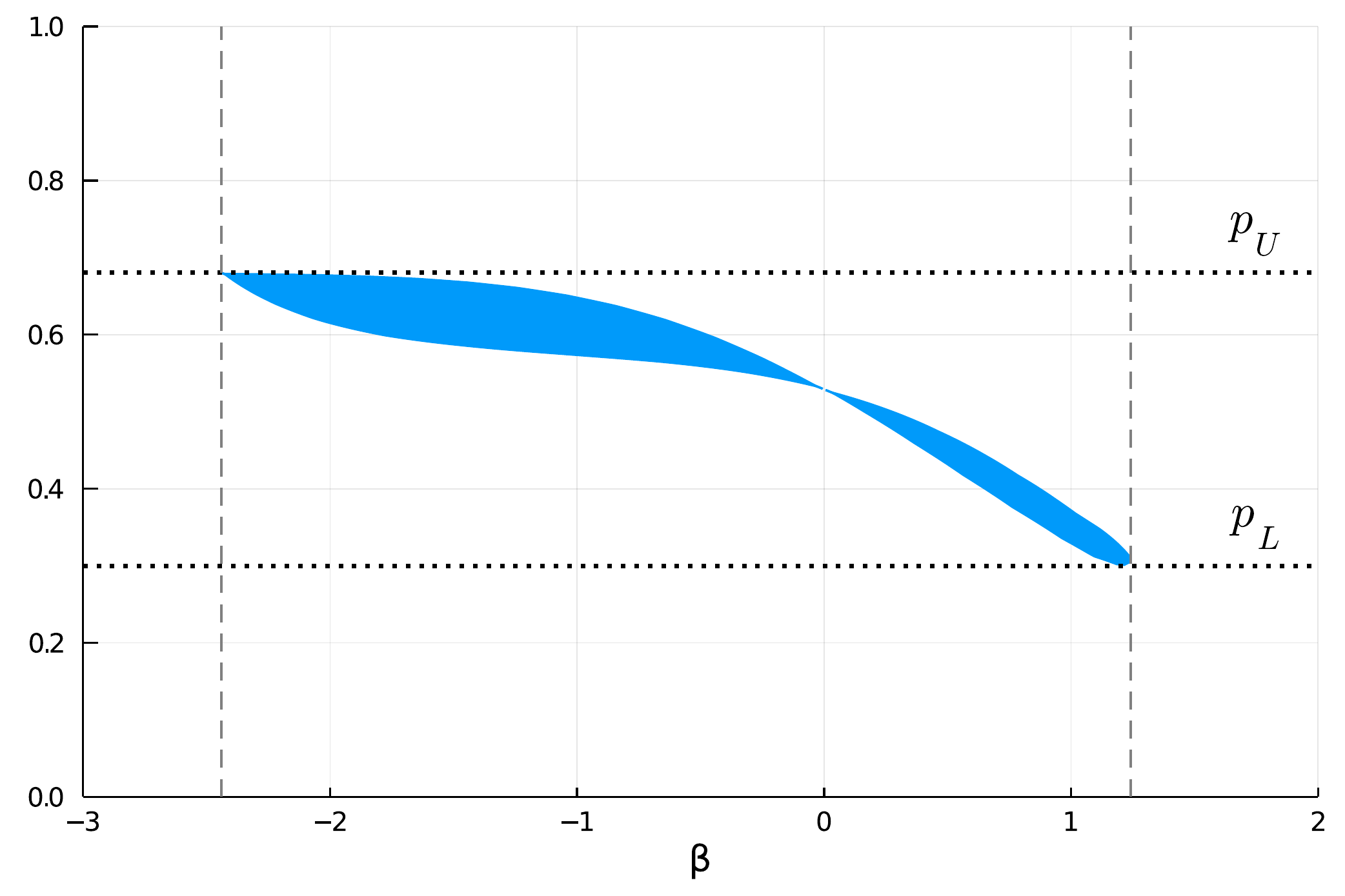}
\caption{$Y_{iT} = 0$}
\end{subfigure}%
\begin{subfigure}{.5\textwidth}
\centering
\includegraphics[width=\linewidth]{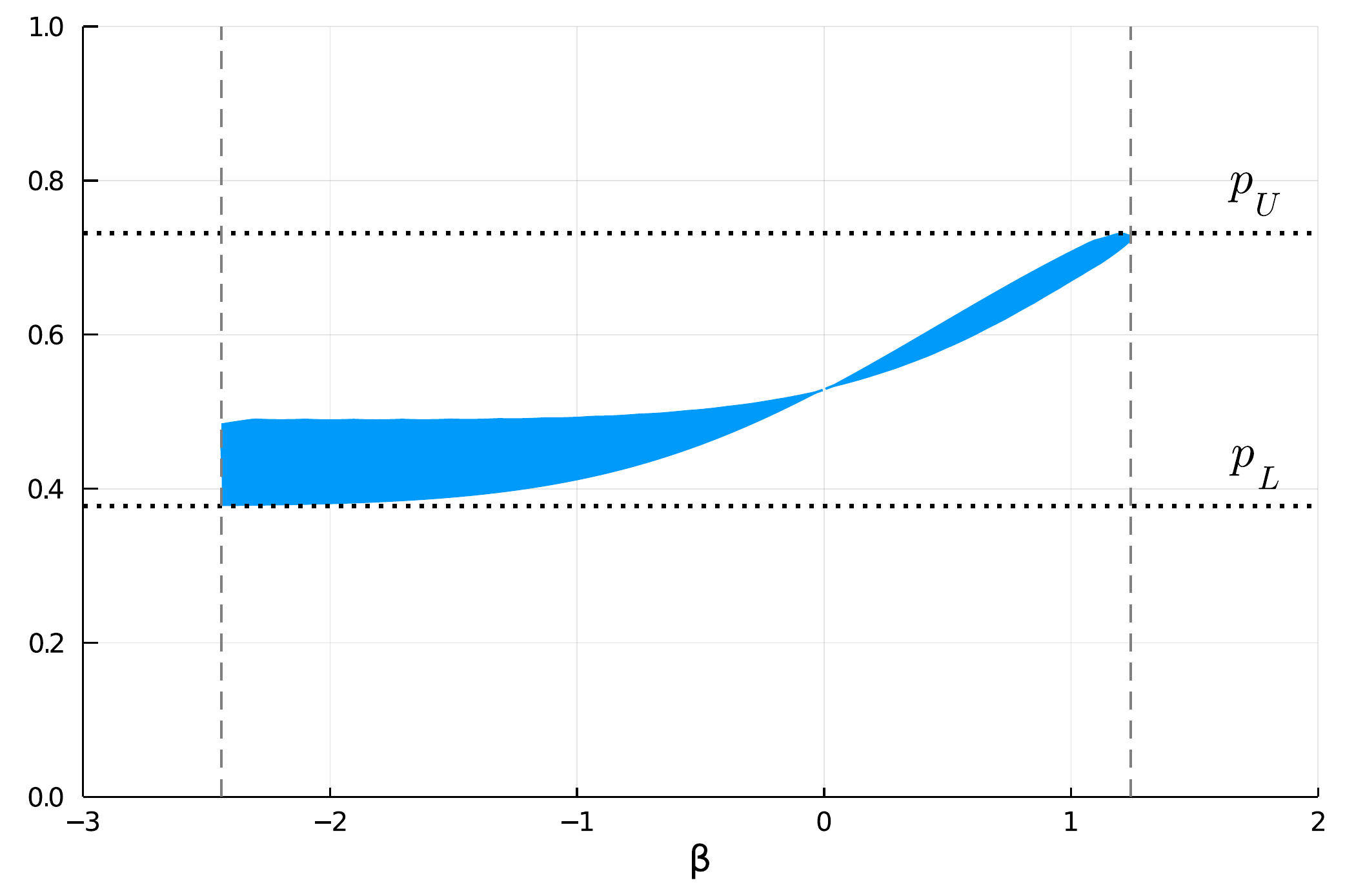}
\caption{$Y_{iT} = 1$}
\end{subfigure}
\parbox{14cm}{\caption{\small \label{f:example1} Panel probit example with $T = 2$ and $\beta_0 = 0.2$. Shaded regions denote the sets $\{\PR_\theta (Y = 1) : (\beta,\Pi_{\lambda,y}) \in \Theta_0\}$ as a function of $\beta$. Black dotted lines denote $p_L$ and $p_U$. }}
\end{figure}

In this numerical example, we consider robust forecasts which take  $\Theta_0$ as known. This is the asymptotic problem faced by the forecaster in a large-$n$, fixed-$T$ setting. 
Suppose we condition on $Y_i^T$ with $Y_{iT} = 0$.\footnote{In this design, the conditional distribution of $Y_{iT+1}$ given $Y_i^T$ depends only on $Y_{iT}$. } 
The set of forecast probabilities $\{\PR_\theta(Y = 1) : \theta \in \Theta_0\}$ is wide, spanning from $p_L = 0.2997$ to $p_U = 0.6803$ (see the left panel of Figure \ref{f:example1}). In particular, there are $\theta \in \Theta_0$ for which $\PR_\theta(Y = 1) < \frac{1}{2}$ so the $\theta$-optimal decision would be $d_{b,\theta}^* = 0$ for these $\theta$. However, there are other $\theta \in \Theta_0$ for which $\PR_\theta(Y = 1) > \frac{1}{2}$ and therefore the corresponding $\theta$-optimal decision would be $d_{b,\theta}^* = 1$ for these $\theta$. Our robust forecasts are useful here as the forecaster has no way to discriminate among $\theta \in \Theta_0$ based on date-$T$ information. As $p_L + p_U < 1$, the minimax and minimax regret forecast for symmetric binary loss is therefore $d_{b,mm} = d_{b,mmr} = 0$. Similarly, when $Y_{iT} = 1$ the set of forecast probabilities $\{\PR_\theta(Y = 1) : \theta \in \Theta_0\}$ is again quite wide, spanning $p_L = 0.3775$ to $p_U = 0.7320$ (see the right panel of Figure \ref{f:example1}). Here $p_L + p_U > 1$ so $d_{b,mm} = d_{b,mmr} = 1$.

\section{Multinomial Forecasts}
\label{sec:multinomial}

We now extend the preceding analysis to multinomial forecasts. We first describe $\theta$-optimal forecasts with known $\theta$ (Subsection~\ref{subsec:multinomial.point}), then describe forecasts that are robust with respect to the set of forecasting models $\{\PR_\theta : \theta \in \Theta_0\}$ with $\Theta_0$ known (Subsection~\ref{subsec:multinomial.robust}), before concluding with efficient robust forecasts that deal with both model and sampling uncertainty (Subsection~\ref{subsec:multinomial.efficient}). The forecasts are summarized in Table~\ref{tab:summary.multinomial} at the end of this section.

\subsection{$\theta$-optimal Forecasts}
\label{subsec:multinomial.point}

Throughout this section we focus on \emph{classification loss} for the decision space $\mathcal{D} = \{0,1,\ldots,M\}$:
\be
 \ell_c(y,d) = \mathbb{I}[y \neq d]\,.
 \label{eq:loss.classification}
\ee
This loss function generalizes binary loss in the symmetric case (i.e., $a_{01} = a_{10}$) to multinomial forecasts.\footnote{It is straightforward to modify what follows to penalize some types misclassifications more heavily than others, as we did in the binary case. We adopt the equal-weighted specification (\ref{eq:loss.classification}) for notational convenience.}
The $\theta$-optimal forecast in this environment under $\PR_\theta$ is the most likely outcome:
\be
 d_{c,\theta}^* \in \arg \max_m \, {\textstyle \PR_\theta} ( Y = m ) \,.
\label{eq:forecast.bayes.classification}
\ee
In the above display we write ``$\in$'' to allow for the possibility of ties.
When the $\arg \max$ is non-singleton, any element of the set of minimizers is a $\theta$-optimal point forecast. Any $\theta$-optimal forecast has risk 
\[
 1 - \max_m \, {\textstyle \PR_\theta} ( Y = m )\,.
\]

\subsection{Robust Forecasts}
\label{subsec:multinomial.robust}

We now derive the minimax and minimax regret forecasts that solve the decision problems (\ref{eq:minimax}) and (\ref{eq:minimax.regret}) for the classification loss function $\ell_c$ from (\ref{eq:loss.classification}) and decision space $\mathcal D = \{0,1,\ldots,M\}$. 

\subsubsection{Minimax Forecasts}

In the multivariate case, the analogues of $p_L$ and $p_U$ are the $M+1$ quantities
\be
 \underline p_{m} = \inf_{\theta \in \Theta_0} {\textstyle \PR_\theta} ( Y = m ) \,, \quad m \in \{0,1,\ldots,M\}\,.
 \label{eq:p.lower.multinomial}
\ee
Computation of $\underline p_m$ using duality methods is discussed in Appendix \ref{appsec:computation}. 
The maximum risk from choosing $d \in \mathcal D$ is 
\begin{equation} \label{eq:p.maximum.risk}
 \sup_{\theta \in \Theta_0} \mathbb{E}_\theta [ \ell_c(Y, d) ] = 1 - \underline p_{d} \,.
\end{equation}
The minimax forecast for classification loss is therefore
\be
	d_{c,mm} \in \arg \max_m \underline p_{m} \label{eq:d.minimax.multivariate}
\ee
and the minimax risk is 
\be
{\cal R}^*_{c,mm} = 1 - \max_m \underline p_{m} \,. 
\ee
As before, the minimax-optimal forecast is not necessarily unique. Non-uniqueness arises when the set of maximizers of $m \mapsto \underline p_m$ is not a singleton. If so, each minimax-optimal forecast differs only in its handling of ties and has the same maximum risk.

\subsubsection{Minimax Regret Forecasts}

For minimax regret forecasts, define
\begin{equation} \label{eq:dp.multinomial}
 \Delta p_{m} := \sup_{\theta \in \Theta_0} \left( \max_{m'} {\textstyle \PR_\theta} ( Y = m' ) - {\textstyle \PR_\theta} ( Y = m ) \right) \,.
\end{equation}
Suppose the forecaster chooses $d$. The difference $\max_{m'} \PR_\theta ( Y = m' ) - \PR_\theta ( Y = d ) $ is the regret from this choice under the forecast distribution $\PR_\theta$. Having chosen $d$, the quantity $\Delta p_d$ is therefore the forecaster's maximum regret over all $\theta \in \Theta_0$. The minimax regret forecast is therefore
\[
 d_{c,mmr} \in \arg \min_{m} \Delta p_m \,.
\]
and the minimax regret is 
\[
 \mathcal R^*_{c,mmr} = \min_{m} \Delta p_m \,.
\]
Unlike the binary case, equivalence of minimax and minimax regret forecasts for classification loss no longer holds when $M \geq 2$; see Appendix \ref{appsec:binary}. Computation of $\Delta p_m$ is discussed in Appendix \ref{appsec:computation}.

\subsection{Efficient Robust Forecasts}
\label{subsec:multinomial.efficient}

In this section we now drop the assumption that $\Theta_0$ is known and consider also the need to estimate features of $\Theta_0$ that are relevant for the forecasting problem from data. We consider the same setup and notation as developed for the binary case in Section \ref{subsec:binary.efficient}.

\subsubsection{Minimax Forecasts}

Here we make dependence on the reduced-form parameter explicit by defining 
\[
 \underline p_{m}(P) := \inf_{\theta \in \Theta_0(P)} \mathbb P_\theta(Y = m)\,.
\]
In view of (\ref{eq:p.maximum.risk}), the posterior average maximum risk of choosing $d \in \mathcal D$ is
\[
 \int \sup_{\theta \in \Theta_0(P)} \mathbb{E}_\theta [ \ell_c(Y, d) ] \, \mathrm d \Pi_n(P|X_n) = 1 - \int \underline p_{d}(P)  \, \mathrm d \Pi_n(P|X_n)
\] 
The Bayesian robust forecast is therefore
\begin{equation}\label{eq:efficient.robust.classification}
 d_{c,mm}(X_n) \in \arg \max_m \left( \int \underline p_{m}(P) \, \mathrm d \Pi_n(P|X_n) \right) \,.
\end{equation} 
In case of ties, any (possibly randomized) tie-breaking rule is optimal. For instance, one could simply choose the smallest value of $m$ among the set of maximizers.

\subsubsection{Minimax Regret Forecasts}

For minimax regret forecasts, define
\[
 \Delta p_m(P) := \sup_{\theta \in \Theta_0(P)} (\max_{m'} \mathbb P_\theta(Y = m') -  \mathbb P_\theta(Y = m))\,.
\]
The posterior average maximum regret of choosing $d \in \mathcal D$ is 
\[
 \int \Delta_{p_d}(P) \, \mathrm d \Pi_n(P|X_n) \,.
\]
The Bayesian robust forecast is therefore
\begin{equation}\label{eq:efficient.robust.classification.regret}
 d_{c,mmr}(X_n) \in \arg \min_{m} \left( \int  \Delta p_{m}(P) \, \mathrm d \Pi_n(P|X_n) \right) \,.
\end{equation}
In case of ties, any (possibly randomized) tie-breaking rule is optimal. For instance, one could simply choose the smallest value of $m$ among the set of minimizers.

\begin{table}[t!]
	
	\begin{center}
		\begin{tabular}{ll}\hline \hline
			$\theta$-optimal & $d_{c,\theta}^* \in \arg \max_m \, {\textstyle \PR_\theta} ( Y = m ) $ \\
			Robust (minimax) & $d_{c,mm} \in \arg \min_{m} \Delta p_m $ \\
			Robust (minimax regret) & $d_{c,mmr} \in \arg \min_{m} \Delta p_m $ \\
			Bayesian robust (minimax) & $d_{c,mm}(X_n) \in \arg \max_m \; \big( \int \underline p_{m}(P) \, \mathrm d \Pi_n(P|X_n) \big)$ \\
			Bayesian robust (minimax regret) & $d_{c,mmr}(X_n) \in \arg \min_{m} \; \big( \int  \Delta p_{m}(P) \, \mathrm d \Pi_n(P|X_n) \big) $ \\ \hline
		\end{tabular}

\vspace*{0.5cm}

\parbox{14cm}{\caption{ \label{tab:summary.multinomial} Summary of multinomial forecasts for classification loss $\ell_c$  from (\ref{eq:loss.classification}).}}

\end{center}
			
\end{table}	

\section{Asymptotic Efficiency for the Robust Forecasting Problem}
\label{sec:uncertainty}

In this section we focus on forecasts that are asymptotically efficient-robust. We continue to evaluate the forecasts by their {\em integrated maximum risk} (or regret), but only require this criterion to be minimized in the limit as the sample size $n$ tends to infinity. This enlarges the class of efficient forecasts to those that are asymptotically equivalent to the Bayesian robust forecast. 

Some interesting findings emerge. First, ``plug-in'' rules, in which an efficient estimator $\hat P$ is plugged into the rules derived in Sections \ref{subsec:binary.robust} and \ref{subsec:multinomial.robust}, are not asymptotically efficient-robust if the key quantities which determine the robust forecast (i.e., $p_L(P)$ and $p_U(P)$ in the binary case) are only directionally differentiable functions of $P$. This stands in contrast with other asymptotic efficiency results for related problems that depend smoothly on first-stage estimators, including point estimation under partial identification \citep{Song2014} and efficient statistical treatment rules under point identification \citep{HiranoPorter2009}, for which plug-in rules are efficient.
Second, forecasts that are constructed via {\em bagging} tend to be 
asymptotically efficient-robust. To construct such forecasts, the posterior 
distribution for $P$ is replaced by the bootstrap distribution of an efficient 
estimator of $P$. The forecast is then chosen to minimize the maximum risk or 
regret over $\Theta_0(P)$ averaged across the bootstrap distribution. As discussed in Remark~\ref{remark:bagging} below, it can be shown that the 
bagged forecasts are asymptotically equivalent to the Bayesian robust forecasts 
even under directional differentiability.

\subsection{Limit Experiment}\label{subsec:efficiency.conditions}

Our approach follows \cite{HiranoPorter2009} and uses Le Cam's limits of experiments framework. As is standard for treatments of asymptotic efficiency (see, e.g., \cite{Vandervaart2000}), we work with a local reparameterization in which the reduced form parameter is $P_{n,h} = P_0 + n^{-1/2}h$ for $P_0$ fixed and $h$ ranging over $\mathbb R^k$. 
Let $\overset{P_{n,h}}{\rightsquigarrow}$ and $\overset{P_{n,h}}{\to}$ denote convergence in distribution and in probability under the sequence of measures $\{F_{n,P_{n,h}}\}_{n \geq 1}$. The model for $X_n$ is \emph{locally asymptotically normal} at $P_0$ if for each $h_0 \in \mathbb R^k$,  the likelihood ratio processes indexed by any finite subset $H \subset \mathbb R^k$ converge weakly to the likelihood ratio in a shifted normal model:
\be \label{e:LAN}
 \bigg( \frac{\mathrm d F_{n,P_{n,h}}}{\mathrm d F_{n,P_{n,h_0}}} \bigg)_{h \in H}  \overset{P_{n,h_0}}{\rightsquigarrow} \;\; \bigg( \exp \left( (h - h_0)' Z - \frac{1}{2} (h - h_0)'I_0 (h - h_0) \right) \bigg)_{h \in H}
\ee
with $Z  \sim N(h_0,I_0^{-1})$ and $I_0$ nonsingular. Let $\mathbb E_h$ and $\mathbb P_h$ denote expectation and probability with respect to $Z \sim N(h,I_0^{-1})$.

\begin{assumption}\label{A1}
	\hspace*{1cm}\\[-3ex]
\begin{enumerate}[nosep]
\item $\mathcal P$ is an open subset of $\mathbb R^k$ with $P_0 \in \mathcal P$;
\item The model for $X_n$ is locally asymptotically normal at each $P_0 \in \mathcal P$.
\end{enumerate}
\end{assumption}

\noindent
In the dynamic binary choice example, Assumption \ref{A1}.1. 
implies we observe all possible realizations of histories $Y_i^T \in \{0,1\}^T$ up to time $T$ with positive probability.

To describe the limit experiment, consider the collection $\mathbb D$ of sequences of forecasts $\{d_n\}_{n \geq 1}$ that converge in distribution under $\{F_{n,P_{n,h}}\}_{n \geq 1}$:
\begin{equation}\label{eq:sequence.D}
 \mathbb D = \left\{ \{d_n\}_{n \geq 1} : d_n(X_n) \overset{P_{n,h}}{\rightsquigarrow} Q_{P_0,h} \mbox{ for all } h \in \mathbb R^k \mbox{ and } P_0 \in \mathcal P \right\} \,,
\end{equation}
where $Q_{P_0,h}$ denotes a probability measure on $\mathcal D$ equipped with its Borel $\sigma$-algebra. 
Assumption \ref{A1} permits application of an asymptotic representation theorem of \cite{Vandervaart1991}. For any $\{d_n\}_{n \geq 1} \in \mathbb D$ there exists a function $d^\infty_{P_0}(Z,U) $ with  $d^\infty_{P_0}(Z,U) \sim Q_{P_0,h}$ where $Z \sim N(h,I_0^{-1})$ and $U \sim\;$Uniform$[0,1]$ independently of $Z$ is a randomization term. As $n \to \infty$, the average excess maximum risk and regret of any sequence of forecasts $\{d_n\}_{n \geq 1} \in \mathbb D$ converges to a limiting counterpart for its representation $d^\infty_{P_0}(Z,U)$ in the limit experiment. 

\subsection{Asymptotic Efficiency}

The robust forecasts derived in Sections \ref{subsec:binary.robust} and \ref{subsec:multinomial.robust} are \emph{oracle} forecasts in the sense that they were obtained under the assumption of knowledge of the true  set $\Theta_0$. To distinguish the oracle forecasts from the data-dependent forecasts $d_{mm}(X_n)$ and $d_{mmr}(X_n)$, in what follows we use the notation $d_{mm}^o(P)$ and $d_{mmr}^o(P)$ to denote the oracle forecasts when $P$ is the true reduced-form parameter.
To facilitate the asymptotic calculations, we evaluate forecasts by their excess maximum risk or regret relative to the oracle. The \emph{excess maximum risk} of $d_n(X_n)$ is
\[
 \Delta\mathcal R_{mm}(d_n; P, X_n) = \sup_{\theta \in \Theta_0(P)} \mathbb E_\theta[ \ell(Y,d_n(X_n))] - \sup_{\theta \in \Theta_0(P)} \mathbb E_\theta[ \ell(Y,d_{mm}^o(P))] \,.
\]

Integration over $(P,X_n)$ leads to the {\em integrated excess maximum risk}
\[
   \Delta {\cal B}_{mm}^n(d_n;\pi) = \int \int \sqrt{n} \Delta\mathcal R_{mm}(d_n; P, X_n) \, \mathrm d\Pi_n(P|X_n) \, \mathrm dF_n(X_n).
\]
We standardize by $\sqrt{n}$ and recenter at the maximum risk of the oracle to ensure that $\Delta {\cal B}_{mm}^n(d_n;\pi)$ converges to a finite but potentially non-zero limit, though this does not change the ranking of forecasts. Therefore, the Bayes robust forecast under minimax risk also minimizes $\Delta {\cal B}_{mm}^n(d_n;\pi)$.
Excess maximum regret $\Delta\mathcal R_{mmr}$ and integrated excess maximum regret $\Delta {\cal B}_{mmr}^n(d_n;\pi)$ are defined similarly, replacing risk in the above display with regret.

To derive the asymptotic counterparts,
we begin by calculating a frequentist risk that averages over the data $X_n$ conditional on $P$ and then integrates over $P$ using the prior $\pi$. To express the frequentist risk, let  $\mathbb{E}_{P_{n,h}}$ denote the expectation with respect to $X_n \sim F_{n,P_{n,h}}$. Using this notation and conducting the change-of-variables from $P_{n,h}$ to $h$, the {\em frequentist excess maximum risk} can be expressed as
\begin{equation} \label{e:Bmmh}
 \Delta \mathcal B_{mm}^n(d_n;P_0,\pi) = \int  \mathbb E_{P_{n,h}}\left[ \sqrt n \Delta\mathcal R_{mm}\left(d_n, P_{n,h}; X_n\right) \right] \, \pi \left( P_{n,h} \right) \, \mathrm d h \,.
\end{equation}
Here we dropped the Jacobian term that arises from the change-of-variables because it simply scales the average excess maximum risk by a power of $n$ without changing the ranking of forecasts. We include $P_0$ in the conditioning set to indicate that the calculations are done locally around $P_0$. The regret $\Delta \mathcal  B_{mmr}^n(d_n;P_0,\pi)$ can be expressed in a similar manner. \emph{asymptotically efficient-robust forecasts} are those that minimize $\lim_{n \to \infty} \Delta \mathcal B_{mm}^n(d_n;P_0,\pi)$ and $\lim_{n \to \infty} \Delta \mathcal B_{mmr}^n(d_n;P_0,\pi)$. Under conditions permitting the interchange of limits and integration, we obtain
\[
 \lim_{n \to \infty} \Delta \mathcal B_{mm}^n(d_n;P_0,\pi) = \pi(P_0) \int \left( \lim_{n \to \infty} \mathbb E_{P_{n,h}}\left[ \sqrt n \Delta\mathcal R_{mm}\left(d_n, P_{n,h}; X_n\right) \right] \right) \, \mathrm d h\,,
\]
and similarly for regret. The limit in parentheses will depend on the sequence of forecasts $\{d_n\}_{n \geq 1}$ through its representation in the limit experiment. Also note that the asymptotic ranking of forecasts does not depend on $\pi$.

The following example illustrates the normalization of the excess maximum risk and the use of the local reparameterization.

\paragraph*{Example 7: Local parameters and frequentist excess maximum risk.} 
\phantom{}
 Suppose that $\mathcal P = (0,1)$, $p_L(P) = P$, and
\[
p_U(P) = \left[ \begin{array}{ll}
\frac{1}{2} & P < \frac{1}{2}\,, \\
(2P - \frac{1}{2}) \wedge 1 & P \geq \frac{1}{2}\,,
\end{array} \right.
\]
For a binary loss function with $a_{01} = a_{10} = 1$, the robust forecast takes the form
\be
d_{mm}^o(P) = \mathbb{I}[ 1 \le p_L(P) + p_U(P)] = \mathbb{I}[ P \ge {\textstyle \frac{1}{2}}].
\label{eq:ex7.dmmoP}
\ee
If the true $P$ is bounded away from $\frac{1}{2}$, then eventually we will learn whether it is less or greater than $\frac{1}{2}$ and make the optimal decision. The most challenging case is when $P$ is very close to $\frac{1}{2}$. Thus, we center the local reparameterization at $P_0=\frac{1}{2}$. Suppose that under $P_{n,h}$ the frequentist sampling distribution and the Bayesian posterior for $P$ are
\[
 \hat P|P_{n,h} \sim N(P_{n,h},n^{-1}) \,, \quad 
 P|X_n \sim N(\hat{P},n^{-1}) \,,
\]
respectively. 
The posterior is obtained under a uniform prior on ${\cal P}$ when $n$ is large enough so that the truncation effect of the prior at the boundary of $(0,1)$ is negligible. Under the local reparameterization, the sampling distribution of $\hat{h} = \sqrt{n}(\hat P-P_0)$ and the posterior distribution for $h = \sqrt n (P - P_0)$ are
\[
 \hat{h}|(P_0,h_0) \sim N(h_0,1) \,, \quad 
 h|(X_n,P_0) \sim N(\hat{h},1) \,,
\]
respectively.

In this example $\hat P$ (equivalently $\hat{h}$) is a sufficient statistic. For any decision $d_n(\hat{h})$, we obtain
\begin{equation*}
\sup_{\theta \in \Theta_0(P_{n,h_0})} \; \mathbb{E}_\theta [ \ell(Y, d_n(\hat{h})) ] 
= \left[ \begin{array}{ll}
1 - p_L(P_{n,h_0}) & \mbox{if $d_n(\hat{h}) = 1 $}\,, \\
p_U(P_{n,h_0})  & \mbox{if $d_n(\hat{h}) = 0 $}\, .
\end{array} \right.
\end{equation*}
Here $1 - p_L(P_{n,h_0})=1/2 - n^{-1/2} h_0$ is linear in $h_0$ whereas
\[
p_U(P_{n,h_0}) = \left\{ \begin{array}{ll} {\textstyle \frac{1}{2}} + 2 n^{-1/2} h_0 & \mbox{if } h_0 \ge 0 \\  
	                          {\textstyle \frac{1}{2}} & \mbox{if } h_0 < 0 \,.
\end{array} \right. 
\]
Using straightforward algebra it can be shown that
\be
\mathbb{E}_{P_{n,h_0}}[\sqrt{n}\Delta {\cal R}_{mm} (d_n(\hat{h}),P_{n,h_0};\hat{h})] 
= \left\{ \begin{array}{rl}
 3 h_0 \mathbb{P}_{n,h_0} \big[ d_n(\hat{h})= 0 \big] & \mbox{if } h_0 \ge 0 \,, \\
 -h_0 \mathbb{P}_{n,h_0} \big[ d_n(\hat{h}) = 1 \big] & \mbox{if } h_0 < 0 \,, \end{array}
 \right. \label{eq:ex7.freqrisk}
\ee
where $\mathbb P_{n,h_0}$ denotes probabilities under $F_{n,P_{n,h_0}}$. It follows that for any sequence $\{d_n\} \in \mathbb D$,
\[
 \lim_{n \to \infty} \mathbb{E}_{P_{n,h_0}}[\sqrt{n}\Delta {\cal R}_{mm} (d_n(\hat{h}),P_{n,h_0};\hat{h})] 
= \left\{ \begin{array}{rl}
 3 h_0 \mathbb{P}_{h_0} \big[ d^\infty_{P_0}(Z)= 0 \big] & \mbox{if } h_0 \ge 0 \,, \\
 -h_0 \mathbb{P}_{h_0} \big[ d^\infty_{P_0}(Z) = 1 \big] & \mbox{if } h_0 < 0 \,, \end{array}
 \right. 
\]
where $Z \sim N(h_0,1)$ under $\mathbb P_{h_0}$.\footnote{Here it is without loss of generality to write $d^\infty_{P_0}$ as a function of $Z$ only; see the discussion in Appendix~\ref{appsec:binary.proofs}.} The formula shows that the $\sqrt{n}$ standardization leads to a well-defined non-trivial limit of the frequentist excess maximum risk. $\Box$

\subsection{Binary forecasts}

In this section we show that the efficient robust binary forecasts that were derived in Section \ref{subsec:binary.efficient} are optimal. For brevity, we focus on discrete forecasts with $\mathcal D = \{0,1\}$ under binary or classification loss. This class of forecasts is also relevant for its connections with statistical treatment rules. 

Say $f : \mathcal P \to \mathbb R^d$ is \emph{directionally differentiable} at $P_0$ if the limit
\[
\lim_{t \downarrow 0} \frac{f(P_0 + t h) - f(P_0)}{t} =: \dot f_{P_0}[h] 
\]
exists for every $h \in \mathbb R^k$, in which case $ \dot f_{P_0}[\cdot]$ is its directional derivative. Note that $ \dot f_{P_0}[\cdot]$ will be positively homogeneous of degree one but not necessarily linear. If $\dot f_{P_0}[h]$ is linear in $h$ we say that $f$ is \emph{fully} differentiable at $P_0$.   We say that the posterior $\Pi_n$ is \emph{consistent} if $ \Pi_n(P \in N |X_n) \overset{P_0}{\to} 1$ for every neighborhood $N$ containing $P_0$ for each $P_0 \in \mathcal P$. Recall that $Z \sim N(h,I_0^{-1})$ in the limit experiment. Let $\mathbb P_h$ denote probability statements with respect to $Z$. Moreover, let $Z^* \sim N(0,I_0^{-1})$ independently of $Z$ and $\mathbb E^*$ denote expectation with respect to $Z^*$.

\medskip

\begin{assumption}\label{a:binary}
	\hspace*{1cm}\\[-3ex]
	\begin{enumerate}[nosep]
		\item
		\begin{enumerate}[nosep]
			\item The functions $p_U$ and $p_L$ are everywhere continuous and everywhere directionally differentiable;
			\item The function $x \mapsto \mathbb P_h\left( \mathbb E^*[ \dot p_{L,P_0}[Z^* + Z] + \dot p_{U,P_0}[Z^* + Z]  |Z ] \leq x \right) $ is continuous at $x = 0$ for each $h \in \mathbb R^k$ and $P_0 \in \mathcal P$ with $a_{01} p_L(P_0) + a_{10} p_U(P_0) = a_{01}$ and $p_U(P_0) > \frac{a_{01}}{a_{01} + a_{10}}$;
		\end{enumerate}
		\item
		\begin{enumerate}[nosep]
			\item The posterior for $P$ is consistent;
			\item For any neighborhood $N$ of $P_0$ there is $\gamma > \frac{1}{2}$ such that $n^{\gamma}  \Pi_n(P \not \in N|X_n) \overset{P_0}{\to} 0$;
		\end{enumerate}
		\item
		\begin{enumerate}[nosep]
			\item At any $P_0 \in \mathcal P$ with $a_{01} p_L(P_0) + a_{10} p_U(P_0) = a_{01}$, for any Borel set $A$,
			\[
			\lim_{n \to \infty} F_{n,P_{n,h}}\left(\int \sqrt n(f(P) - f(P_0)) \, \mathrm d \Pi_n(P|X_n) \in A \right) = \mathbb P_h\left( \mathbb E^*[ \dot f_{P_0}[Z^* + Z] |Z ] \in A \right) 
			\]
			with  $f = (p_L,p_U)$;
			\item Similarly, for any Borel set $A$, 
			\[
			\lim_{n \to \infty} F_{n,P_{n,h}}\left(\int \sqrt n(f(P) - f(P_0))_+ \, \mathrm d \Pi_n(P|X_n) \in A \right) = \mathbb P_h\left( \mathbb E^*[ ( \dot f_{P_0}[Z^* + Z] )_+ |Z ] \in A \right)
			\]
			with $f = (p_L,p_U)$, where $(\cdot)_+$ is applied element-wise.
		\end{enumerate}
	\end{enumerate}
\end{assumption}

\noindent The directional differentiability of Assumption \ref{a:binary}.1 was built into the functional form of $p_U(\cdot)$ in Example 7. Appendix \ref{appsec:computation} shows that in a broad class of problems the extreme probabilities $p_L(P)$ and $p_U(P)$  can be expressed as min-max or max-min problems, where the outer optimization is over homogeneous parameters and the inner optimization is a linear or convex program. It follows that $p_L(\cdot)$ and $p_U(\cdot)$ are  typically only directionally, rather than fully, differentiable functions.\footnote{See, e.g., Theorem 3.1 of \cite{Greenberg1997} for directional differentiability of the value of linear programs, Chapter 4.3 of \cite{BonnansShapiro2000} for directional differentiability of the value of convex programs, and \cite{MilgromSegal2002} and \cite{Shapiro2008} for directional differentiability of min-max problems.} 
Directional differentiability can also be a feature of models defined via moment inequalities (cf. Example 5). 

Assumption \ref{a:binary}.2(a) holds under standard regularity conditions (see, e.g., Chapter 10.4 of \cite{Vandervaart2000}). For Assumption \ref{a:binary}.2(b), note that $\Pi_n(P \not \in N|X_n) $ typically converge to zero exponentially. For instance, in a normal means model with $\bar X_n \sim N(P_{n,h},(nI_0)^{-1})$ and a flat prior on $P$, we have $\Pi_n(P \not \in N|X_n) = O(e^{-c n})$ for some $c > 0$.\footnote{Note $P | X_n \sim N(\bar X_n,(nI_0)^{-1})$ under a flat prior. Choose $\varepsilon > 0$ so that $B_{2\varepsilon}(P_0) \subset N$. Then $P_0(\bar X_n \in B_\varepsilon(P_0)) \to 1$ and whenever $\bar X_n \in B_\varepsilon(P_0)$, we have $\Pi_n(P \not \in N|X_n) \leq \Pi_n(|P - \bar X_n| > \varepsilon|X_n) = O(e^{-c n})$ for any $c < \frac{\varepsilon^2}{2}  \lambda_{\min}(I_0)$.} More generally, the classical posterior consistency results of \cite{Schwartz1965} establish exponential convergence rates.

Assumption \ref{a:binary}.3 is simply assuming that a $\delta$-method applies for the posterior distribution of directionally differentiable functionals of $P$.\footnote{See, e.g., \cite{KitagawaOleaPayneVelez2020} for a formal justification. This may require strengthening our definition of directional differentiability to Hadamard directional differentiability.} For a heuristic justification for Assumption \ref{a:binary}.3(a), consider a normal means model with $\bar X_n \sim N(P_{n,h},(nI_0)^{-1})$. Under a flat prior for $P$, we have $P | X_n \sim N(\bar X_n,(nI_0)^{-1})$. For a directionally differentiable function $f$:
\begin{align*}
& F_{n,P_{n,h}}\left(\int \sqrt n(f(P) - f(P_0)) \, \mathrm d \Pi_n(P|X_n) \in A \right) \\
& = F_{n,P_{n,h}}\left(  \int \sqrt n(f(P) - f(P_0)) \frac{e^{-\frac{1}{2}(P - \bar X_n)'(n I_0)(P - \bar X_n)}}{\sqrt{|2 \pi (n I_0)^{-1}|}} \,  \mathrm d P \in A \right) \\
& \approx F_{n,P_{n,h}}\left( \int \dot f_{P_0}[\sqrt n(P - P_0)] \frac{e^{-\frac{1}{2}(P - \bar X_n)'(nI_0)(P - \bar X_n)}}{\sqrt{|2 \pi (n I_0)^{-1}|}} \, \mathrm d P \in A \right) \\
& = F_{n,P_{n,h}}\left( \int \dot f_{P_0}[\kappa + \sqrt n(\bar X_n - P_0)] \frac{e^{-\frac{1}{2}\kappa'(I_0)\kappa}}{\sqrt{|2 \pi (I_0)^{-1}|}} \, \mathrm d \kappa \in A \right) \,,
\end{align*}
with the final line is by the change of variables $\kappa = \sqrt n (P - \bar X_n)$. The last integral can be rewritten $\mathbb E^*[ \dot f_{P_0}[Z^* + Z] |Z ] $ where $Z^* |Z \sim N(0,I_0^{-1})$ with $Z = \sqrt n (\bar X_n - P_0) \sim N( h,I_0^{-1})$ under $F_{n,P_{n,h}}$. A similar argument provides a heuristic justification for Assumption \ref{a:binary}.3(b).
In the presentation of the asymptotic efficiency results for the binary forecasts we rely on the following definition:

\begin{definition}\label{def:asymp.equiv.forec}
	Given a sequence of forecasts $\{d_n\}_{n \geq 1} \in \mathbb D$, we say that $d_n$ is \emph{asymptotically equivalent} to $d_{b,mm}$ if $d_n(X_n)$ and $d_{b,mm}(X_n)$ have the same asymptotic distribution under the sequence of measures $\{F_{n,P_{n,h}}\}_{n \geq 1}$ for all $P_0 \in \mathcal P$ and $h \in \mathbb R^k$.
\end{definition}

\noindent
Let $\Delta \mathcal B_{b,mm}^n(\,\cdot\,;P_0,\pi)$ and  $\Delta \mathcal B_{b,mmr}^n(\,\cdot\,;P_0,\pi)$ denote integrated excess maximum risk and regret (see display (\ref{e:Bmmh})) for binary loss $\ell_b$ from (\ref{eq:loss.binary}).  We require forecasts to satisfy an additional technical condition, namely condition (\ref{eq:reverse.fatou}) in the Appendix, which permits the interchange of limits and integration. This condition can be verified under more primitive conditions (see Remark \ref{rmk:fatou}). Let $\mathbb D$ denote the set of all sequences of $\{0,1\}$-valued forecasts that converge in the sense of (\ref{eq:sequence.D}). Theorem~\ref{t:optimal.binary} states that forecasts that are asymptotically equivalent to the Bayes forecasts are asymptotically optimal. A proof is provided in Appendix~\ref{appsec:proofs}.

\begin{theorem}\label{t:optimal.binary}
		(i) Let Assumption \ref{A1} and parts (a) of Assumption \ref{a:binary} hold and let $\tilde d_n$ be asymptotically equivalent to $d_{b,mm}$ and satisfy condition (\ref{eq:reverse.fatou}). Then: for all $P_0 \in \mathcal P$,
		\[
		\lim_{n \to \infty} \Delta \mathcal B_{b,mm}^n(\tilde d_n;P_0,\pi) = \inf_{\{d_n\} \in \mathbb D}  \liminf_{n \to \infty} \Delta \mathcal B_{b,mm}^n(d_n;P_0,\pi) \,.
		\]
		(ii) Let Assumptions \ref{A1} and \ref{a:binary} hold and let $\tilde d_n$ be asymptotically equivalent to $d_{b,mmr}$ and satisfy condition (\ref{eq:reverse.fatou}). Then: for all $P_0 \in \mathcal P$,
		\[
		\lim_{n \to \infty} \Delta \mathcal B_{b,mmr}^n(\tilde d_n;P_0,\pi) = \inf_{\{d_n\} \in \mathbb D}  \liminf_{n \to \infty} \Delta \mathcal B_{b,mmr}^n(d_n;P_0,\pi) \,.
		\]
\end{theorem}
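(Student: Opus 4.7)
The plan is to reduce the problem to a Bayes decision problem in Le Cam's limit experiment and to show that $d_{b,mm}$ and $d_{b,mmr}$ are, up to asymptotic equivalence, its Bayes solutions. Under Assumption~\ref{A1}, Van der Vaart's asymptotic representation theorem associates with each $\{d_n\}\in\mathbb D$ a limiting randomized rule $d^\infty_{P_0}(Z,U)$ with $Z\sim N(h,I_0^{-1})$ under $\mathbb P_h$. Performing the change of variables $P=P_{n,h}$ in $\Delta\mathcal B^n_{b,mm}$ and interchanging the $h$-integral with $\lim_{n\to\infty}$ via \eqref{eq:reverse.fatou} reduces the objective to integrating, against $\pi(P_0)\,\mathrm dh$, the pointwise-in-$h$ limit of $\mathbb E_{P_{n,h}}[\sqrt n\,\Delta\mathcal R_{mm}(d_n,P_{n,h};X_n)]$. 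Only boundary points $P_0$ where the oracle switches, equivalently $a_{01}p_L(P_0)+a_{10}p_U(P_0)=a_{01}$, matter: elsewhere Assumption~\ref{a:binary}.2 ensures the oracle is eventually constant and the normalized excess risk vanishes for any consistent rule.

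At such a $P_0$, set $\Delta(P):=a_{01}-a_{01}p_L(P)-a_{10}p_U(P)$, so that $\Delta\mathcal R_{mm}(d,P;\cdot)=\Delta(P)(d-d^o_{mm}(P))$. Assumption~\ref{a:binary}.1(a) gives $\sqrt n\,\Delta(P_{n,h})\to\dot\Delta_{P_0}[h]:=-a_{01}\dot p_{L,P_0}[h]-a_{10}\dot p_{U,P_0}[h]$, and Assumption~\ref{a:binary}.1(b) rules out atoms that would obstruct the product weak limit $\Delta(P_{n,h})\,d_n(X_n)$. Combined with $d_n\rightsquigarrow d^\infty_{P_0}(Z,U)$, the integrand converges to $\mathbb E_h[\dot\Delta_{P_0}[h](d^\infty_{P_0}(Z,U)-\mathbb I[\dot\Delta_{P_0}[h]\le 0])]$. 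The local reparameterization of $\pi$ contributes a flat reference measure in $h$; Fubini together with Gaussian Bayes reversal $h\mid Z\sim N(Z,I_0^{-1})$, equivalently $h=Z+Z^*$ with $Z^*\sim N(0,I_0^{-1})$ independent of $Z$, identifies the Bayes rule in the limit experiment as
\[
 d^{\infty,*}_{P_0}(Z)=\mathbb I\!\bigl[\mathbb E^*\bigl[a_{01}\dot p_{L,P_0}[Z^*+Z]+a_{10}\dot p_{U,P_0}[Z^*+Z]\,\big|\,Z\bigr]\ge 0\bigr],
\]
with $U$ entering only through arbitrary tie-breaking on the $\mathbb P_h$-null event where the conditional expectation equals zero (Assumption~\ref{a:binary}.1(b)).

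To match this with $d_{b,mm}$, note that at a boundary $P_0$ the forecast reads $d_{b,mm}(X_n)=\mathbb I[\sqrt n\!\int\!(a_{01}p_L(P)+a_{10}p_U(P)-a_{01})\,\mathrm d\Pi_n(P|X_n)\ge 0]$. Assumption~\ref{a:binary}.3(a) applied to $f=(p_L,p_U)$ yields
\[
 \sqrt n\!\int\!\bigl(a_{01}p_L(P)+a_{10}p_U(P)-a_{01}\bigr)\,\mathrm d\Pi_n(P|X_n)\;\overset{P_{n,h}}{\rightsquigarrow}\;\mathbb E^*\!\bigl[a_{01}\dot p_{L,P_0}[Z^*+Z]+a_{10}\dot p_{U,P_0}[Z^*+Z]\,\big|\,Z\bigr],
\]
so the continuous mapping theorem (using Assumption~\ref{a:binary}.1(b) to control the atom at zero) gives $d_{b,mm}(X_n)\rightsquigarrow d^{\infty,*}_{P_0}(Z)$ under $\{F_{n,P_{n,h}}\}$. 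Any $\tilde d_n$ asymptotically equivalent to $d_{b,mm}$ inherits the same weak limit, hence the same limiting integrated excess risk, and thus attains the infimum, proving~(i).

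The main obstacle is the gap created by directional rather than full differentiability of $(p_L,p_U)$: the posterior limit $\mathbb E^*[\dot f_{P_0}[Z^*+Z]\mid Z]$ differs from the sampling limit $\dot f_{P_0}[Z]$ whenever $\dot f_{P_0}[\cdot]$ is nonlinear, which is precisely what renders plug-in rules strictly suboptimal and what Assumption~\ref{a:binary}.3 is designed to supply. A secondary technical burden is justifying the $\lim_n$/\,$\int\mathrm dh$ interchange globally on $\mathcal P$ using \eqref{eq:reverse.fatou} together with the fast posterior concentration of Assumption~\ref{a:binary}.2(b). Part~(ii) on minimax regret follows the same blueprint: the excess regret involves the positive-part functionals $(a_{01}-(a_{01}+a_{10})p_L(P))_+$ and $((a_{01}+a_{10})p_U(P)-a_{01})_+$ from \eqref{eq:regret.binary.pointwise}, and Assumption~\ref{a:binary}.3(b) supplies the posterior $\delta$-method for these composed functionals, which is precisely what enters \eqref{eq:efficient.robust.binary.regret}; the identification of the limiting Bayes rule and its matching with $d_{b,mmr}$ then proceed analogously.
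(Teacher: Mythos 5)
Your proposal takes essentially the same route as the paper's proof: the asymptotic representation theorem in Le Cam's limit experiment, identification of the flat-prior Bayes rule $\mathbb{I}\bigl[\mathbb E^*[a_{01}\dot p_{L,P_0}[Z^*+Z]+a_{10}\dot p_{U,P_0}[Z^*+Z]\mid Z]\ge 0\bigr]$ as optimal there, matching of $d_{b,mm}$ (resp.\ $d_{b,mmr}$) to that rule via the posterior $\delta$-method of Assumption \ref{a:binary}.3, and the Fatou/reverse-Fatou sandwich under condition (\ref{eq:reverse.fatou}) in the Hirano--Porter style lemma. The only minor difference is that in part (i) the paper applies Assumption \ref{a:binary}.3(a) directly to the Borel set $\{(x,y): a_{01}x+a_{10}y\ge 0\}$, so the no-atom condition \ref{a:binary}.1(b) that you invoke via the continuous mapping theorem is not needed for the minimax-risk case.
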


\begin{remark}\label{remark:bayes} \normalfont
	The asymptotic efficiency extends to Bayes forecasts derived under priors that differ from the ``objective'' prior that is used to compute the integrated risk but assign positive density in the neighborhood of $P_0$. In large samples, the posterior is dominated by the likelihood function and the shape of the prior density does not affect the asymptotic form of the posterior distribution. The optimality result also extends to forecasts derived under a misspecified likelihood function, as long as this likelihood function leads to a large-sample posterior that reproduces the asymptotic form of the posterior under the ``true'' likelihood function.  $\Box$
\end{remark}

\begin{remark}\label{remark:bagging} \normalfont
	Given the asymptotic equivalence of posterior distributions of parameters and the bootstrap distributions of their efficient estimators,\footnote{This equivalence carries over to directionally differentiable function (see, e.g., \cite{KitagawaOleaPayneVelez2020}).} bagged estimators that replace posterior averaging with averaging across the bootstrap distribution of an efficient estimator of $P$ will yield asymptotically efficient forecasts under suitable modification of the above regularity conditions. $\Box$
\end{remark}

\begin{remark} \normalfont
	The optimal forecasting problem with $\mathcal D = \{0,1\}$ has a similar form to the optimal treatment problem studied by \cite{HiranoPorter2009} and our proofs follow similar arguments. In their setting, the oracle treatment rule is of the form $\mathbb I[g(P_0) \geq 0]$ where $g$ is (fully) differentiable. Their asymptotically efficient rule replaces $g(P_0)$ by $g(\hat P)$ where $\hat P$ is an efficient estimator of $P_0$. When $p_L$ and $p_U$ are directionally, rather than fully, differentiable, the optimal forecasts that we derive are of a different form from plugging $\hat P$ into the oracle rules. This difference arises because 
	\[
	\int \dot f_{P_0}[\sqrt n(P - P_0)] \, \mathrm d \Pi_n(P|X_n) \neq \dot f_{P_0}\left[\int \sqrt n(P - P_0) \, \mathrm d \Pi_n(P|X_n)\right] 
	\] 
	under directional differentiability. 
	If $p_L$ and $p_U$ are fully differentiable then both sides of the above display are equal and plugging in $\hat P$ into the oracle rule is asymptotically efficient. $\Box$
\end{remark}

\noindent
As we formalize in Proposition~\ref{prop:inefficient.binary} below, asymptotic equivalence to $d_{b,mm}$ (respectively, $d_{b,mmr}$) is a \emph{necessary} condition for a forecast $\tilde d_n$ to be asymptotically efficient-robust under minimax risk (respectively, regret) under a side condition ensuring that ties occur with probability zero. 

In view of Definition \ref{def:asymp.equiv.forec}, we say that asymptotic equivalence fails at $P_0$ if there exists some $h_* \in \mathbb R^k$ for which $d_n(X_n)$ and $d_{b,mm}(X_n)$ have \emph{different} asymptotic distributions under the sequence of measures $\{F_{n,P_{n,h_*}}\}_{n \geq 1}$ with $P_{n,h_*} = P_0 + n^{-1/2} h_*$. A proof for the following proposition is provided in Appendix~\ref{appsec:proofs}.

\begin{proposition}\label{prop:inefficient.binary}
(i) Let Assumption \ref{A1} and parts (a) of Assumption \ref{a:binary} hold, let $d_{b,mm}$ satisfy condition (\ref{eq:reverse.fatou}), and let $\{\tilde d_n\}_{n \geq 1} \in \mathbb D$  be a sequence of forecasts for which $\tilde d_n$ and $d_{b,mm}$ are \emph{not} asymptotically equivalent. Then for any $P_0$ at which asymptotic equivalence of $\tilde d_n$ and $d_{b,mm}$ fails:
\[
 \liminf_{n \to \infty} \Delta \mathcal B^n_{b,mm}(\tilde d_n;P_0,\pi) >  \inf_{\{d_n\} \in \mathbb D}  \liminf_{n \to \infty} \Delta \mathcal B_{b,mm}^n(d_n;P_0,\pi) 
\]
provided either (a) or (b) holds: \\
(a) $a_{01} p_{L}(P_0) + a_{10} p_{U}(P_0) \neq a_{01}$, \\
(b) $a_{01} p_{L}(P_0) + a_{10} p_{U}(P_0) = a_{01}$ and $\mathbb E^*\big[a_{01} \dot p_{L,P_0}[Z^* + Z] + a_{10} \dot p_{U,P_0}[Z^* + Z] \,| \, Z \big] \neq 0$ a.e.\\[4pt]
(ii) Let Assumptions \ref{A1} and \ref{a:binary} hold, let $d_{b,mmr}$ satisfy condition (\ref{eq:reverse.fatou}), and let $\{\tilde d_n\}_{n \geq 1} \in \mathbb D$ be a sequence of forecasts for which $\tilde d_n$ and $d_{b,mmr}$ are \emph{not} asymptotically equivalent. Then for any $P_0$ at which asymptotic equivalence of $\tilde d_n$ and $d_{b,mmr}$ fails:
\[
 \liminf_{n \to \infty} \Delta \mathcal B_{b,mmr}^n(\tilde d_n;P_0,\pi) > \inf_{\{d_n\} \in \mathbb D}  \liminf_{n \to \infty} \Delta \mathcal B_{b,mmr}^n(d_n;P_0,\pi) 
\]
provided either (a), (b), or (c) holds with $a = \frac{a_{01}}{a_{01} + a_{10}}$: \\
(a) $p_L(P_0) + p_U(P_0) \neq 2 a$, \\
(b) $p_L(P_0) + p_U(P_0) = 2 a$, $p_U(P_0) > p_L(P_0)$, and $\mathbb E^*\big[ \dot p_{L,P_0}[Z^* + Z] + \dot p_{U,P_0}[Z^* + Z] \,| \,Z \big] \neq 0$ a.e.,\\
(c) $p_L(P_0) = p_U(P_0) = a$ and $\mathbb E^* \big[(\dot p_{L,P_0}[Z^* + Z])_- + (\dot p_{U,P_0}[Z^* + Z])_+ \,| \, Z \big] \neq 0$ a.e.
\end{proposition}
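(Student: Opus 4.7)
The plan is to obtain a variational characterization of $\liminf_n \Delta \mathcal B^n_{b,mm}(\,\cdot\,;P_0,\pi)$ (respectively, for regret) in the limit experiment, identify $d_{b,mm}$ (respectively $d_{b,mmr}$) as the essentially unique minimizer of a pointwise integrand under the listed side conditions, and then conclude that any asymptotically non-equivalent sequence must assign positive mass to a strictly suboptimal action on a set of positive measure, yielding the strict inequality.

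First I would apply the asymptotic representation theorem of \cite{Vandervaart1991} under Assumption \ref{A1} to associate with $\{\tilde d_n\} \in \mathbb D$ a limit decision $\tilde d^\infty_{P_0}(Z,U)$ with $Z \sim N(h,I_0^{-1})$ and $U \sim \,$Uniform$[0,1]$ independent of $Z$. Repeating the expansion used in the proof of Theorem \ref{t:optimal.binary} and invoking condition (\ref{eq:reverse.fatou}) together with Fatou's lemma, I would obtain
\[
 \liminf_{n \to \infty} \Delta \mathcal B^n_{b,mm}(\tilde d_n;P_0,\pi) \;\geq\; \pi(P_0) \int \E_h\!\left[\Psi_{P_0}\!\bigl(\tilde d^\infty_{P_0}(Z,U);h\bigr)\right]\,\mathrm d h\,,
\]
with equality achieved along $d_{b,mm}$ by Theorem \ref{t:optimal.binary}. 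Outside the boundary set, $\Psi_{P_0}(d;h)$ is $\sqrt{n}$ times the $O(1)$ excess in (\ref{eq:r.d.minimax}); on the boundary, a first-order expansion of $a_{01}p_L(P_{n,h}) + a_{10}p_U(P_{n,h}) - a_{01}$ together with Assumption \ref{a:binary}.3(a) yields the reduced form $\Psi_{P_0}(d;h) = \lvert a_{01}\dot p_{L,P_0}[h] + a_{10}\dot p_{U,P_0}[h]\rvert \, \I\bigl[d \neq d^o_{mm}(P_{n,h})\bigr]$. By Fubini and the posterior characterization in Assumption \ref{a:binary}.3, the pointwise minimizer in $(Z,U)$ of the corresponding posterior-conditional version of $\Psi_{P_0}$ is precisely $d_{b,mm}$.

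For part (i) under condition (a), continuity of $p_L$ and $p_U$ forces $\sgn\bigl(a_{01}p_L(P_{n,h}) + a_{10}p_U(P_{n,h}) - a_{01}\bigr)$ to be constant on bounded $h$-sets for $n$ large. Consequently both $d^o_{mm}(P_{n,h})$ and $d_{b,mm}(X_n)$ converge in $F_{n,P_{n,h}}$-probability to a common deterministic value $d^\star \in \{0,1\}$, whereas by hypothesis $\tilde d^\infty_{P_0}$ differs in $\PR_{h_*}$-law from $d^\star$ at some $h_*$. Disagreeing at $h$ incurs an $O(1)$ risk excess of $\lvert a_{01}-a_{01}p_L(P_0)-a_{10}p_U(P_0)\rvert > 0$, so $\sqrt n\,\E_{P_{n,h_*}}[\Delta \mathcal R_{mm}(\tilde d_n,P_{n,h_*})] \to \infty$. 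Integrating over a neighborhood of $h_*$ in the interior of $\mathcal P$ produces strict inequality against the finite optimum supplied by Theorem \ref{t:optimal.binary}. Under condition (b) the $O(1)$ gap vanishes and the $\sqrt n$-order boundary integrand applies; Assumption \ref{a:binary}.3 replaces the linear form $a_{01}\dot p_{L,P_0}[h] + a_{10}\dot p_{U,P_0}[h]$ by $\E^*[a_{01}\dot p_{L,P_0}[Z^*+Z] + a_{10}\dot p_{U,P_0}[Z^*+Z]\mid Z]$, and the limit Bayes rule is the indicator that this conditional expectation is nonnegative. The standing a.e.\ nonvanishing assumption rules out ties, so the pointwise minimizer of the posterior expected $\Psi_{P_0}$ is $\PR_{h_*}$-a.e.\ unique; any $\tilde d^\infty_{P_0}$ with a different $\PR_{h_*}$-law assigns positive probability to the suboptimal action on a set of positive measure, and integrating against $\pi$ delivers strict inequality.

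Part (ii) follows the same scheme with the regret integrand obtained from (\ref{eq:regret.binary.pointwise}), using Assumption \ref{a:binary}.3(b) to pass the $(\,\cdot\,)_+$ operations through the posterior in the limit. Cases (a) and (b) are handled exactly as before with the appropriate regret statistic replacing $a_{01}p_L + a_{10}p_U - a_{01}$. The main obstacle, and the delicate step of the argument, is case (c) in which $p_L(P_0)=p_U(P_0)=a$: the regret decomposes locally into two nonnegative pieces $(a-p_L(P))_+$ and $(p_U(P)-a)_+$ that vanish simultaneously at $P_0$, so the limit integrand is driven by $\E^*[(\dot p_{L,P_0}[Z^*+Z])_-\mid Z]$ and $\E^*[(\dot p_{U,P_0}[Z^*+Z])_+\mid Z]$ separately and one must exclude cancellation between them. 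Precisely the hypothesis that $\E^*[(\dot p_{L,P_0}[Z^*+Z])_- + (\dot p_{U,P_0}[Z^*+Z])_+\mid Z]$ is nonzero a.e.\ ensures that the pointwise minimizer of the limit posterior expected regret is a.e.\ unique, and the strict inequality then follows as in part (i).
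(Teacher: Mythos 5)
Your proposal is correct and takes essentially the same route as the paper's own proof: the asymptotic representation theorem plus Fatou and the Theorem \ref{t:optimal.binary} benchmark reduce the claim to a strict inequality between integrated limit risks (regrets) of the limit representations, which you settle off the boundary via the $+\infty$-versus-$0$ dichotomy and on the boundary via a.e.\ uniqueness of the pointwise-in-$z$ Bayes action under the stated nonvanishing conditions, exactly as in the paper. The only step you leave implicit is that failure of asymptotic equivalence at a single $h_*$ must be upgraded to disagreement (hence strictly positive excess) on a set of $h$ of positive Lebesgue measure; the paper supplies this from continuity of $h \mapsto \mathbb{E}_h[\tilde d^\infty(Z)]$ (via H\"older's inequality for the Gaussian shift family), and with that noted your argument goes through.
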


\paragraph*{Example 7: (continued).}  The oracle forecast under the minimax \emph{risk} criterion was given in display (\ref{eq:ex7.dmmoP}).
In order to compute the Bayesian robust forecast we need to evaluate
\begin{eqnarray*}
	\lefteqn{\int (p_L(P) + p_U(P)) \, \mathrm d \Pi_n(P|X_n)} \\
	&=& \frac{1}{\sqrt{2\pi}} \int_{-\infty}^0 \big[ 1/2 + n^{-1/2} h + 1/2 \big] \exp \left\{ - \frac{1}{2}(h-\hat{h})^2\right\} \mathrm dh \\
	&& + \frac{1}{\sqrt{2\pi}} \int_0^{+\infty} \big[ 1/2 + n^{-1/2} h + 1/2+2 n^{-1/2}h \big] \exp \left\{ - \frac{1}{2}(h-\hat{h})^2\right\} \mathrm dh \\   
	&=& 1+ n^{-1/2} \bigg[  \hat{h} + 2\Phi_N(\hat{h}) \hat{h}  + 2 \phi_N(\hat{h})  \bigg] \,,
\end{eqnarray*}
where $\hat h = \sqrt n (\hat P - P_0)$ with $P_0 = \frac{1}{2}$ and $\Phi_N$ and $\phi_N$ denote the standard normal cdf and pdf. As $\hat P$ is a sufficient statistic, the forecasts only depend on the data $X_n$ though $\hat P$ or, equivalently, through $\hat h$.
We may deduce that 
\be
d_{b,mm}(\hat{h}) = \mathbb{I} \left[  \hat{h} \ge - \frac{2 \phi_N(\hat{h})}{1+2\Phi_N(\hat{h})}  \right].
\ee
Note that the term on the right-hand side of the inequality in the indicator function is always negative.
The plug-in forecast is obtained by replacing the unknown $P$ by $\hat P$, leading to $d^{plug}_{b,mm}(\hat P) = \mathbb I[p_L(\hat P) + p_U(\hat P) \geq 1]$. In the present example, the plug-in forecast can be expressed equivalently in terms of $\hat h$:
\be
d_{b,mm}^{plug}(\hat{h}) = \mathbb{I} \left[  \hat{h} \ge 0  \right].
\ee
The plug-in forecast is not asymptotically equivalent to the Bayesian robust forecast. In particular, the Bayesian robust forecast predicts $Y = 1$ more aggressively than the plug-in forecast. It can be verified by direct calculation in this example that this more aggressive forecast is asymptotically efficient-robust. It also follows from Proposition \ref{prop:inefficient.binary} that the plug-in forecast is not asymptotically efficient-robust. This is seen by noting that $a_{01} = a_{10} = 1$, $p_{L}(P_0) + p_U(P_0) = 1$, $\dot p_{L,P_0}[h] = h$, $\dot p_{U,P_0}[h] = 2(h)_+$, and so $\mathbb E^*\big[a_{01} \dot p_{L,P_0}[Z^* + Z] + a_{10} \dot p_{U,P_0}[Z^* + Z] \, \big| \,Z \big]$ reduces to $Z + 2\mathbb E^*[(Z^* + Z)_+ \,| \,Z ]$ which is nonzero almost everywhere.

\begin{figure}[t]
	\begin{center}
		\includegraphics[width = .65\textwidth]{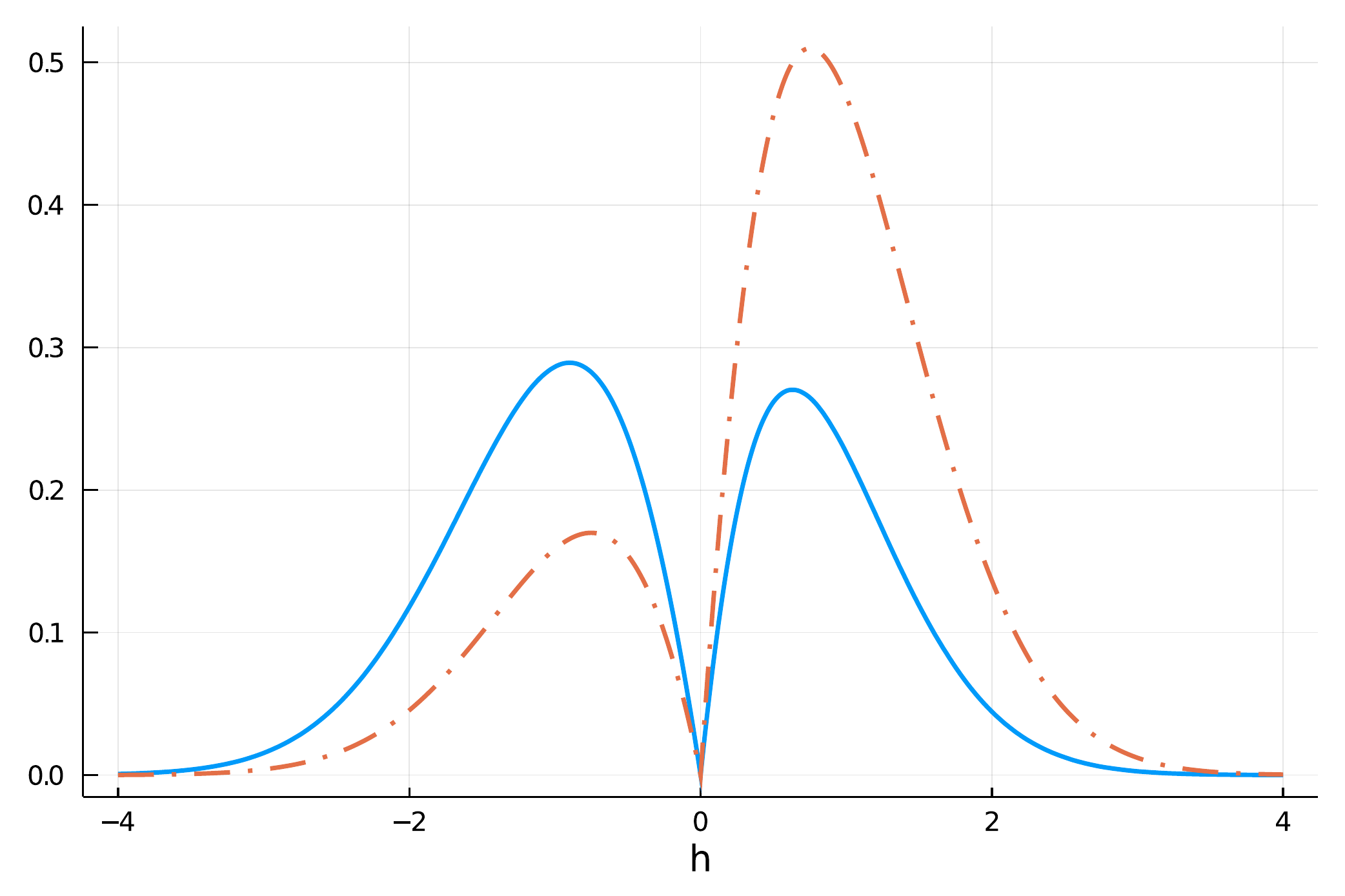}
		\parbox{14cm}{\caption{\small \label{f:limit_risk} Frequentist excess maximum {risk} in the limit experiment of the efficient robust forecast $d_{b,mm}$ (solid line) and the forecast $d^{plug}_{mm}$ based on plugging in an efficient estimator of $P$ (dot-dashed line) as a function of the location parameter $h_0$ for the Example~7.}}
	\end{center}
\end{figure}

To quantify the inefficiency of $d_{b,mm}^{plug}(\hat{h})$ relative to $d_{b,mm}(\hat{h})$, 
straightforward algebraic manipulations using (\ref{eq:ex7.freqrisk}) allow us to derive formulas for the frequentist excess maximum risk of $d_{b,mm}(\hat{h})$ and $d_{b,mm}^{plug}(\hat{h})$ as a function of $h_0$. The results are plotted in Figure~\ref{f:limit_risk}. The plug-in forecast is inferior to the Bayesian robust forecast from an integrated risk perspective: the area under the curve corresponding to $d_{b,mm}$ is around 20\% smaller than that under the curve corresponding to the plug-in forecast. While $d_{b,mm}$ was designed to be optimal from an integrated risk perspective, it also dominates the plug-in forecast from a minimax perspective: the maximum excess maximum risk of the plug-in forecast in the limit experiment is around 75\% larger than that of $d_{b,mm}$. 

\begin{figure}[t]
	\begin{center}
		\includegraphics[width = .65\textwidth]{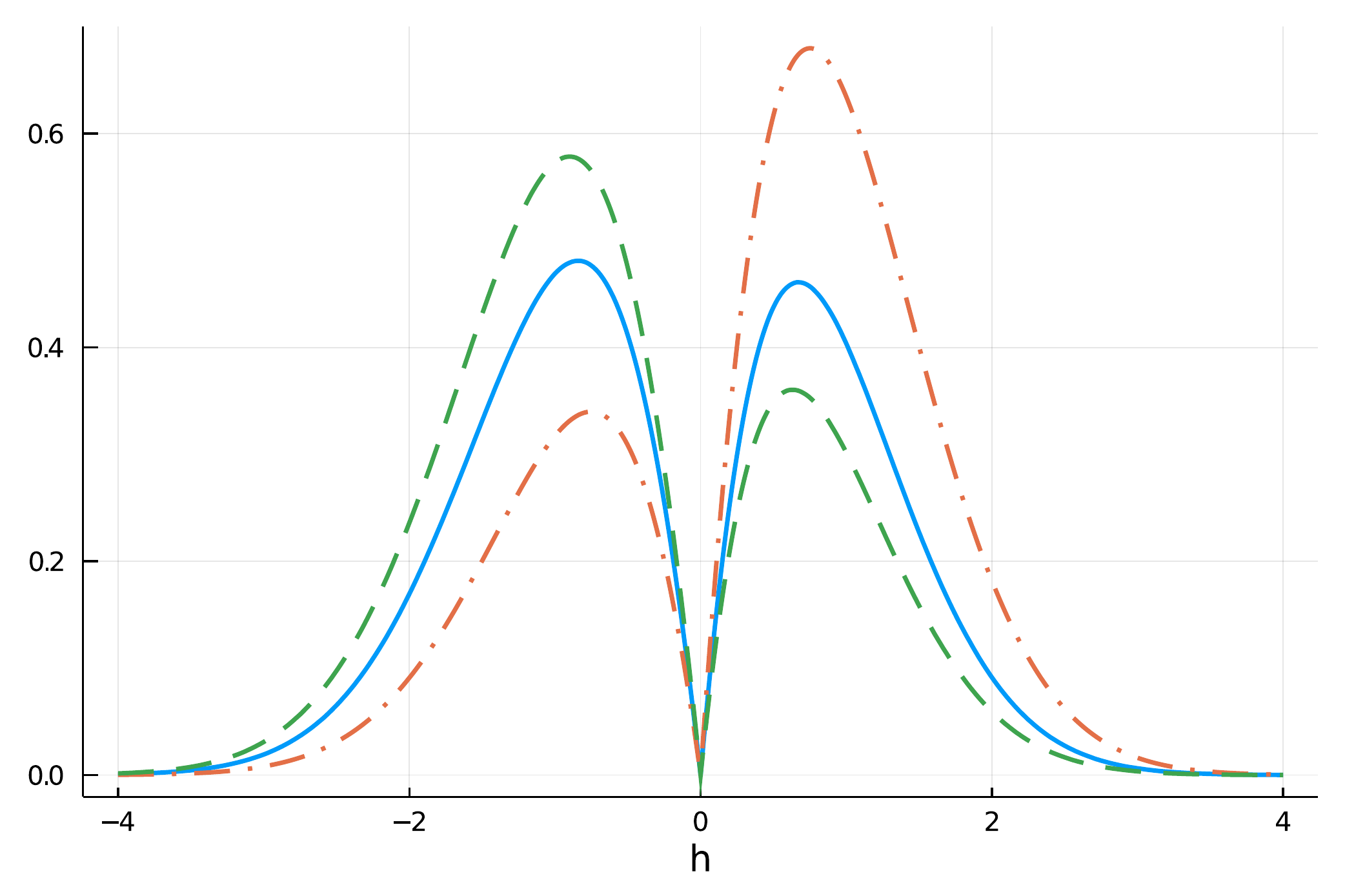}
		\parbox{14cm}{\caption{\small \label{f:limit_regret} Frequentist excess maximum {regret} in the limit experiment of the efficient robust forecast $d_{b,mmr}$ (solid line), the forecast $d^{plug}_{mmr}$ based on plugging an efficient estimator of $P$ (dot-dashed line), and the forecast $d_{b,mm}$ (dashed line) as a function of the location parameter $h_0$ for the Example~7.}}
	\end{center}
\end{figure}

Similar calculations can be made under the \emph{regret} criterion. The oracle forecast is of the form $d^o_{b,mmr}(P) = \mathbb{I} [  (\frac{1}{2} - p_L(P) )_+ \leq ( p_U(P) - \frac{1}{2} )_+ ]$. Similar calculations as for the risk criterion can be used to obtain a formula for the Bayesian robust forecast. In turns out that the plug-in forecast remains unchanged. 
Frequentist excess maximum regrets as a function of $h_0$ are plotted in Figure \ref{f:limit_regret}. Again, the plug-in forecast is not asymptotically efficient-robust and dominated by the Bayesian efficient robust forecast once we average across $h_0$. Its integrated excess maximum regret is around 8\% smaller and maximum excess maximum regret is around 41\% smaller. Also shown is the excess maximum regret of a forecast which plugs the posterior means of $p_L(P)$ and $p_U(P)$ into the oracle: $d^\dagger(X_n) = \mathbb{I} [  (\frac{1}{2} - \int p_L( P) \, \mathrm d \Pi_n(P|X_n) )_+ \leq ( \int p_U( P) \, \mathrm d \Pi_n(P|X_n)  - \frac{1}{2} )_+ ]$. This forecast is equivalent to the minimax forecast $d_{b,mm}$ and is therefore optimal for minimizing integrated excess maximum  risk but not necessarily integrated excess maximum regret. Figure \ref{f:limit_regret} shows that $d_{b,mmr}$ also dominates $d^\dagger$ in terms of both its average (2.5\% smaller) and maximum (21\% smaller) excess maximum regret in the limit experiment.  $\Box$  

\begin{remark}\normalfont
Consider the numerical example from Section \ref{subsec:example}. The optimization problems $p_U(P)$ and $p_L(P)$ can be recast as the value of max-min and min-max problems in which the reduced-form parameter $P$ enters the objective function. As is well known \citep{MilgromSegal2002,Shapiro2008}, the value of max-min and min-max problems is typically only directionally, rather than fully, differentiable.   $\Box$
\end{remark}

\subsection{Multinomial Forecasts}

We now turn to extending the asymptotic efficiency result to multinomial forecasts that are asymptotically equivalent to the Bayesian robust forecast from Section \ref{subsec:multinomial.efficient}. To do so, we first state some additional regularity conditions.

\begin{assumption}\label{a:multinomial}
	\hspace*{1cm}\\[-3ex]
	\begin{enumerate}[nosep]
		\item\begin{enumerate}[nosep]
			\item The functions $\underline p_0,\ldots,\underline p_M$ are everywhere continuous and everywhere directionally differentiable;
			\item The functions $\Delta p_0,\ldots,\Delta p_M$ are everywhere continuous and everywhere directionally differentiable;
		\end{enumerate}
		\item The posterior for $P$ is consistent;
		\item \begin{enumerate}[nosep]
			\item At any $P_0 \in \mathcal P$ with $\underline p_m(P_0) = \underline p_{m'}(P_0)$ for some $m' \neq m$ and $\underline p_m(P_0) \geq \underline p_k(P_0)$ for all $k \in \{0,\ldots,M\}$,  for any Borel set $A$ we have
			\[
			\lim_{n \to \infty} F_{n,P_{n,h}}\left(\int \sqrt n(f(P) - f(P_0)) \, \mathrm d \Pi_n(P|X_n) \in A \right) = \mathbb P_h\left( \mathbb E^*[ \dot f_{P_0}[Z^* + Z] |Z ] \in A \right) 
			\]
			with  $f = (\underline p_0,\ldots,\underline p_M)$;
			\item At any $P_0 \in \mathcal P$ with $\Delta p_m(P_0) = \Delta p_{m'}(P_0)$ for some $m' \neq m$ and $\Delta p_m(P_0) \leq \Delta p_k(P_0)$ for all $k \in \{0,\ldots,M\}$,  for any Borel set $A$ we have
			\[
			\lim_{n \to \infty} F_{n,P_{n,h}}\left(\int \sqrt n(f(P) - f(P_0)) \, \mathrm d \Pi_n(P|X_n) \in A \right) = \mathbb P_h\left( \mathbb E^*[ \dot f_{P_0}[Z^* + Z] |Z ] \in A \right) 
			\]
			with  $f = (\Delta p_0,\ldots,\Delta p_M)$;
		\end{enumerate}
	\end{enumerate}
\end{assumption}

\noindent
Assumption \ref{a:multinomial} is similar to Assumption \ref{a:binary}. In particular, a heuristic justification for Assumption \ref{a:multinomial}.3 follows similar reasoning to that presented earlier for Assumption \ref{a:binary}.3.

We now present the asymptotic efficiency results for multinomial forecasts. Let $\Delta \mathcal B_{c,mm}^n(\,\cdot\,;P_0,\pi)$ and  $\Delta \mathcal B_{c,mmr}^n(\,\cdot\,;P_0,\pi)$ denote integrated excess maximum risk and regret (see display (\ref{e:Bmmh})) for classification loss $\ell_c$ from (\ref{eq:loss.classification}). Also let $\mathbb D$ denote the set of all sequences of $\{0,\ldots,M\}$-valued forecasts that converge in the sense of (\ref{eq:sequence.D}).

\begin{theorem}\label{t:optimal.multinomial}
	(i) Let Assumption \ref{A1} and Assumption \ref{a:multinomial}.1(a), \ref{a:multinomial}.2, and \ref{a:multinomial}.3(a) hold and let $\tilde d_n$ be asymptotically equivalent to $d_{c,mm}$ and satisfy condition (\ref{eq:reverse.fatou}). Then: for all $P_0 \in \mathcal P$,
		\[
		\lim_{n \to \infty} \Delta \mathcal B_{c,mm}^n(\tilde d_n;P_0,\pi) = \inf_{\{d_n\} \in \mathbb D}  \liminf_{n \to \infty} \Delta  \mathcal B_{c,mm}^n(d_n;P_0,\pi) \,.
		\]
		(ii) Let Assumption \ref{A1} and Assumption \ref{a:multinomial}.1(b), \ref{a:multinomial}.2, and \ref{a:multinomial}.3(b) hold and let $\tilde d_n$ be asymptotically equivalent to $d_{c,mmr}$ and satisfy condition (\ref{eq:reverse.fatou}). Then: for all $P_0 \in \mathcal P$,
		\[
		\lim_{n \to \infty} \Delta  \mathcal B_{c,mmr}^n(\tilde d_n;P_0,\pi) = \inf_{\{d_n\} \in \mathbb D}  \liminf_{n \to \infty} \Delta  \mathcal B_{c,mmr}^n(d_n;P_0,\pi) \,.
		\]
\end{theorem}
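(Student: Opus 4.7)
The plan is to adapt the proof of Theorem~\ref{t:optimal.binary} to the multinomial setting, exploiting the argmax/argmin characterizations of $d_{c,mm}$ and $d_{c,mmr}$ in (\ref{eq:efficient.robust.classification}) and (\ref{eq:efficient.robust.classification.regret}). I start by rewriting
\[
\Delta \mathcal B_{c,mm}^n(d_n;P_0,\pi) = \int \mathbb E_{P_{n,h}}\big[\sqrt n \,\Delta\mathcal R_{c,mm}(d_n, P_{n,h}; X_n)\big] \pi(P_{n,h})\, \mathrm d h\,,
\]
where $\Delta\mathcal R_{c,mm}(d_n,P;X_n) = \max_m \underline p_m(P) - \underline p_{d_n(X_n)}(P)$. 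Under Assumption~\ref{A1}, the asymptotic representation theorem of \cite{Vandervaart1991} lets me pass any sequence $\{d_n\}\in\mathbb D$ to a limit-experiment representation $d^\infty_{P_0}(Z,U)$ with $Z\sim N(h,I_0^{-1})$.

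The next step is a case analysis at each $P_0$, according to whether ties occur in $\arg\max_m \underline p_m(P_0)$. If the argmax is a singleton $\{m^*\}$, then continuity of the $\underline p_m$ (Assumption~\ref{a:multinomial}.1(a)) plus posterior consistency (Assumption~\ref{a:multinomial}.2) imply that both $d_{c,mm}(X_n)$ and any asymptotically equivalent $\tilde d_n$ equal $m^*$ with $F_{n,P_{n,h}}$-probability tending to one, so the $\sqrt n$-scaled excess risk vanishes pointwise in $h$. If instead $\mathcal M^*:=\arg\max_m \underline p_m(P_0)$ is non-singleton, the same continuity argument forces any forecast with $O(1)$ scaled excess risk to select inside $\mathcal M^*$ eventually, and for $d\in\mathcal M^*$ directional differentiability gives
\[
\sqrt n\big(\max_m \underline p_m(P_{n,h}) - \underline p_d(P_{n,h})\big) \to \max_{m\in\mathcal M^*}\dot{\underline p}_{m,P_0}[h] - \dot{\underline p}_{d,P_0}[h]\,.
\]
By Assumption~\ref{a:multinomial}.3(a), the posterior-average centered processes $\int \sqrt n(\underline p_m(P)-\underline p_m(P_0))\,\mathrm d\Pi_n(P|X_n)$ converge jointly in $m$ to $\mathbb E^*[\dot{\underline p}_{m,P_0}[Z^*+Z]\mid Z]$, so the Bayes forecast $d_{c,mm}(X_n)$ is asymptotically distributed as $\arg\max_{m\in\mathcal M^*}\mathbb E^*[\dot{\underline p}_{m,P_0}[Z^*+Z]\mid Z]$. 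This decision minimizes the posterior expected limit loss at $P_0$, giving the limit-experiment lower bound; asymptotic equivalence then transfers the bound to $\tilde d_n$. Condition~(\ref{eq:reverse.fatou}) supplies the reverse Fatou step needed to interchange $\int \mathrm d h$ with the limit in $n$.

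Part~(ii) for regret is handled analogously, with $\underline p_m$ replaced by $-\Delta p_m$, $\arg\max$ by $\arg\min$, the tie set $\mathcal M^*=\arg\min_m \Delta p_m(P_0)$, and Assumption~\ref{a:multinomial}.3(b) in place of \ref{a:multinomial}.3(a). Continuity and directional differentiability of $\Delta p_m$ are guaranteed by Assumption~\ref{a:multinomial}.1(b). As in the risk case, away from ties the excess regret after $\sqrt n$-scaling vanishes, and on the tie set the limiting behavior of $d_{c,mmr}$ matches the Bayes-optimal action in the limit experiment.

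The principal obstacle is the tie set: on $\{P_0:|\mathcal M^*|\ge 2\}$ the oracle is genuinely indifferent, yet the $\sqrt n$-scaled excess risk is determined by the posterior distribution of the directional derivatives $\dot{\underline p}_{m,P_0}[\cdot]$, which are only positively homogeneous rather than linear. A naive plug-in $\arg\max_m \underline p_m(\hat P)$ would replace $\mathbb E^*[\dot{\underline p}_{m,P_0}[Z^*+Z]\mid Z]$ with $\dot{\underline p}_{m,P_0}[Z]$ and is generally strictly dominated, mirroring the phenomenon highlighted for the binary case in Example~7. The technical heart of the proof is therefore to verify that Assumption~\ref{a:multinomial}.3 correctly transports the posterior distribution of the centered directional differences through the argmax/argmin operators, and that the limiting joint law of these differences does not place mass on additional ties (an issue handled in the binary case by Assumption~\ref{a:binary}.1(b); a multinomial analogue can be extracted from the continuity of the directional derivatives together with the non-atomicity of $Z$).
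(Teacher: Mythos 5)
Your proposal is correct and follows essentially the same route as the paper: the published proof simply combines the limit-experiment representation and optimality results for $d_{c,mm}$ and $d_{c,mmr}$ (Lemmas~\ref{lem:asymptotic.multinomial.risk} and \ref{lem:asymptotic.multinomial.regret}, whose case analysis over singleton versus tied $\arg\max_m \underline p_m(P_0)$ and $\arg\min_m \Delta p_m(P_0)$ is exactly your decomposition) with the Fatou/reverse-Fatou argument of Lemma~\ref{lem:HP} under condition~(\ref{eq:reverse.fatou}), just as you outline. Your closing concern about mass on additional ties is already absorbed by the paper's Assumption~\ref{a:multinomial}.3, which requires the posterior $\delta$-method convergence for \emph{all} Borel sets, so no multinomial analogue of Assumption~\ref{a:binary}.1(b) is needed.
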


\noindent
As with Remark \ref{remark:bagging}, bagged forecasts in which the posterior distribution is replaced with the bootstrap distribution of an efficient estimator of $P$ can be shown to be asymptotically efficient-robust under a suitable modification of the regularity conditions. As with Proposition \ref{prop:inefficient.binary}, it is possible to show that forecasts that are not asymptotically equivalent to the $d_{c,mm}$ and $d_{c,mmr}$ are not asymptotically efficient-robust under side conditions ruling out ties.

\section{Conclusion}
\label{sec:conclusion}

In this paper we proposed use of robust forecasts that are obtained by solving a minimax risk or minimax regret problem to deal with uncertainty about the forecast distribution. We also derived asymptotically efficient-robust forecasts that deal with the estimation of the set of forecast distributions. In addition to being useful for forecasting binary and multinomial outcomes,  these methods have wide applicability in environments in which a forecaster is concerned about structural breaks, model misspecification, or a policy maker has to make treatment assignments.

{
	\singlespacing
	\bibliographystyle{chicago}
	\bibliography{ref_partial-id-forecasting}{}
}


\clearpage

\renewcommand{\thepage}{A.\arabic{page}}
\setcounter{page}{1}

\begin{appendix}
	\markright{Online Appendix -- This Version: December 10, 2020 }
	\renewcommand{\theequation}{A.\arabic{equation}}
	\setcounter{equation}{0}
	
	\begin{center}
		
		{\Large {\bf Online Appendix: Robust Forecasting}}
		
		{\bf Timothy Christensen, Hyungsik Roger Moon, and Frank Schorfheide}
	\end{center}
	
	\section{Computation}
	\label{appsec:computation}

	The challenge in implementing the minimax and minimax regret forecasts is to solve the extremum problems $p_L$ and $p_U$ from (\ref{eq:p.lower}) and (\ref{eq:p.upper}) in the binary case, or $\underline p_m$  and $\Delta p_{m}$ from (\ref{eq:p.lower.multinomial}) and (\ref{eq:dp.multinomial}) in the multinomial case.
	
	We show how to compute these quantities in a class of models in which (i) vector of model parameters $\theta$ may be partitioned as $\theta = (\phi,\Pi)$, where $\phi$ is a low-dimensional parameter and $\Pi $ is a probability measure, and (ii) both the forecast probabilities and restrictions defining the set $\Theta_0$ are linear in $\Pi$. This nests semiparametric panel data models we study (Examples 1--4) and several other models, such as game-theoretic models (Example 5). In the next subsection, we show how linear programming techniques similar to \cite{HonoreTamer2006} may be used when the support of $\Pi$ is discrete.	Subsection~\ref{subsec:binary.computation.continuous} studies the continuous case.

	\subsection{Computing Extreme Probabilities: the Discrete Case}
	\label{subsec:binary.computation.discrete}

	\subsubsection{Binary forecasts}

	We consider a class of problems where the forecast probabilities and restrictions that define the  set $\Theta_0$ are linear in $\Pi$, where $\Pi$ has discrete support. We can identify $\Pi$ with a vector $\pi \in \Delta^{L-1}$, where $L$ is the number of points of support of $\Pi$ and $\Delta^{L-1} = \{x \in \mathbb R^L_+ : \sum_{i=1}^L x_i = 1\}$. We further assume that we can write the forecast probability as
	\be
	b(\phi)'\pi \,, \label{eq:m.p}
	\ee
	where $b(\phi)$ is a $L$-vector that may depend on the homogeneous parameters, and the restrictions defining $\Theta_0$ as
	\be
	G(\phi) \pi = r \,, \label{eq:m.G}
	\ee
	where $G(\phi)$ is a $K \times L$ matrix and $r \in \mathbb R^K$.
	
	Consider, for example, the semiparametric panel data model (Example 1). In that setting, the low-dimensional parameter $\phi$ is $\beta$, the probability measure $\Pi$ is the joint distribution $\Pi_{\lambda,y}$ of $(\lambda_i,Y_{i0})$, and the parameter space is $\Theta = \{(\beta,\Pi_{\lambda,y})\}$. The identified set is the collection of all $(\beta,\Pi_{\lambda,y})$ such that the model-implied probabilities of observing each realization of $Y_i^T$ is equal to the population probability $p(y^T)$; see display (\ref{eq:idset:panel}). The model-implied probabilities are given by
	\[ 
	p(y^T | \beta , \Pi_{\lambda,y}) = \int p(y^T | y_0,\lambda;  \beta ) \, \mathrm{d} \Pi_{\lambda,y} (\lambda,y_0) \,,
	\]
	with $p(y^T | y_0,\lambda;  \beta )$ from display (\ref{eq:example.cond.prob}). Because $p(y^T|\beta,\Pi_{\lambda,y})=p(y^T)$ for any $\theta \in \Theta_0$, 
	the forecast probability given $Y_i^T = y^T$ is
	\[
	{\textstyle \PR_\theta} ( Y_{iT+1} = 1 |Y_i^T =y^T ) = \frac{ \int  \Phi(\beta y_{iT} + \lambda) p(y^T | y_0,\lambda;  \beta ) \mathrm{d}\Pi_{\lambda,y}(\lambda,y_0) }
	{ p(y^T ) } \,.
	\]

	Returning to the general case with forecast probabilities as in (\ref{eq:m.p}) and restrictions defining $\Theta_0$ as in (\ref{eq:m.G}), we can write $p_U$ as 
	\[
	p_U = \sup_\phi \left( \sup_{\pi \in \Delta^{L-1}} b(\phi)'  \pi \quad \mbox{s.t.} \quad  G(\phi) \pi = r \right) \,.
	\]
	As we show in the following proposition, the inner optimization over $\pi$ can be written as a linear program, simplifying computation.
	
	\begin{proposition} \label{prop:duality.binary}
		The program 
		\[
		p_\phi^{\phantom *} = \sup_{\pi \in \Delta^{L-1}} b(\phi)'  \pi \quad \mbox{s.t.} \quad  G(\phi) \pi = r
		\]
		has an equivalent dual formulation
		\[
		p^*_\phi = \inf_{ v \in \mathbb{R}^{K+1}}  \left[  0_{1 \times K} \, , \, 1 \right] v \quad \mbox{s.t.} \quad  A(\phi)  v \leq  - b(\phi) \,,
		\]
		where $A(\phi) = \left[ G(\phi)' - (1_{L \times 1} \otimes r') \, , \, - 1_{L \times 1} \right]$ with $\otimes$ denoting  the Kronecker product.
	\end{proposition}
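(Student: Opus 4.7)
My plan is to recognize the stated program as a reparameterization of the classical linear programming dual of the primal. First I would rewrite the primal by splitting the simplex constraint $\pi \in \Delta^{L-1}$ into $\pi \ge 0$ and $1_{L \times 1}' \pi = 1$:
\begin{equation*}
p_\phi = \sup_{\pi \ge 0} b(\phi)' \pi \quad \text{s.t.} \quad G(\phi) \pi = r, \;\; 1_{L \times 1}' \pi = 1.
\end{equation*}
Attaching multipliers $\lambda \in \mathbb{R}^K$ to the equality $G(\phi) \pi = r$ and $\mu \in \mathbb{R}$ to the normalization, the classical LP dual takes the form
\begin{equation*}
\inf_{(\lambda, \mu) \in \mathbb{R}^K \times \mathbb{R}} \; r' \lambda + \mu \quad \text{s.t.} \quad G(\phi)' \lambda + \mu \, 1_{L \times 1} \ge b(\phi).
\end{equation*}

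Next I would verify that the stated dual coincides with this LP dual after a change of variables. Partitioning $v = (u, s) \in \mathbb{R}^K \times \mathbb{R}$, the inequality $A(\phi) v \le -b(\phi)$ expands componentwise to $G_l(\phi)' u - (r' u + s) \le -b_l(\phi)$ for $l = 1, \ldots, L$. Applying the bijection $\lambda = -u$, $\mu = s + r' u$ (with inverse $u = -\lambda$, $s = \mu + r' \lambda$) rearranges each inequality into $G_l(\phi)' \lambda + \mu \ge b_l(\phi)$ and turns the objective $[0_{1 \times K}, 1] v = s$ into $r' \lambda + \mu$. Since this map is a bijection on $\mathbb{R}^{K+1}$ preserving feasibility and objective values, the two programs have equal infima.

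Finally I would invoke strong LP duality to conclude $p_\phi = p_\phi^*$. Whenever the primal is feasible, $b(\phi)' \pi$ is bounded above on $\Delta^{L-1}$ by $\max_l |b_l(\phi)| < \infty$, so the primal has a finite optimal value and strong LP duality yields equality directly. If the primal is infeasible then $p_\phi = -\infty$ by convention, and Farkas' lemma (equivalently, the LP duality theorem's unboundedness/infeasibility dichotomy) forces the dual to be unbounded below, so $p_\phi^* = -\infty$ as well. The only real obstacle is bookkeeping: keeping the sign conventions straight in the change of variables so that the constraint direction and objective both agree with the classical LP dual. Apart from this, the result is a direct consequence of LP duality.
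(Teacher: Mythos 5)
Your proof is correct, and it reaches the result by a somewhat different route than the paper. The paper proves the duality from scratch via a Lagrangian saddle-point argument: it first rewrites the constraint $G(\phi)\pi = r$ as $(G(\phi) - r \otimes 1_{1\times L}')\pi = 0$ using $1_{1\times L}\pi = 1$, attaches multipliers $(\mu,\zeta,\kappa)$ to the equality, normalization, and nonnegativity constraints, explicitly solves the inner problem for $\kappa$ and $\zeta$, arrives at the min-max $\inf_\mu \max_l \big(b_l(\phi) + \mu'(G_l(\phi)-r)\big)$, and only then obtains the stated LP by introducing an epigraph variable $t$ — which is precisely where the Kronecker shift in $A(\phi)$ comes from. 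You instead write down the textbook dual of the primal in standard form, $\inf_{\lambda,\mu}\, r'\lambda + \mu$ subject to $G(\phi)'\lambda + \mu\, 1_{L\times 1} \ge b(\phi)$, and verify that the stated program is the image of this dual under the affine bijection $\lambda = -u$, $\mu = s + r'u$ (your componentwise check of $A(\phi)v \le -b(\phi)$ and of the objective is correct), then invoke strong LP duality. Your route is shorter and leans on the standard duality theorem; it also handles the infeasible-$\phi$ case explicitly — the dual is always feasible (take $\lambda=0$, $\mu$ large), so primal infeasibility forces the dual to be unbounded below, matching the $-\infty$ convention — a point the paper relegates to a convention stated in the surrounding text rather than the proof. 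What the paper's derivation buys in exchange is self-containedness and a constructive explanation of the particular form of $A(\phi)$; what yours buys is brevity and a cleaner treatment of the feasibility dichotomy. No gap.
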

	
	\noindent
	In view of Proposition \ref{prop:duality.binary}, we may compute $p_U$ by solving
	\be
	p_U 
	= \sup_\phi \left(  \inf_{ v \in \mathbb{R}^{K+1}}  \left[  0_{1 \times K} \, , \, 1 \right] v \quad \mbox{s.t.} \quad  A(\phi)  v \leq  - b(\phi) \right) \,.\label{eq:extreme.dual3}
	\ee
	If $\phi$ is not feasible, i.e., if there does not exist $\pi \in \Delta^{L-1}$ solving (\ref{eq:m.G}), then the inner linear program returns no solution. In this case, we set the value of the inner minimization problem to $-\infty$.
	The smallest forecast probability $p_L$ is computed similarly:
	\be
	p_L  = \inf_\phi \left( \sup_{ v \in \mathbb{R}^{K+1}}  \left[  0_{1 \times K} \, , \, -1 \right]' v \quad \mbox{s.t.} \quad  A(\phi)  v \leq  b(\phi) \right) \,,  \label{eq:extreme.dual5}
	\ee
	where we set the value of the inner linear program to $+\infty$ if it has no solution.
	
	\subsubsection{Multinomial forecasts}
	
	In the multinomial case, we consider a setting in which $\Theta_0$ is defined as in display (\ref{eq:m.G}) for suitable $G(\phi)$ and the forecast probabilities of each of the outcomes $m = 0,1,\ldots,M$ can be written as 
	\[
	b_m(\phi)'\pi
	\]
	for each $m$.

	For minimax forecasts, the lower probabilities $\underline p_m$ from (\ref{eq:p.lower.multinomial}) are computed analogously to $p_L$, replacing $b(\phi)$ in (\ref{eq:extreme.dual5}) with $b_m(\phi)$:
	\[
	\underline p_m  = \inf_\phi \left( \sup_{ v \in \mathbb{R}^{K+1}}  \left[  0_{1 \times K} \, , \, -1 \right]' v \quad \mbox{s.t.} \quad  A(\phi)  v \leq  b_m(\phi) \right) \,,  
	\]
	for $m = 0,1,\ldots,M$, where we set the value of the inner linear program to $+\infty$ if it has no solution.
	For minimax regret forecasts, the terms $\Delta p_m$ from (\ref{eq:dp.multinomial}) can be computed analogously to (\ref{eq:extreme.dual3}). To do so, first note that for each $m' = 0,1,\ldots,M$ we can compute
	\[
	\sup_{\theta \in \Theta_0} \left( {\textstyle \PR_\theta} ( Y = m' ) - {\textstyle \PR_\theta} ( Y = m ) \right)
	\]
	by replacing the term $b(\phi)$ in (\ref{eq:extreme.dual3}) with $b_{m'}(\phi) - b_m(\phi)$. The value $ \Delta p_m$ is then the maximum over all such $m'$:
	\[
	\Delta p_m  = \max_{m'}\sup_\phi \left(  \inf_{ v \in \mathbb{R}^{K+1}}  \left[  0_{1 \times K} \, , \, 1 \right] v \quad \mbox{s.t.} \quad  A(\phi)  v \leq( b_m(\phi) - b_{m'}(\phi) )\right)  \,,
	\]
	where we again set the value of the inner linear program to $-\infty$ if it has no solution.

	\subsection{Computing Extreme Probabilities: the Continuous Case}
	\label{subsec:binary.computation.continuous}

	\subsubsection{Binary forecasts}

	We first consider a class of problems where the forecast probabilities and restrictions that define $\Theta_0$ are linear in $\Pi$, where $\Pi$ is a probability measure on $(X,\mathcal X)$ where $\mathcal X$ denotes the Borel $\sigma$-field on $X$. We  restrict $\Pi$ to have density with respect to some $\sigma$-finite dominating measure $\nu$ (e.g. Lebesgue measure) and identify each $\Pi$ with its density $\pi$ with respect to $\nu$.\footnote{This nests the previous discrete case by taking $X$ to be the set of $L$ points of discrete support for $\Pi$ and $\nu$ to be counting measure.} We consider a setting where forecast probabilities can be written analogously to (\ref{eq:m.p}) as
	\[
	\int b(x;\phi) \pi(x) \, \mathrm d \nu(x) \,,
	\]
	where $b(\,\cdot\,;\phi) : X \to \mathbb{R}$ is a bounded function for each $\phi$. We first consider a class of problems in which the set $\Theta_0$ is defined via a moment restriction similar to (\ref{eq:m.G}), namely
	\[
	\int g(x;\phi)\pi(x)\,  \mathrm d \nu(x) = r \,,
	\]
	where $g(\,\cdot\,;\phi) : X \to \mathbb{R}^K$ is a vector of moment functions. 
	
	The semiparametric panel data model (Example 1) is of this form, where we now relax the assumption of discrete support for $(\lambda,y_0)$ and allow the joint distribution $\Pi_{\lambda,y}$ to be an arbitrary distribution on $\mathbb R \times \{0,1\}$. The dominating measure $\nu$ is the product of Lebesgue measure on $\mathbb R$ and counting measure on $\{0,1\}$.

	Let $\Pi_\phi$ denote the set of all densities $\pi$ with respect to $\nu$, for which $\int g(x;\phi)\pi(x)\,  \mathrm d \nu(x) $ is finite and $\int \pi(x) \, \mathrm d \nu(x) = 1$. We then have
	\begin{equation} \label{eq:id.continuous}
	\Theta_0 = \left\{(\phi,\pi) : \pi \in \Pi_\phi\,, \int g(x;\phi)\pi(x)\,  \mathrm d \nu(x) = r \right\} \,.
	\end{equation}
	
	In this setting, we can write $p_U$ as 
	\[
	p_U = \sup_\phi \left( \sup_{\pi \in \Pi_\phi} \int b(x;\phi) \pi(x) \, \mathrm d \nu(x) \quad \mbox{s.t.} \quad \int g(x;\phi)\pi(x)\,  \mathrm d \nu(x) = r \right) \,.
	\]
	The inner optimization over $\pi$ has a dual program. Although this dual formulation does not simplify computation a great deal, it can be approximated by a more tractable, finite-dimensional convex program. In what follows, let $\mathrm{ri}(A)$ denote the relative interior of a set $A$. The following proposition collects results from \cite{CsiszarMatus2012} (for the dual formulation) and \cite{ChristensenConnault2019} (for the approximation by a finite-dimensional convex program).
	
	\begin{proposition} \label{prop:duality.binary.continuous}
		If 
		\[
		r \in \mathrm{ri} \left( \left\{ \int g(x;\phi) \pi(x) \, \mathrm d \nu(x) : \pi \in \Pi_\phi \right\}  \right) \,,
		\]
		then the program
		\[
		p^{\phantom *}_\phi = \sup_{\pi \in \Pi_\phi} \int b(x;\phi) \pi(x) \, \mathrm d \nu(x) \quad \mbox{s.t.} \quad \int g(x;\phi)\pi(x)\,  \mathrm d \nu(x) = r 
		\]
		has an equivalent dual formulation
		\[
		p^*_\phi = \inf_{\mu : \nu\textnormal{-}\mathrm{ess} \sup_{x}(b(x;\phi)+\mu'(g(x;\phi)-r)) < +\infty} \bigg( \nu\textnormal{-}\mathrm{ess}  \sup_{x} \left( b(x;\phi) + \mu'(g(x;\phi)-r) \right) \bigg) \,.
		\]
		In addition, if $\Pi^*$ has a strictly positive density $\pi \in \Pi_\phi$ and $\mathbb E^{\Pi_*}\left[ e^{c \|g(X;\phi)\|} \right]$ is finite for each $c \geq 0$, then
		\[
		p^*_\phi = \lim_{\delta \to \infty} \left( \sup_{\eta \geq 0, \mu} -\eta \log \mathbb E^{\Pi_*} \left[ e^{-\eta^{-1}\left(  b(X;\phi) + \mu'(g(X;\phi)-r) \right)} \right]  - \eta \delta \right) \,,
		\]
		where $\mathbb{E}^{\Pi_*}[\,\cdot\,]$ denotes expectation is taken under the distribution $\Pi_*$.
	\end{proposition}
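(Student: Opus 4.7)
The plan is to recognise the proposition as a compilation of two duality results from \cite{CsiszarMatus2012} and \cite{ChristensenConnault2019}, and to prove the two formulas in turn by invoking the main theorems of those papers once the primal has been put into their template.

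For the first dual formulation, I would proceed by Lagrangian duality on the space of $\nu$-densities. Weak duality is immediate: for any feasible $\pi \in \Pi_\phi$ and any $\mu$ for which $\nu\text{-}\mathrm{ess}\sup_x(b(x;\phi) + \mu'(g(x;\phi)-r))$ is finite, the pointwise inequality
\[
 b(x;\phi) + \mu'(g(x;\phi) - r) \;\leq\; \nu\text{-}\mathrm{ess}\sup_x\bigl(b(x;\phi) + \mu'(g(x;\phi) - r)\bigr)
\]
integrated against $\pi\,\mathrm d\nu$, combined with the feasibility constraint $\int (g - r)\pi\,\mathrm d\nu = 0$ and $\int \pi\,\mathrm d\nu = 1$, yields $\int b\,\pi\,\mathrm d\nu \leq \nu\text{-}\mathrm{ess}\sup_x(b + \mu'(g-r))$. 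Taking the infimum over admissible $\mu$ gives $p_\phi \leq p^*_\phi$. The nontrivial direction, strong duality, follows from the no-duality-gap theorem of \cite{CsiszarMatus2012}: the relative-interior hypothesis in the proposition is precisely the constraint qualification needed in their general moment-constrained framework to close the gap and obtain $p_\phi = p^*_\phi$.

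For the finite-dimensional approximation I would follow the recipe of \cite{ChristensenConnault2019}. Introduce the entropy-regularised program $p_\phi^\delta$ by adjoining the constraint $H(\Pi \| \Pi_*) \leq \delta$ to the primal, then split the argument into three parts. (a) Show $p_\phi^\delta \to p_\phi$ as $\delta \to \infty$, using that $\Pi_*$ has a strictly positive density so that KL-balls exhaust $\Pi_\phi$ in a topology compatible with integration of $b$. (b) Derive the strong dual of $p_\phi^\delta$ using the Donsker--Varadhan variational formula
\[
 \eta \log \mathbb E^{\Pi_*}\bigl[e^{X/\eta}\bigr] \;=\; \sup_{\Pi \ll \Pi_*}\bigl(\mathbb E^\Pi[X] - \eta H(\Pi \| \Pi_*)\bigr),
\]
which collapses the Lagrangian sup over densities into a log-MGF; the exponential-moment hypothesis on $\Pi_*$ guarantees this is finite for every $(\mu,\eta)$ appearing inside. (c) After reparametrising the Lagrange multiplier on the moment constraint and identifying $\eta \geq 0$ as the multiplier on the KL constraint, the resulting finite-dimensional program matches the one displayed in the proposition, and passing to $\delta \to \infty$ delivers the claim.

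The main obstacle is step (a): justifying $\lim_{\delta \to \infty} p_\phi^\delta = p_\phi$. The strict positivity of the $\Pi_*$ density is essential here — it ensures that any feasible $\Pi \in \Pi_\phi$ is the weak limit of a sequence $(\Pi_k)$ with uniformly bounded $H(\Pi_k \| \Pi_*)$ (constructed, e.g., by mixing with $\Pi_*$ or by truncating and renormalising), so that one can approximate the supremum from inside the KL ball while preserving the moment constraint. A secondary technicality is to verify, at each finite $\delta$, the Slater-type constraint qualification needed to invoke strong duality for the entropy-regularised program in step (b); this is the standard hypothesis already handled in the Christensen--Connault framework, so the verification amounts to bookkeeping within their existing machinery.
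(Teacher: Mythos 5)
Your proposal is correct and takes essentially the same route as the paper: the paper's proof simply cites \cite{CsiszarMatus2012} for the no-gap dual representation under the relative-interior condition and \cite{ChristensenConnault2019} for the large-$\delta$ limit of the KL-regularised finite-dimensional dual, which is exactly the structure you lay out (with the weak-duality and Donsker--Varadhan details made explicit).
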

	
	\noindent
	In view of Proposition \ref{prop:duality.binary.continuous}, we may compute $p_U$ using the approximation
	\[
	p_U \approx \sup_\beta \left( \inf_{\eta \geq 0, \mu} \eta \log \mathbb E^{\Pi_*} \left[ e^{\eta^{-1}\left(  b(X;\phi) + \mu'(g(X;\phi)-r) \right)} \right] \right) + \eta \delta  \,,
	\]
	which is valid for large $\delta$. The lower probability $p_L$ can be computed analogously:
	\be
	p_L \approx \inf_\beta \left( \sup_{\eta \geq 0, \mu} -\eta \log \mathbb E^{\Pi_*} \left[ e^{-\eta^{-1}\left(  b(X;\phi) + \mu'(g(X;\phi)-r) \right)} \right] \right) - \eta \delta  \,. \label{eq:extreme.binary.continuous.lower}
	\ee

	Similar techniques may also be used when $\Theta_0$ arises out of robustness concerns; see Example 2. To that end, we can consider a class of models where forecast probabilities and restrictions defining $\Theta_0$ are linear in $\Pi$, but where we now restrict $\Pi$ to the class 
	\[
	\Pi_{\phi,\delta} = \{ \Pi : K(\Pi \| \Pi_\phi) \leq \delta\} \,,
	\]
	where $\delta \geq 0$ and $K(\Pi\|\Pi_\phi)$ is the Kullback--Leibler divergence between $\Pi$ and a reference density $\Pi_\phi$. In the context of Example 2, $\Pi_\phi$ is a correlated random effects distribution indexed by auxiliary parameters $\xi$, and $\phi = (\beta,\xi)$. The identified set is now
	\begin{equation} \label{eq:id.continuous.robust}
	\Theta_0 = \left\{(\phi,\Pi) : \Pi \in \Pi_{\phi,\delta}\,, \int g(x;\phi)\,  \mathrm d \Pi(x) = r \right\} \,.
	\end{equation}
	With this notion of the identified set, we may apply well known duality methods to compute the extreme probabilities using the dual representations
	\begin{align}
	p_U & = \sup_\phi \bigg( \inf_{\eta \geq 0, \mu} \eta \log \mathbb E^{\Pi_\phi} \left[ e^{\eta^{-1}\left(  b(X;\phi) + \mu'(g(X;\phi)-r) \right)} \right] \bigg) + \eta \delta \,,  \notag \\
	p_L & =  \inf_\phi \bigg( \sup_{\eta \geq 0, \mu} -\eta \log \mathbb E^{\Pi_\phi} \left[ e^{-\eta^{-1}\left(  b(X;\phi) + \mu'(g(X;\phi)-r) \right)} \right] \bigg) - \eta \delta  \,, \label{eq:extreme.binary.continuous.lower.2}
	\end{align}
	which are valid whenever $\mathbb E^{\Pi_\phi}\left[ e^{c \|g(X;\phi)\|} \right]$ is finite for each $c \geq 0$ and each $\phi$, and 
	\[
	r \in \mathrm{ri} \left( \left\{ \int g(x;\phi) \, \mathrm d \Pi(x) : \Pi \in \Pi_{\phi,\delta} \right\}  \right) \,;
	\]
	see, e.g., \cite{ChristensenConnault2019} for a formal statement. Similar dual representations apply for neighborhoods constrained by other $\phi$-divergences.
	
	\subsubsection{Multinomial forecasts}
	
	Multinomial forecasts can be implemented similarly using the reformulations described above. For minimax forecasts, if the forecast probabilities are each of the form
	\[
	\int b_m(x,\phi)\, \mathrm d \Pi(x)
	\]
	for $m = 0,1,\ldots,M$, then each $\underline p_m$ can be computed as in (\ref{eq:extreme.binary.continuous.lower}) or (\ref{eq:extreme.binary.continuous.lower.2}), replacing $b$ with $b_m$. For minimax regret forecasts, each $\Delta p_m$ can be computed as 
	\[
	\Delta p_m \approx \max_{m'} \sup_\beta \left( \inf_{\eta \geq 0, \mu} \eta \log \mathbb E^{\Pi_*} \left[ e^{\eta^{-1}\left(  b_{m'}(X;\phi) - b_m(X;\phi) + \mu'(g(X;\phi)-r) \right)} \right] \right) + \eta \delta  \,,
	\]
	when $\Theta_0$ is of the form (\ref{eq:id.continuous}). A similar computation applies when $\Theta_0$ is of the form (\ref{eq:id.continuous.robust}), replacing $\Pi_*$ with $\Pi_\phi$.
	
	\section{Further Results on Robust Binary Forecasts}
	\label{appsec:binary}

	\subsection{Equivalence of Minimax forecasts under Quadratic and Logarithmic Loss}
	
	Here we show that the minimax forecast under quadratic loss is also minimax under logarithmic loss. We first rule out a few pathological cases. Suppose the econometrician chooses $d = 0$. If $p_U > 0$ then the maximum risk is $+\infty$, which is obtained by the maximizing agent choosing any $\theta \in \Theta_0$ with $\PR_\theta(Y = 1)>0$. Thus, it is only optimal to choose $d = 0$ when $p_U = 0$, in which case $\PR_\theta(Y = 1) = 0$ for all $\theta \in \Theta_0$. A parallel argument shows it is only optimal to choose $d = 1$ when $p_L  = 1$. More generally, if $p_L = p_U$ then it is optimal to choose $d$ to be their common value. 
	Now suppose that $p_L < p_U$. Problem (\ref{eq:minimax}) becomes
	\begin{align*}
	\inf_{ d \in \mathcal{D}} \; \sup_{\theta \in \Theta_0} \; \mathbb{E}_\theta [ \ell_p(Y,  d)  ] 
	& = \inf_{d \in [0,1]} \sup_{p \in [p_L, p_U]}  - p \log  d - (1-p) \log (1- d) \\
	& = \sup_{p \in [p_L, p_U]} \inf_{ d \in [0,1]}  - p \log  d - (1-p) \log (1- d) \,,
	\end{align*}
	where the first equality is because for any $ d \in [0,1]$, the maximum risk is obtained at either $p_L$ or $p_U$, and the second equality is by the minimax theorem.  The inner minimum is achieved at $d = p$, and the outer maximum is achieved by taking $p \in [p_L, p_U]$ to be as close to $\frac{1}{2}$ as possible.

	\subsection{Equivalence of Robust Binary Forecasts under Classification Loss}

	We now show that the minimax and minimax regret forecasts are identical under classification loss. First suppose $p_L > \frac{1}{2} $. In this case, the $\theta$-optimal decision is $d_{b,\theta}^* = 1$ for all $\theta \in \Theta_0$ and so $d_{b,mmr} = d_{b,mm} = 1$. Similarly, when $p_U < \frac{1}{2} $ the $\theta$-optimal decision is $d_{b,\theta}^*$ for all $\theta \in \Theta_0$ and so $d_{b,mmr} = d_{b,mm} = 0$. It remains to consider the case in which $p_L \leq \frac{1}{2}$ and $p_U \geq \frac{1}{2}$ both hold. It is then straightforward to deduce that
	\begin{align*}
	d_{b,mmr}
	& = \mathbb{I} [ {\textstyle \frac{1}{2}} - p_L  \leq  p_U - {\textstyle \frac{1}{2}}  ]  = \mathbb{I} [ 1 \leq  p_L + p_U ] = d_{b,mm} \,.
	\end{align*}

	\subsection{Non-equivalence of Minimax and Minimax Regret Forecasts when $M \geq 2$}\label{subsec:binary.multivariate.nonequal}
	
	Unlike the binary case ($M = 1$), minimax and minimax regret forecasts are no longer equal for classification loss when $M \geq 2$. 
	To see this, consider an example with $M = 3$ in which $\Theta_0 = \{\theta_1, \theta_2, \theta_3\}$ with $\theta_1 = (\frac{1}{2}, \frac{1}{2}, 0, 0)'$, $\theta_2 = (\frac{1}{3}, \frac{1}{3}, 0, \frac{1}{3})'$, and $\theta_3 = (\frac{1}{5}, \frac{1}{5}, 0, \frac{4}{5})'$, where we identify each parameter with its vector of forecast probabilities for the outcomes in the set $\mathcal D = \{0,1,2,3\}$. 
	The $\theta$-optimal forecasts for classification loss are $d_{c,\theta_1}^* \in \{0,1\}$ (i.e., both $d_{\theta_1,c}^* = 0$ and $d_{c,\theta_1}^* = 1$ are $\theta$-optimal forecasts under $\theta_1$), $d_{c,\theta_2}^* \in \{0,1,3\}$, and $d_{c,\theta_3}^* = 3$. 
	
	For the minimax decision, we have $\underline p_0 = \frac{1}{5}$, $\underline p_1 = \frac{1}{5}$, $\underline p_2 = 0$, and $\underline p_3  = 0$. Therefore, $d_{c,mm}\in \{0,1\}$ is the minimax decision for classification loss and the minimax risk is ${\cal R}^*_{c,mm} = \frac{4}{5}$.
	
	For the minimax regret decision, note that the regret from choosing $m = 0,1,2,3$ under $\theta_1$ is $(0,0,\frac{1}{2},\frac{1}{2})$. Similarly, under $\theta_2$ and $\theta_3$ the regrets are $(0,0,\frac{1}{3},0)$ and $(\frac{3}{5},\frac{3}{5},\frac{4}{5},0)$. Therefore, $\Delta p_0 = \frac{3}{5}$, $\Delta p_1 = \frac{3}{5}$, $\Delta p_2 = \frac{4}{5}$, and $\Delta p_3 = \frac{1}{2}$. The minimax regret forecast is $d_{c,mmr} = 3$ and its maximum regret is $\mathcal R^*_{c,mmr} = \frac{1}{2}$.
	
	Similarly, with $M = 2$ and $\theta_1 = (\frac{1}{2}, \frac{1}{2}, 0)'$, $\theta_2 = (\frac{1}{3}, \frac{1}{3}, \frac{1}{3})'$, and $\theta_3 = (\frac{1}{5}, \frac{1}{5}, \frac{4}{5})'$, we have that the minimax forecast is $d_{c,mm} \in \{0,1\}$ whereas the minimax regret forecast is $d_{c,mmr} = 2$.

	\section{Proofs}\label{appsec:proofs}
	
	\subsection{Preliminaries}
	
	Our approach to establishing asymptotic efficiency follows \cite{HiranoPorter2009}. First, we characterize the asymptotic representation of the forecast in the limit experiment. Second, we show that these are optimal with respect to average excess maximum risk and regret in the limit experiment. Finally, we invoke a version of their Lemma 1 which allows us to approximate average excess maximum risk or regret with finite $n$ by that in the limit experiment.  The next two subsections describe preliminary results for steps 1 and 2 of this approach for binary and multinomial forecasts. The final subsection presents proofs of the main result.
	
	To simplify notation, throughout the proofs we write $\Pi_n(P)$ for the posterior $\Pi_n(P|X_n)$, $\mathrm d \Pi_n$ in place of $\mathrm d \Pi_n(P|X_n)$. We adopt the convention that $+\infty \times 0 = 0$.  We also require a limiting counterpart to excess maximum risk and regret criteria. To this end, for any sequence $\{d_n\}_{n \geq 1} \in \mathbb D$, $P_0 \in \mathcal P$, and perturbation direction $h \in \mathbb R^k$, we define local asymptotic excess maximum risk as
	\[
	\mathcal L_{mm}(\{d_n\}_{n \geq 1};P_0,h) = \lim_{n \to \infty} \mathbb E_{P_{n,h}}\left[ \sqrt n \Delta\mathcal R_{mm}\left(d_n, P_0 + n^{-1/2} h; X_n\right) \right] \,.
	\]
	Local asymptotic excess maximum regret $\mathcal L_{mmr}(\{d_n\}_{n \geq 1};P_0,h)$ is defined similarly, replacing excess maximum risk $\Delta\mathcal R_{mm}$ in the above display with excess maximum regret $\Delta\mathcal R_{mmr}$. The local asymptotic excess maximum risk and regret of $\{d_n\}_{n \geq 1} \in \mathbb D$ will only depend on $\{d_n\}_{n \geq 1}$ through its asymptotic representation $d^\infty$. Note the form of $d^\infty$ may depend on $P_0$, but we suppress this dependence to simplify notation.
  We can therefore write
	\begin{align*}
	\mathcal L_{mm}(\{d_n\}_{n \geq 1}; P_0, h) & = \mathcal L_{mm}^\infty(d^\infty;P_0,h) \,, \\
	\mathcal L_{mmr}(\{d_n\}_{n \geq 1}; P_0, h) & = \mathcal L_{mmr}^\infty(d^\infty;P_0,h)
	\end{align*}
	for some functionals $\mathcal L_{mm}^\infty$ and $\mathcal L_{mmr}^\infty$. We say that $d^\infty_*$ is \emph{optimal for average local asymptotic excess maximum risk} in the limit experiment if it is a flat prior Bayes rule:
	\[
	\int \mathcal L_{mm}^\infty(d^\infty_*;P_0,h) \, \mathrm d h = \inf_{d^\infty} \int \mathcal L_{mm}^\infty(d^\infty;P_0,h) \, \mathrm d h \,,
	\]
	where the infimum on the right-hand side is taken over all such (possibly randomized) $\mathcal D$-valued forecasts $d^\infty(Z,U)$ in the limit experiment. Optimality for average local asymptotic excess maximum regret in the limit experiment is defined similarly. We sometimes simply say \emph{optimal in the limit experiment} when the notion of optimality (local asymptotic minimax risk or regret) is obvious form the context.

	\subsection{Supplementary Lemmas: Binary Forecasts}\label{appsec:binary.proofs}
	
	For binary forecasts, both $\Delta\mathcal R_{mm}(d, P)$ and $\Delta\mathcal R_{mmr}(d, P)$ can be written as linear functions of $d$. Therefore, local asymptotic excess maximum risk and regret depend on $\{d_n\}_{n \geq 1} \in \mathbb D$ only through $\lim_{n \to \infty} \mathbb E_{P_{n,h}}[d_n(X_n)]$ which takes the form $\mathbb E_h[d^\infty(Z)]$ \cite[Theorem 15.1]{Vandervaart2000} where $\mathbb E_h$ denotes expectation with respect to $Z \sim N(h,I_0^{-1})$. That is not to say that the asymptotically optimal forecast cannot be randomized. Rather, $d^\infty(z)$ represents the average (with respect to the randomization) probability that $d^\infty(Z,U) = 1$ when $Z = z$. 
		
	 To simplify notation, let $p_{U0} := p_U(P_0)$ and $p_{L0} := p_L(P_0)$. There are three cases to consider for the next lemma: Case 1, $a_{01} p_{L0} + a_{10} p_{U0} > a_{01}$; Case 2, $a_{01} p_{L0} + a_{10} p_{U0} < a_{01}$; and Case 3, $a_{01} p_{L0} + a_{10} p_{U0} = a_{01}$. 
	
	\begin{lemma}\label{lem:asymptotic.binary.risk}
		Let Assumptions \ref{A1} and parts (a) of Assumption \ref{a:binary} hold. Then: 
		\begin{enumerate}[nosep]
			\item[(i)] $d_{b,mm}$ has the asymptotic representation
			\[
			d^\infty_{b,mm}(Z) = 
			\left[ \begin{array}{ll}
			1 & \mbox{in case 1,} \\
			0 & \mbox{in case 2,} \\
			\mathbb{I} \left[ \mathbb E^*[ \dot f_{P_0}[Z^* + Z]  |Z ] \geq 0 \right] & \mbox{in case 3,}
			\end{array} \right.
			\]
			where $f(P) = a_{01} p_L(P) + a_{10}p_U(P)$;
			\item[(ii)] local asymptotic excess maximum risk of $\{d_n\}_{n \geq 1} \in \mathbb D$ is
			\begin{align*}
			\mathcal L_{mm}^\infty(d^\infty ; P_0, h) 
			& = 
			\left[ \begin{array}{ll}
			+\infty \times (1 - \mathbb E_h[ d^\infty(Z) ])  & \mbox{in case 1,} \\
			+\infty \times \mathbb E_h[ d^\infty(Z) ]\phantom{1-()}  & \mbox{in case 2,} \\
			(\dot f_{P_0}[h])_{+} - \mathbb E_h[ d^\infty(Z) ] (\dot f_{P_0}[h]) & \mbox{in case 3,}
			\end{array} \right.
			\end{align*}
			where $\lim_{n \to \infty} \mathbb E_{P_{n,h}}[d_n(X_n)] = \mathbb E_h[d^\infty(Z)]$;
			\item[(iii)] $d_{b,mm}^\infty(Z)$ is optimal in the limit experiment.
		\end{enumerate} 
	\end{lemma}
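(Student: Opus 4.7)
The plan is to prove the three claims in sequence, exploiting the fact that binary loss makes every quantity affine in $d \in \{0,1\}$. For (i), I would split into the three cases by the sign of $f(P_0) - a_{01}$ where $f(P) = a_{01} p_L(P) + a_{10} p_U(P)$, and rewrite the Bayesian robust forecast from~(\ref{eq:efficient.robust.binary}) as $d_{b,mm}(X_n) = \mathbb{I}[\int f(P) \, \mathrm d \Pi_n(P|X_n) \geq a_{01}]$. In Cases~1 and~2, continuity of $f$ (Assumption~\ref{a:binary}.1(a)) and posterior consistency (Assumption~\ref{a:binary}.2(a)), transported from $P_0$-probability to $P_{n,h}$-probability by contiguity of $F_{n,P_{n,h}}$ to $F_{n,P_0}$ (which follows from the LAN condition in Assumption~\ref{A1}), yield $\int f(P) \, \mathrm d \Pi_n \overset{P_{n,h}}{\to} f(P_0)$, so $d_{b,mm}(X_n)$ converges in probability to $1$ and $0$ respectively. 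In Case~3, where $f(P_0) = a_{01}$, I would write $d_{b,mm}(X_n) = \mathbb{I}[\sqrt n \int (f(P) - f(P_0)) \, \mathrm d \Pi_n \geq 0]$ and apply Assumption~\ref{a:binary}.3(a) with the Borel set $A = [0,\infty)$ to identify the asymptotic distribution as $\mathbb{I}[\mathbb E^*[\dot f_{P_0}[Z^* + Z] \mid Z] \geq 0]$, which is precisely $d_{b,mm}^\infty(Z)$.

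For (ii), I would first observe that for $d \in \{0,1\}$, the maximum risk formula in Section~\ref{subsec:binary.robust} together with the value of the oracle's minimax risk yields
\begin{equation*}
\Delta \mathcal R_{mm}(d; P) = [f(P) - a_{01}]_+ (1 - d) + [a_{01} - f(P)]_+ \, d \,.
\end{equation*}
Taking $P_{n,h}$-expectations, scaling by $\sqrt n$, and using $\mathbb E_{P_{n,h}}[d_n(X_n)] \to \mathbb E_h[d^\infty(Z)]$ from the asymptotic representation theorem applied to $\{d_n\} \in \mathbb D$, Cases~1 and~2 follow from $\sqrt n(f(P_{n,h}) - a_{01}) \to \pm \infty$, producing the $+\infty$ formulas under the convention $+\infty \times 0 = 0$. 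In Case~3, positive homogeneity of degree one of the directional derivative gives $\sqrt n(f(P_{n,h}) - f(P_0)) \to \dot f_{P_0}[h]$, and the identity $(x)_+ - (x)_- = x$ combines the positive and negative parts into the stated formula $(\dot f_{P_0}[h])_+ - \mathbb E_h[d^\infty(Z)] \, \dot f_{P_0}[h]$.

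Part (iii) is immediate in Cases~1 and~2, since any rule with $\mathbb E_h[d^\infty(Z)] \neq 1$ (resp.\ $\neq 0$) incurs infinite average excess risk whereas $d_{b,mm}^\infty$ achieves zero. For Case~3, the summand $(\dot f_{P_0}[h])_+$ is free of $d^\infty$, so it suffices to maximize $\int \mathbb E_h[d^\infty(Z)] \, \dot f_{P_0}[h] \, \mathrm d h$. Fubini combined with the symmetry identity $\int \phi(h; z, I_0^{-1}) \dot f_{P_0}[h] \, \mathrm d h = \mathbb E^*[\dot f_{P_0}[Z^* + z]]$ (with $Z^* \sim N(0, I_0^{-1})$) reduces this to maximizing $\int d^\infty(z) \, g(z) \, \mathrm d z$ over measurable $d^\infty : \mathbb R^k \to [0,1]$, where $g(z) := \mathbb E^*[\dot f_{P_0}[Z^* + z]]$. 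The pointwise maximizer is $\mathbb{I}[g(z) \geq 0]$, which equals $d_{b,mm}^\infty(Z)$ since $\mathbb E^*[\dot f_{P_0}[Z^* + Z] \mid Z] = g(Z)$ by independence of $Z^*$ and $Z$. The main obstacle I expect is that the improper integral $\int (\dot f_{P_0}[h])_+ \, \mathrm d h$ diverges (because $\dot f_{P_0}$ is positively homogeneous of degree one); I would sidestep this by comparing candidate rules through the constant-sign \emph{difference} $\int (\mathbb{I}[g(z) \geq 0] - d^\infty(z)) \, g(z) \, \mathrm d z \geq 0$, which is well-defined in $[0, +\infty]$. A secondary point is that the asymptotic representation theorem a priori permits randomized rules $d^\infty(Z, U)$, but linearity of the loss in $d$ allows de-randomization by replacing $d^\infty(Z, U)$ with $\mathbb E[d^\infty(Z,U) \mid Z] \in [0,1]$ without changing the average risk.
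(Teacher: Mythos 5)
Your proposal is correct and takes essentially the same route as the paper's proof: Cases 1--2 of (i) via continuity of $f$, posterior consistency, and contiguity from LAN, Case 3 via the posterior delta-method condition in Assumption \ref{a:binary}.3(a) (the paper applies it with the half-plane $\{(x,y): a_{01}x+a_{10}y\ge 0\}$ for $f=(p_L,p_U)$, which is what your scalar formulation amounts to); the same affine excess-maximum-risk formula and directional-derivative limits for (ii); and the same Fubini-plus-pointwise-in-$z$ Bayes argument with the Gaussian convolution identity for (iii). The one wrinkle you flag in (iii) --- divergence of $\int (\dot f_{P_0}[h])_+\,\mathrm dh$ --- is handled in the paper implicitly because the joint integrand $(\dot f_{P_0}[h])_+ - d^\infty(z)\dot f_{P_0}[h]$ is nonnegative, so Tonelli licenses the interchange and pointwise minimization; your comparison via the nonnegative difference $\int(\mathbb{I}[g(z)\ge 0]-d^\infty(z))g(z)\,\mathrm dz$ is an equally valid way to make the same step airtight.
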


	\begin{proof}[Proof of Lemma \ref{lem:asymptotic.binary.risk}]
		\underline{Part (i):} As $d_{b,mm}(X_n)$ is discrete, establishing convergence in distribution of $d_{b,mm}(X_n)$ under $\{F_{n,P_{n,h}}\}_{n \geq 1}$ is equivalent to characterizing $\lim_{n \to \infty} F_{n,P_{n,h}} (d_{b,mm}(X_n) = 1)$.  
		
		For Case 1, by Assumption \ref{a:binary}.1(a), for any $\varepsilon > 0$ there is a neighborhood $N$ of $P_0$ upon which $|a_{01} p_L(P) + a_{10} p_U(P) - a_{01} p_{L0} - a_{10} p_{U0}| < \varepsilon$. By posterior consistency (Assumption \ref{a:binary}.2(a)) and the fact that $0 \leq p_U,p_L \leq 1$, we have 
		\[
		\left| \int \left( a_{01} p_L(P) + a_{10} p_U(P) \right) \, \mathrm d \Pi_n - a_{01} p_{L0} - a_{10} p_{U0} \right| \leq \varepsilon  \Pi_n(P \in N) + 2(a_{01} + a_{10}) \Pi_n(P \not \in N) \overset{P_0}{\to} \varepsilon \,.
		\]
		As $\varepsilon $ was arbitrary, $ \int ( a_{01} p_L(P) + a_{10} p_U(P) ) \, \mathrm d \Pi_n \overset{P_0}{\to} a_{01} p_{L0} + a_{10} p_{U0}$. Therefore, $d_{b,mm}(X_n) \overset{P_0}{\to} 1$. As $\{F_{n,P_0}\}_{n \geq 1}$ and $\{F_{n,P_{n,h}}\}_{n \geq 1}$ are contiguous by Le Cam's first lemma and Assumption \ref{A1}, it follows that $d_{b,mm}(X_n) \overset{P_{n,h}}{\to} 1$ for any $h \in \mathbb R^k$. Case 2 follows similarly.
		For Case 3, we may write
		\[
		d_{b,mm}(X_n) = \mathbb{I} \left[ \int a_{01} \sqrt n (p_L(P) - p_{L0}) + a_{10} \sqrt n( p_U(P) - p_{U0}) \, \mathrm d \Pi_n \geq 0 \right] \,.
		\]
		By Assumption \ref{a:binary}.3(a) with $A = \{(x,y) : a_{01} x + a_{10} y \geq 0\}$, we have 
		\[
		\lim_{n \to \infty} F_{n,P_{n,h}} (d_{b,mm}(X_n) = 1) = \mathbb P_h\left( \mathbb E^*[ a_{01} \dot p_{L,P_0}[Z^* + Z] + a_{10} \dot p_{U,P_0}[Z^* + Z]  |Z ] \geq 0 \right) \,.
		\]
		
		\underline{Part (ii):} The excess maximum risk of $d \in \mathcal D$ is
		\[
		\Delta\mathcal R_{mm}(d, P) = d (a_{01} - a_{01} p_L(P) - a_{10} p_U(P)) - (a_{01} - a_{01} p_L(P) - a_{10} p_U(P))_-  \,,
		\]
		where $a_- = \min\{a,0\}$. For Case 1, for all $n$ large enough we have 
		\[
		\mathbb E_{P_{n,h}}[ \sqrt n \Delta \mathcal R_{mm}(d_n(X_n), P_{n,h}) ] = \sqrt n \times ( a_{01} p_L(P_{n,h}) + a_{10} p_U(P_{n,h}) - a_{01}) \times \left( 1 - \mathbb E_{P_{n,h}}[d_n(X_n)] \right) \,,
		\]
		where $\liminf_{n \to \infty} (a_{01} p_L(P_{n,h}) + a_{10} p_U(P_{n,h}) - a_{01}) > 0$ and $\mathbb E_{P_{n,h}}[d_n(X_n)] \to \mathbb E_h[d^\infty(Z)]$. Case 2 follows by similar arguments. For Case 3, rearranging slightly we have
		\begin{align*}
		\mathbb E_{P_{n,h}}[ \sqrt n \Delta \mathcal R_{mm}(d_n(X_n), P_{n,h}) ] 
		& =  \sqrt n (a_{01} p_L(P_{n,h}) + a_{10} p_U(P_{n,h}) - a_{01} p_{L0} - a_{10} p_{U0})_+ \\
		& \quad - \mathbb E_{P_{n,h}}[d_n(X_n)] \times \sqrt n (a_{01} p_L(P_{n,h}) + a_{10} p_U(P_{n,h}) - a_{01} p_{L0} - a_{10} p_{U0}) \,,
		\end{align*}
		where $\mathbb E_{P_{n,h}}[d_n(X_n)] \to \mathbb E_h[d^\infty(Z)]$ and 
		\begin{align*}
		\sqrt n (a_{01} p_L(P_{n,h}) + a_{10} p_U(P_{n,h}) - a_{01} p_{L0} - a_{10} p_{U0})_{\phantom +} & \to \phantom{(} a_{01} \dot p_{L,P_0}[h] + a_{10} \dot p_{U,P_0}[h] \,, \\
		\sqrt n (a_{01} p_L(P_{n,h}) + a_{10} p_U(P_{n,h}) - a_{01} p_{L0} - a_{10} p_{U0})_+ & \to ( a_{01} \dot p_{L,P_0}[h] + a_{10} \dot p_{U,P_0}[h])_{+}
		\end{align*} by Assumption \ref{a:binary}.1(a). 
		
		\underline{Part (iii):} From part (ii), we see that $d^\infty(Z) = 1$ (almost everywhere) is optimal in Case 1 and $d^\infty(Z) = 0$ (almost everywhere) is optimal in Case 2. In Case 3, we have
		\begin{align*}
		& \int (a_{01} \dot p_{L,P_0}[h] + a_{10} \dot p_{U,P_0}[h])_{+} - \mathbb E_h[ d^\infty(Z) ] \times (a_{01} \dot p_{L,P_0}[h] + a_{10} \dot p_{U,P_0}[h]) \, \mathrm d h \\
		& \propto \int \int \left( (a_{01} \dot p_{L,P_0}[h] + a_{01} \dot p_{U,P_0}[h])_{+} - d^\infty(z)  \times (a_{01} \dot p_{L,P_0}[h] + a_{10} \dot p_{U,P_0}[h])  \right)e^{-\frac{1}{2} (z - h)' I_0 (z - h) } \, \mathrm d z \, \mathrm d h  \,.
		\end{align*}
		Swapping the order of integration and minimizing pointwise in $z$, we obtain
		\[
		d^\infty(z) = \mathbb I \left[  \int  (a_{10} \dot p_{L,P_0}[h] + a_{01} \dot p_{U,P_0}[h]) e^{-\frac{1}{2} (z - h)' I_0 (z - h) } \, \mathrm d h  \geq 0 \right]  \,.
		\]
		Equivalently, $d^\infty(z) = \mathbb{I} \left[ \mathbb E^*[ a_{10} \dot p_{L,P_0}[Z^* + Z] + a_{01} \dot p_{U,P_0}[Z^* + Z]  |Z = z] \geq 0 \right]$. This is the same asymptotic representation as was derived in Part (i).
	\end{proof}

	Let $a = \frac{a_{01}}{a_{01} + a_{10}}$ There are four cases to consider for the next lemma, namely: Case 1, $p_{L0} + p_{U0} > 2a$; Case 2, $p_{L0} + p_{U0} < 2a$; Case 3, $p_{L0} + p_{U0} = 2a$ and $p_{U0} >  a$; and Case 4, $p_{L0} = p_{U0} = a$.

	\begin{lemma}\label{lem:asymptotic.binary.regret}
		Let Assumption \ref{A1} and \ref{a:binary} hold. Then: 
		\begin{enumerate}[nosep]
			\item[(i)] $d_{b,mmr}$ has the asymptotic representation
			\begin{align*}
			d^\infty_{b,mmr}(Z) 
			& = 
			{ \left[ \begin{array}{l}
				1 \quad \mbox{in case 1,} \\
				0 \quad \mbox{in case 2,} \\
				\mathbb{I} \left[ \mathbb E^*[ \dot p_{L,P_0}[Z^* + Z] + \dot p_{U,P_0}[Z^* + Z] |Z ] \geq 0 \right] \phantom{()_+()_-} \quad \mbox{in case 3,} \\
				\mathbb{I} \left[ \mathbb E^*[ (\dot p_{L,P_0}[Z^* + Z])_- + (\dot p_{U,P_0}[Z^* + Z])_+|Z]  \geq 0 \right] \quad \mbox{in case 4;} \\
				\end{array} \right.
			}
			\end{align*}
			\item[(ii)] local asymptotic excess maximum regret of $\{d_n\}_{n \geq 1} \in \mathbb D$ is
			\begin{align*}
			& \mathcal L_{mmr}^\infty( d^\infty ; P_0, h) \\
			& = {\small
				\left[ \begin{array}{l}
				+\infty \times (1 - \mathbb E_h[ d^\infty(Z) ])  \quad \mbox{in case 1,} \\
				+\infty \times \mathbb E_h[ d^\infty(Z) ]\phantom{1-()}  \quad \mbox{in case 2,} \\
				(a_{01} + a_{10})\left( (\dot p_{L,P_0}[h] + \dot p_{U,P_0}[h])_+ -\mathbb E_h[ d^\infty(Z) ] ( \dot p_{L,P_0}[h] + \dot p_{U,P_0}[h]) \right) \phantom{()_+()_-()_+()_-} \quad \mbox{in case 3,} \\
				(a_{01} + a_{10})\left( ((\dot p_{L,P_0}[h])_- + (\dot p_{U,P_0}[h])_+ )_+ - \mathbb E_h[ d^\infty(Z) ] ( (\dot p_{L,P_0}[h])_- + (\dot p_{U,P_0}[h])_+)  \right) \quad \mbox{in case 4,}
				\end{array} \right.
			}
			\end{align*}
			where $\lim_{n \to \infty} \mathbb E_{P_{n,h}}[d_n(X_n)] = \mathbb E_h[d^\infty(Z)]$;
			\item[(iii)] $d_{b,mmr}^\infty(Z)$ is optimal in the limit experiment.
		\end{enumerate} 
	\end{lemma}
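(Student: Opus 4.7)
The plan is to mirror the three-step structure of the proof of Lemma \ref{lem:asymptotic.binary.risk}: first derive the asymptotic distribution of $d_{b,mmr}(X_n)$ in each of the four cases, then compute the local asymptotic excess regret of an arbitrary $\{d_n\}_{n \geq 1} \in \mathbb D$, and finally verify that $d_{b,mmr}^\infty$ is the pointwise Bayes rule in the limit experiment. The key tools are the delta-method limits of Assumption~\ref{a:binary}.3 applied to the two integrals $\int (a - p_L(P))_+\,\mathrm d\Pi_n$ and $\int (p_U(P) - a)_+\,\mathrm d\Pi_n$ (with $a = a_{01}/(a_{01}+a_{10})$), posterior consistency (Assumption~\ref{a:binary}.2), and the continuity/directional differentiability of $p_L$ and $p_U$ (Assumption~\ref{a:binary}.1(a)). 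Contiguity, via Assumption~\ref{A1} and Le~Cam's first lemma, upgrades convergence under $\{F_{n,P_0}\}$ to convergence under $\{F_{n,P_{n,h}}\}$ throughout.

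For part (i) I would treat the four cases separately. Cases 1 and 2 follow from posterior consistency alone: the two integrals concentrate at the constants $(a - p_{L0})_+$ and $(p_{U0} - a)_+$, and the strict inequality between these forces $d_{b,mmr}(X_n)$ to converge in probability to the appropriate constant. In Case 3 the strict orderings $p_{L0} < a < p_{U0}$ persist in a neighborhood of $P_0$, so the positive parts coincide with their arguments, $d_{b,mmr}(X_n)$ reduces to $\mathbb I[\int \sqrt n (p_L(P) + p_U(P) - 2a)\,\mathrm d\Pi_n \geq 0]$, and Assumption~\ref{a:binary}.3(a) together with the threshold continuity in Assumption~\ref{a:binary}.1(b) delivers the stated limit. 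Case 4 requires the most care, because $(a - p_{L0})_+ = (p_{U0} - a)_+ = 0$ eliminates the leading-order term; after scaling by $\sqrt n$, Assumption~\ref{a:binary}.3(b) gives joint weak convergence of the two integrals to the conditional expectations of $(\dot p_{L,P_0}[Z^* + Z])_-$ and $(\dot p_{U,P_0}[Z^* + Z])_+$ given $Z$, and the asymptotic representation is then the indicator comparing these two limits.

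For part (ii) I would first derive the closed form
\[
\Delta \mathcal R_{mmr}(d, P) = (a_{01} + a_{10})\big\{ d\big[ (a - p_L(P))_+ - (p_U(P) - a)_+ \big] + \big( (p_U(P) - a)_+ - (a - p_L(P))_+ \big)_+ \big\}
\]
by subtracting the oracle value $\min\{(a - p_L(P))_+, (p_U(P) - a)_+\}$ from (\ref{eq:regret.binary.pointwise}); the case-by-case limits then follow by evaluating at $P = P_{n,h}$, multiplying by $\sqrt n$, and applying Assumption~\ref{a:binary}.1(a) in Cases 1--3 and Assumption~\ref{a:binary}.3(b) in Case 4. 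For part (iii) I would substitute each case's limit into $\int \mathcal L_{mmr}^\infty(d^\infty; P_0, h)\,\mathrm d h$, exchange the $h$- and $z$-integrations by writing $\mathbb E_h[d^\infty(Z)]$ against the $N(h, I_0^{-1})$ density of $Z$, and minimize the resulting expression pointwise in $z \in \mathbb R^k$; the minimizer is an indicator whose threshold matches the asymptotic representation in part (i) via the identity $\int g(h)\,\varphi(z - h; 0, I_0^{-1})\,\mathrm d h = \mathbb E^*[g(Z^* + z)]$, establishing optimality. The principal obstacle is Case 4: the zeroth-order term vanishes, so both truncated directional derivatives must be tracked simultaneously, and a continuity condition analogous to Assumption~\ref{a:binary}.1(b)---needed so that the indicator defining $d_{b,mmr}(X_n)$ converges weakly---must either be extracted from the normal smoothing in Assumption~\ref{a:binary}.3(b) or imposed as a side condition on $P_0$.
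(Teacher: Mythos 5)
Your overall architecture (four cases for the asymptotic representation, a linear-in-$d$ closed form for the excess regret, pointwise Bayes minimization in $z$ after swapping the order of integration, and the Gaussian-convolution identity to match part (i)) is the same as the paper's, and your closed form for $\Delta\mathcal R_{mmr}$ is algebraically equivalent to the one the paper uses. The step that would fail as you have written it is Case 3 of part (i). You argue that because $p_{L0} < a < p_{U0}$ the strict orderings persist on a neighborhood of $P_0$, ``so the positive parts coincide with their arguments,'' and conclude that $d_{b,mmr}(X_n)$ reduces to $\mathbb I\bigl[\int \sqrt n\,(p_L(P)+p_U(P)-2a)\,\mathrm d\Pi_n \geq 0\bigr]$. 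But the posterior integrals $\int(a-p_L(P))_+\,\mathrm d\Pi_n$ and $\int(p_U(P)-a)_+\,\mathrm d\Pi_n$ run over all of $\mathcal P$, not over a neighborhood, and after the $\sqrt n$ scaling the discrepancy between $(a-p_L(P))_+$ and $a-p_L(P)$ on the event $\{P\notin N\}$ is of order $\sqrt n\,\Pi_n(P\notin N|X_n)$, which ordinary posterior consistency does not control. This is exactly where the paper deploys Assumption \ref{a:binary}.2(b): it bounds $n^\gamma\int\bigl((p_L(P)-p_{L0})-\kappa\bigr)_+\,\mathrm d\Pi_n \leq 2n^\gamma\Pi_n(P\notin N_\kappa|X_n)\overset{P_0}{\to}0$ with $\gamma>\tfrac12$ (carried over to $P_{n,h}$ by contiguity), and then sandwiches $F_{n,P_{n,h}}(d_{b,mmr}(X_n)=1)$ between the probabilities of $\{\int(p_L+p_U-p_{L0}-p_{U0})\,\mathrm d\Pi_n \geq \varepsilon\}$ and $\{\cdots\geq-\varepsilon\}$, using Assumption \ref{a:binary}.3(a) and then the continuity in Assumption \ref{a:binary}.1(b) to let $\varepsilon\downarrow 0$. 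Your proposal lists the rate condition among its tools but never deploys it, and the ``reduction'' you assert is precisely the claim that requires proof.

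Two smaller points. In part (ii) no posterior integral appears---the regret is evaluated at the deterministic point $P_{n,h}$ and only $d_n(X_n)$ is averaged---so Case 4 there needs only positive homogeneity and continuity of $(\cdot)_+$ together with Assumption \ref{a:binary}.1(a) (e.g.\ $\sqrt n\,(p_U(P_{n,h})-p_{U0})_+\to(\dot p_{U,P_0}[h])_+$); citing Assumption \ref{a:binary}.3(b) there is a misattribution, though harmless. Your closing worry about a continuity condition at the threshold in Case 4 of part (i) is taken care of by the way Assumption \ref{a:binary}.3(b) is stated (convergence for every Borel set $A$), so no extra side condition is needed; otherwise your treatment of Cases 1, 2, and 4 and of parts (ii)--(iii) matches the paper's argument.
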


	\begin{proof}[Proof of Lemma \ref{lem:asymptotic.binary.regret}]
		\underline{Part (i):} Cases 1 and 2 follow by similar arguments to the Proof of Lemma \ref{lem:asymptotic.binary.risk}(i). For Case 3, let $\kappa := p_{U0} - a = a - p_{L0}$ and note $\kappa > 0$. We have
		\begin{align*}
		F_{n,P_{n,h}} (d_{b,mmr}(X_n) = 1) 
		& = F_{n,P_{n,h}} \left( {\textstyle \int  \left( a - p_L(P)   \right)_+ \! \mathrm d \Pi_n \leq \int  \left( p_U(P) - a \right)_+ \! \mathrm d \Pi_n } \right) \\
		& \geq F_{n,P_{n,h}} \left( {\textstyle \int  \left( a - p_L(P)   \right)_+ \! \mathrm d \Pi_n \leq \int  \left( p_U(P)  - a \right)  \mathrm d \Pi_n } \right) \\
		& = F_{n,P_{n,h}} \left( {\textstyle \int  \left( \kappa - ( p_L(P) - p_{L0})   \right)_+ \! \mathrm d \Pi_n \leq \int  \left( p_U(P) - p_{U0} \right)  \mathrm d \Pi_n }  + \kappa  \right) \,.
		\end{align*}
		As $(x - y)_{+} - x = \max(-y, -x)$ and hence $(x - y)_+ + y - x = \max(0, y-x)$, we can rewrite the preceding inequality as
		\begin{align}
		& F_{n,P_{n,h}} (d_{b,mmr}(X_n) = 1) \notag \\
		& \geq F_{n,P_{n,h}} \left( {\textstyle \int  \left( ( p_L(P) - p_{L0})  - \kappa \right)_+ \! \mathrm d \Pi_n \leq \int  \left( p_L(P) + p_U(P) - p_{L0} - p_{U0} \right)  \mathrm d \Pi_n } \right) \,. \label{eq:asymptotic.binary.regret.p1}
		\end{align}
		By continuity of $p_L(P)$ at $P_0$ (by Assumption \ref{a:binary}.1(a)) and posterior consistency, we can choose a neighborhood $N_\kappa$ of $P_0$ upon which $|p_L(P) - p_L(P_0)| < \kappa$. By Assumption \ref{a:binary}.2(b), there exists $\gamma > \frac{1}{2}$ such that $n^\gamma \Pi_n(P \not \in N_\kappa) \overset{P_0}{\to} 0$. As $0 \leq p_L \leq 1$, we therefore have the bound 
		\[
		n^\gamma \int  \left( ( p_L(P) - p_L(P_0))  - \kappa \right)_+ \! \mathrm d \Pi_n \leq 2 n^\gamma \Pi_n(P \not \in N_\kappa) \overset{P_0}{\to} 0 \,.
		\]
		By contiguity, convergence holds under $P_{n,h}$ for all $h \in \mathbb R_k$. We therefore have that 
		\[
		F_{n,P_{n,h}} \left( {\textstyle \int  \left( \sqrt n( p_L(P) - p_L(P_0))  - \sqrt n \kappa \right)_+ \! \mathrm d \Pi_n } \leq n^{\gamma - \frac{1}{2}} \right) \to 1
		\]
		for all $h \in \mathbb R^k$. We may therefore rewrite (\ref{eq:asymptotic.binary.regret.p1}) as 
		\begin{align*}
		F_{n,P_{n,h}} (d_{b,mmr}(X_n) = 1)
		& \geq F_{n,P_{n,h}} \left( {\textstyle n^{\gamma - \frac{1}{2}} \leq \int  \left( p_L(P) + p_U(P) - p_{L0} - p_{U0} \right)  \mathrm d \Pi_n } \right) - o(1) \\
		& \geq F_{n,P_{n,h}} \left( {\textstyle \varepsilon \leq \int  \left( p_L(P) + p_U(P) - p_{L0} - p_{U0} \right)  \mathrm d \Pi_n } \right) - o(1) \\
		& \to \mathbb P_h\left( \mathbb E^*[ \dot p_{L,P_0}[Z^* + Z] + \dot p_{U,P_0}[Z^* + Z]  |Z ] \geq \varepsilon \right) 
		\end{align*}
		for any $\varepsilon > 0$, where the final line is by Assumption \ref{a:binary}.3(a) with $A = \{(x,y) : x + y \geq \varepsilon\}$. Similarly,
		\begin{align*}
		F_{n,P_{n,h}} (d_{b,mmr}(X_n) = 1)
		& \leq F_{n,P_{n,h}} \left( {\textstyle -\varepsilon \leq \int  \left( p_L(P) + p_U(P) - p_{L0} - p_{U0} \right)  \mathrm d \Pi_n } \right) + o(1) \\
		& \to \mathbb P_h\left( \mathbb E^*[ \dot p_{L,P_0}[Z^* + Z] + \dot p_{U,P_0}[Z^* + Z]  |Z ] \geq -\varepsilon \right) 
		\end{align*}
		for every $\varepsilon > 0$. The desired convergence now follows by Assumption \ref{a:binary}.1(b). 
		
		Finally, consider Case 4 ($p_{U0} = p_{L0} = a$). We may write
		\[
		d_{b,mmr}(X_n) = \mathbb{I} \left[   0 \leq \int  \left(  p_L(P) - p_{L0}  \right)_- + \left( p_U(P) - p_{U0} \right)_+ \! \mathrm d \Pi_n \right] \,.
		\]
		It follows by Assumption \ref{a:binary}.3(b) taking $A = \{(x,y) : x + y \geq 0\}$ that 
		\[
		\lim_{n \to \infty} F_{n,P_{n,h}} (d_{b,mmr}(X_n) = 1) \to \mathbb P_h\left( \mathbb E^*[ (\dot p_{L,P_0}[Z^* + Z])_- + (\dot p_{U,P_0}[Z^* + Z])_+  |Z ] \geq 0 \right) \,.
		\]
		
		\underline{Part (ii):} First note that excess maximum regret of $d \in \mathcal D$ is
		\begin{align*}
		\Delta\mathcal R_{mmr}(d, P) 
		& = d (a_{01} - (a_{01} + a_{10}) p_L(P))_+ + (1-d) ((a_{01} + a_{10}) p_U(P) - a_{01})_+ \\
		& \quad - (a_{01} - (a_{01} + a_{10}) p_L(P))_+  \wedge  ((a_{01} + a_{10}) p_U(P) - a_{01})_+ \,.
		\end{align*}
		For Case 1, note $((a_{01} + a_{10}) p_U(P_{n,h}) - a_{01})_+ = (a_{01} + a_{10}) p_U(P_{n,h}) - a_{01} > 0$ holds for $n$ sufficiently large because $p_{U0} > a$ in this case. Moreover, in this case $a_{01} - (a_{01} + a_{10}) p_{L0} < (a_{01} + a_{10}) p_{U0} - a_{01}$, so the term $\sqrt n ((a_{01} + a_{10}) p_U(P_{n,h}) - a_{01})_+$ will dominate the term $\sqrt n (a_{01} - (a_{01} + a_{10}) p_L(P_{n,h}))_+$ asymptotically. It follows that for any $\{d_n\}_{n \geq 1} \in \mathbb D$,
		\[
		\lim_{n \to \infty} \mathbb E_{P_{n,h}}[ \sqrt n \Delta \mathcal R_{mmr}(d_n(X_n), P_{n,h}) ] = \lim_{n \to \infty} \sqrt n \times ( p_U(P_{n,h}) - a ) \times \left( 1 - \mathbb E_{P_{n,h}}[d_n(X_n)] \right) \,,
		\]
		where $p_U(P_{n,h}) \to p_{U0} > a$ and $\mathbb E_{P_{n,h}}[d_n(X_n)] \to \mathbb E_h[d^\infty(Z)]$. 
		Case 2 follows similarly.
		
		For Case 3, first note that for $n$ sufficiently large we have 
		\begin{align*}
		(a_{01} - (a_{01} + a_{10}) p_L(P_{n,h}))_+  & = a_{01} - (a_{01} + a_{10}) p_L(P_{n,h}) \,, \\
		((a_{01} + a_{10}) p_U(P_{n,h}) - a_{01})_+  & = (a_{01} + a_{10}) p_U(P_{n,h}) - a_{01} \,.
		\end{align*}
		Letting $\sqrt n (a_{01} - (a_{01} + a_{10})p_{L0}) = \sqrt n ((a_{01} + a_{10}) p_{U0} - a_{01}) = \sqrt n \kappa$ where $\kappa > 0$ and taking $n$ sufficiently large, we  therefore obtain
		\begin{align*}
		& \mathbb E_{P_{n,h}}[ \sqrt n \Delta \mathcal R_{mmr}(d_n(X_n), P_{n,h}) ] \\
		& = \sqrt n \times \mathbb E_{P_{n,h}} [d_n(X_n)] \left(\kappa - (a_{01} + a_{10}) (p_L(P_{n,h}) - p_{L0}) \right) \\
		& \quad  + \sqrt n \times (1-\mathbb E_{P_{n,h}} [d_n(X_n)]) \left(\kappa + (a_{01} + a_{10})  ( p_U(P_{n,h}) - p_{U0}) \right)  \\
		& \quad  - \sqrt n \times \left( \left(\kappa - (a_{01} + a_{10}) (p_L(P_{n,h}) - p_{L0}) \right) \wedge \left(\kappa + (a_{01} + a_{10})( p_U(P_{n,h}) -  p_{U0}) \right)\right) \\
		& = (a_{01} + a_{10})  \Big( \mathbb E_{P_{n,h}} [d_n(X_n)] \times -  \sqrt n (p_L(P_{n,h}) - p_{L0}) + (1-\mathbb E_{P_{n,h}} [d_n(X_n)]) \times \sqrt n ( p_U(P_{n,h}) - p_{U0})   \\
		& \quad  -  \left( \left( - \sqrt n (p_L(P_{n,h}) - p_{L0}) \right) \wedge \left( \sqrt n ( p_U(P_{n,h}) - p_{U0}) \right) \right) \Big) \,,
		\end{align*}
		which converges to
		\[
		(a_{01} + a_{10})  \Big( - \mathbb E_h[d^\infty(Z)] \dot p_{L,P_0}[h] + (1 - \mathbb E_h[d^\infty(Z)]) \dot p_{U,P_0}[h] - \left( (-\dot p_{L,P_0}[h]) \wedge (\dot p_{U,P_0}[h] )\right) \Big) 
		\]
		by Assumption \ref{a:binary}.1(a). The stated form now follows because $x - ((-y) \wedge x) = (x + y)_+$.
		
		Finally, for Case 4 $a_{01} - (a_{01} + a_{10})p_{L0} = (a_{01} + a_{10}) p_{U0} - a_{01} = 0$. By similar logic to Case 3.,
		\begin{align*}
		& \mathbb E_{P_{n,h}}[ \sqrt n \Delta \mathcal R_{mmr}(d_n(X_n), P_{n,h}) ] \\
		& = (a_{01} + a_{10}) \times \sqrt n \times \Big( \mathbb E_{P_{n,h}} [d_n(X_n)]  \left( - (p_L(P_{n,h}) - p_{L0}) \right)_+ \\
		& \quad  + (1-\mathbb E_{P_{n,h}} [d_n(X_n)])  \left(   p_U(P_{n,h}) - p_{U0} \right)_+   - \left( - (p_L(P_{n,h}) - p_{L0}) \right)_+ \wedge \left(   p_U(P_{n,h}) - p_{U0} \right)_+ \Big) \,,
		\end{align*}
		which converges to
		\[
		(a_{01} + a_{10})  \Big( \mathbb E_h[d^\infty(Z)] (-\dot p_{L,P_0}[h])_+ + (1 - \mathbb E_h[d^\infty(Z)]) (\dot p_{U,P_0}[h])_+ - \left( (-\dot p_{L,P_0}[h])_+ \wedge (\dot p_{U,P_0}[h] )_+ \right) \Big) 
		\]
		again by Assumption \ref{a:binary}.1(a). The result follows from $a - (b \wedge a) = (a - b)_+$ and $-(-a)_+ = a_-$.
		
		\underline{Part (iii):} From part (ii), we see that $d^\infty(Z) = 1$ (almost everywhere) is optimal in Case 1 and $d^\infty(Z) =0$ (almost everywhere) is optimal in Case 2. In Case 3, by similar arguments to the proof of Lemma \ref{lem:asymptotic.binary.regret}(iii) we see that average asymptotic excess maximum regret is minimized with
		\[
		d^\infty_{P_0}(z) = \mathbb I \left[  \int  (\dot p_{L,P_0}[h] + \dot p_{U,P_0}[h]) e^{-\frac{1}{2} (z - h)' I_0 (z - h) } \, \mathrm d h  \geq 0 \right]  \,,
		\]
		whereas a minimizing choice in Case 4 is
		\[
		d^\infty_{P_0}(z) = \mathbb I \left[  \int  ((\dot p_{L,P_0}[h])_- + (\dot p_{U,P_0}[h])_+) e^{-\frac{1}{2} (z - h)' I_0 (z - h) } \, \mathrm d h  \geq 0 \right]  \,.
		\]
	\end{proof}

	\subsection{Supplementary Lemmas: Multinomial Forecasts}
	
	For multinomial forecasts, the excess maximum risk $\Delta\mathcal R_{mm}(d, P)$ and regret $\Delta\mathcal R_{mmr}(d, P)$ are linear in the indicator functions $\mathbb I[d = m]$ for $m = 0,\ldots,M$. Therefore, local asymptotic excess maximum risk and regret depend on $\{d_n\}_{n \geq 1} \in \mathbb D$ through $\lim_{n \to \infty} \mathbb E_{P_{n,h}}[\mathbb I(d_n(X_n) = m)]$ which can be written $\mathbb E_h[d^\infty_{m}(Z)]$ for each $m$ \cite[Theorem 15.1]{Vandervaart2000}. The term $d^\infty_m(z)$ represents the average (with respect to randomization) probability that $d^\infty(Z,U) = m$ when $Z = z$.
	
	Deriving the asymptotic representation of $d_{c,mm}(X_n)$ requires a tie-breaking rule, so in the derivation below we take the smallest element of the set of maximizers. To simplify notation, let $\underline p_{m0} := \underline p_m(P_0)$. It is without loss of generality to reorder the indices so that $\underline p_{00} \geq \underline p_{10} \geq \ldots \geq \underline p_{M0}$. There are two cases, namely: Case 1, $\underline p_{00} > \underline p_{10}$; and Case 2, $\underline p_{00} = \underline p_{10} = \ldots = \underline p_{k0}$ for some $k \in \{1, \ldots, M\}$ with $\underline p_{k0} > \underline p_{(k+1)0}$ if $k < M$.

	\begin{lemma}\label{lem:asymptotic.multinomial.risk}
		Let Assumptions \ref{A1}, \ref{a:multinomial}.1(a), \ref{a:multinomial}.2, and \ref{a:multinomial}.3(a) hold. Then: 
		\begin{enumerate}[nosep]
			\item[(i)] $d_{c,mm}$ has the asymptotic representation
			\[
			d^\infty_{c,mm,m}(Z) = 
			\left[ \begin{array}{ll}
			\mbox{$1$ if $m = 0$ and $0$ if $m \in \{1,\ldots,M\}$} & \mbox{in case 1,} \\
			\mathbb I[ ( \mathbb E^*[ \dot{\underline p}_{m,P_0}[Z^* + Z]|Z] > \max_{0 \leq m' \leq m-1}  \mathbb E^*[ \dot{\underline p}_{m',P_0}[Z^* + Z]|Z] ) \mbox{ and } \\
			\quad ( \mathbb E^*[ \dot{\underline p}_{m,P_0}[Z^* + Z]|Z] \geq \max_{m+1 \leq m' \leq k} \mathbb E^*[ \dot{\underline p}_{m',P_0}[Z^* + Z]|Z] )  ] \\
			\mbox{if $m \in \{0,\ldots,k\}$ and $0$ if $m \in \{k+1,\ldots,M\}$}  & \mbox{in case 2,}
			\end{array} \right.
			\]
			where the maximum over an empty index is $-\infty$;
			\item[(ii)] local asymptotic excess maximum risk of $\{d_n\}_{n \geq 1} \in \mathbb D$ is
			\begin{align*}
			\mathcal L_{mm}^\infty(d^\infty ; P_0, h) 
			& = 
			\left[ \begin{array}{ll}
			+\infty \times (1 - \mathbb E_h[ d^\infty_0(Z) ])  & \mbox{in case 1,} \\
			\sum_{m=0}^k \mathbb E_h[ d^\infty_{m}(Z)] ( \max_{0 \leq m' \leq k} \dot{\underline p}_{m',P_0}[h] - \dot{\underline p}_{m,P_0}[h] ) \\
			\quad + \infty \times (1 - \sum_{m = 0}^k \mathbb E_h[ d^\infty_{m}(Z)]) & \mbox{in case 2,}
			\end{array} \right.
			\end{align*}
			where $\lim_{n \to \infty} \mathbb E_{P_{n,h}}[\mathbb I[d_n(X_n) = m]] = \mathbb E_h[d^\infty_{m}(Z)]$;
			\item[(iii)] $(d_{c,mm,m}^\infty(Z))_{m=0}^M$ is optimal in the limit experiment.
		\end{enumerate} 
	\end{lemma}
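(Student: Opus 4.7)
\medskip

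\noindent\textbf{Proof proposal.} The plan is to follow the same three-step structure as Lemma~\ref{lem:asymptotic.binary.risk}, adapting each step to the multinomial indicator structure. The key simplification is that excess maximum risk is linear in the indicators $\mathbb{I}[d_n(X_n)=m]$, with coefficient $\max_{m'}\underline p_{m'}(P)-\underline p_m(P)$, so both Parts (i) and (ii) reduce to analyzing $\sqrt n$-rescaled differences of the functionals $\underline p_m(P)$ averaged against the posterior.

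For Part (i), I would characterize the limit of $F_{n,P_{n,h}}(d_{c,mm}(X_n)=m)$ by cases. In Case 1, $\underline p_{00}>\underline p_{10}\geq\ldots\geq\underline p_{M0}$, so by Assumption~\ref{a:multinomial}.1(a) (continuity) together with posterior consistency (Assumption~\ref{a:multinomial}.2) one obtains $\int\underline p_m(P)\,\mathrm d\Pi_n\overset{P_0}{\to}\underline p_{m0}$ for each $m$, exactly as in the Case-1 argument of Lemma~\ref{lem:asymptotic.binary.risk}(i). Hence $d_{c,mm}(X_n)\overset{P_0}{\to}0$, which extends to $P_{n,h}$ by Le Cam's first lemma and Assumption~\ref{A1}. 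In Case 2, write the event $\{d_{c,mm}(X_n)=m\}$ for $m\leq k$ as strict inequality against all $m'<m$ and weak inequality against all $m'>m$ (this is the smallest-index tie-breaking rule). For $m,m'\leq k$, recenter: $\int(\underline p_m(P)-\underline p_{m'}(P))\,\mathrm d\Pi_n=\int(\underline p_m(P)-\underline p_{m0})-(\underline p_{m'}(P)-\underline p_{m'0})\,\mathrm d\Pi_n$ (since $\underline p_{m0}=\underline p_{m'0}$), multiply by $\sqrt n$, and invoke Assumption~\ref{a:multinomial}.3(a) with the Borel set cut out by the appropriate linear inequalities on $(\dot{\underline p}_{0,P_0}[\cdot],\ldots,\dot{\underline p}_{M,P_0}[\cdot])$. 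For $m\leq k$ and $m'>k$, the difference $\underline p_{m0}-\underline p_{m'0}>0$ dominates the $O(n^{-1/2})$ perturbations, so the corresponding inequality holds with probability tending to one. Assembling these pieces gives the displayed representation; ties among the $\mathbb E^*$-functionals contribute measure zero under the continuity implicit in Assumption~\ref{a:multinomial}.3(a).

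For Part (ii), use $\Delta\mathcal R_{mm}(d,P)=\max_{m'}\underline p_{m'}(P)-\underline p_d(P)=\sum_m\mathbb I[d=m](\max_{m'}\underline p_{m'}(P)-\underline p_m(P))$. Multiply by $\sqrt n$, take expectation under $F_{n,P_{n,h}}$, and pass to the limit term by term. For $m\leq k$, $\sqrt n(\max_{m'}\underline p_{m'}(P_{n,h})-\underline p_m(P_{n,h}))\to\max_{m'\leq k}\dot{\underline p}_{m',P_0}[h]-\dot{\underline p}_{m,P_0}[h]$ by directional differentiability (Assumption~\ref{a:multinomial}.1(a)), while $\mathbb E_{P_{n,h}}[\mathbb I[d_n(X_n)=m]]\to\mathbb E_h[d^\infty_m(Z)]$ by \cite{Vandervaart1991}. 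For $m>k$, $\underline p_{m0}<\max_{m'}\underline p_{m'0}$ makes $\sqrt n(\max_{m'}\underline p_{m'}(P_{n,h})-\underline p_m(P_{n,h}))\to+\infty$, producing the $+\infty\times(1-\sum_{m\leq k}\mathbb E_h[d^\infty_m(Z)])$ term.

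For Part (iii), average the expression from Part (ii) over $h$. The $+\infty$ coefficient forces any optimal $d^\infty$ to satisfy $\sum_{m=0}^k\mathbb E_h[d^\infty_m(Z)]=1$ a.e., i.e., to choose within $\{0,\ldots,k\}$. Subject to this constraint, write the integrated criterion as $\int\int\sum_{m=0}^k d^\infty_m(z)(\max_{m'\leq k}\dot{\underline p}_{m',P_0}[h]-\dot{\underline p}_{m,P_0}[h])\varphi(z-h)\,\mathrm dh\,\mathrm dz$ (where $\varphi$ is the $N(0,I_0^{-1})$ density), swap the order of integration, and minimize pointwise in $z$: at each $z$, put mass on the $m\in\{0,\ldots,k\}$ maximizing $\int\dot{\underline p}_{m,P_0}[h]\varphi(z-h)\,\mathrm dh=\mathbb E^*[\dot{\underline p}_{m,P_0}[Z^*+Z]\mid Z=z]$, breaking ties by smallest index. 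This recovers exactly $d^\infty_{c,mm,m}(z)$ from Part (i).

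The main obstacle will be Case 2 of Part (i): I need to verify that the joint convergence statement in Assumption~\ref{a:multinomial}.3(a) yields the correct representation after the tie-breaking is encoded via combined strict/weak inequalities, and that the boundary events (where two of the $\mathbb E^*$-functionals coincide) have probability zero under $\mathbb P_h$ so that weak and strict inequalities agree in the limit. The matching argument in Part (iii) then requires the same tie-breaking convention, so both steps must be kept consistent.
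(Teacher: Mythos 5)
Your proposal is correct and follows essentially the same route as the paper's proof: Case 1 via continuity, posterior consistency, and contiguity; Case 2 via recentering the posterior-averaged $\underline p_m$'s, rescaling by $\sqrt n$, and applying Assumption \ref{a:multinomial}.3(a) with the Borel set encoding the smallest-index tie-breaking inequalities; Part (ii) via linearity of excess risk in the indicators and directional differentiability; and Part (iii) by swapping the order of integration and minimizing pointwise in $z$. Your flagged worry about boundary ties is handled in the paper simply by the fact that Assumption \ref{a:multinomial}.3(a) is stated for arbitrary Borel sets, so no separate zero-probability argument is needed.
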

	
	\begin{proof}[Proof of Lemma \ref{lem:asymptotic.multinomial.risk}]
		\underline{Part (i):} Case 1 follows by similar arguments to the proof of Lemma \ref{lem:asymptotic.binary.risk}. For Case 2, if $k < M$ we can deduce by continuity of the $\underline p_m$ and posterior consistency that $F_{n,P_{n,h}} (d_{c,mm}(X_n) > k) \to 0$ and $F_{n,P_{n,h}} (  \min_{0 \leq m \leq k}\int \underline p_{m}(P) \, \mathrm d \Pi_n  >  \max_{m > k}\int \underline p_{m}(P) \, \mathrm d \Pi_n ) \to 1$. Let $ \min_{0 \leq m \leq k}\int \underline p_{m}(P) \, \mathrm d \Pi_n  >  \max_{m > k}\int \underline p_{m}(P) \, \mathrm d \Pi_n$. For $m \in \{0,1\,\ldots,k\}$, under the above tie-breaking rule we then have
		\begin{align*}
		\mathbb I[d_{c,mm}(X_n) = m] &  =  \mathbb I \left[  \int \underline p_{m}(P) \, \mathrm d \Pi_n > \max_{0 \leq m' \leq m-1} \int \underline p_{m'}(P) \, \mathrm d \Pi_n  \right] \\
		& \quad \times   \mathbb I \left[  \int \underline p_{m}(P) \, \mathrm d \Pi_n \geq \max_{m+1 \leq m' \leq k}\int \underline p_{m'}(P) \, \mathrm d \Pi_n  \right] \,.
		\end{align*}
		As $\underline p_{00} = \underline p_{10} = \ldots = \underline p_{k0}$, we may rewrite the previous expression as
		\begin{align*}
		\mathbb I[d_{c,mm}(X_n) = m] &  =   \mathbb I \left[  \int \sqrt n (\underline p_{m}(P) - \underline p_{m0}) \, \mathrm d \Pi_n > \max_{0 \leq m' \leq m-1} \int \sqrt n ( \underline p_{m'}(P) - \underline p_{m'0}) \, \mathrm d \Pi_n  \right]  \\
		& \quad \times  \mathbb I \left[  \int \sqrt n (\underline p_{m}(P) - \underline p_{m0}) \, \mathrm d \Pi_n \geq \max_{m+1 \leq m' \leq k} \int \sqrt n ( \underline p_{m'}(P) - \underline p_{m'0}) \, \mathrm d \Pi_n  \right] \,.
		\end{align*}
		Therefore by Assumptions \ref{a:multinomial}.1(a) and \ref{a:multinomial}.3(a) with $A = \{(x_0,x_1,\ldots,x_M) : x_m > x_{m'}$ if $m' \in \{0,\ldots,m-1\}$ and $x_m \geq x_{m'}$ if $m' \in \{m+1,\ldots,k\}\}$, we have
		\begin{align*}
		\lim_{n \to \infty} F_{n,P_{n,h}}(d_{c,mm}(X_n) = m)
		& = \mathbb P_h \Big( \Big( \mathbb E^*[ \dot{\underline p}_m[Z^* + Z]|Z] > \max_{0 \leq m' \leq m-1}  \mathbb E^*[ \dot{\underline p}_{m'}[Z^* + Z]|Z] \Big)    \\
		& \quad \quad \quad \quad \mbox{and } \Big( \mathbb E^*[ \dot{\underline p}_m[Z^* + Z]|Z] \geq \max_{m+1 \leq m' \leq k} \mathbb E^*[ \dot{\underline p}_{m'}[Z^* + Z]|Z] \Big) \Big) \,.
		\end{align*}

		\underline{Part (ii):} The excess maximum risk of $d \in \mathcal D$ is
		\[
		\Delta \mathcal R_{mm}(d,P) = \sum_{m=0}^M \mathbb I[ d = m ] \left( \max_{0 \leq m' \leq M} \underline p_{m'}(P) - \underline p_m(P) \right) \,.
		\]
		For Case 1, by continuity of $\underline p_m(\cdot)$ for all $m$ (under Assumption \ref{a:multinomial}.1(a)) we have $\underline p_m(P_{n,h}) \to \underline p_{m0}$ for all $m$ and $\max_{0 \leq m' \leq M} \underline p_{m'}(P_{n,h}) \to \underline p_{00}$. Then for all $n$ sufficiently large,
		\[
		\mathbb E_{P_{n,h}}[ \sqrt n \Delta \mathcal R_{mm}(d_n(X_n), P_{n,h}) ] = \sqrt n  \sum_{m=1}^M \mathbb E_{P_{n,h}} \left[ \mathbb I[ d_n(X_n) = m ] \right] \left( \underline p_0(P_{n,h}) - \underline p_m(P_{n,h}) \right) \,,
		\]
		where $\liminf_{n \to \infty}( \underline p_0(P_{n,h}) - \underline p_m(P_{n,h})) > 0$ for $m \geq 1$ and $ \mathbb E_{P_{n,h}} \left[ \mathbb I[ d_n(X_n) = m ] \right]  \to \mathbb E_h[ d^\infty_{m}(Z)]$. 
		Now consider Case 2. Again by continuity, for $n$ sufficiently large we have 
		\begin{align*}
		\mathbb E_{P_{n,h}}[ \sqrt n \Delta \mathcal R_{mm}(d_n(X_n), P_{n,h}) ] & = \sqrt n \sum_{m=1}^k \mathbb E_{P_{n,h}} \left[ \mathbb I[ d_n(X_n) = m ] \right] \left( \max_{0 \leq m' \leq k} \underline p_{m'}(P_{n,h}) - \underline p_m(P_{n,h}) \right) \\
		& \quad + \sqrt n \sum_{m=k+1}^M \mathbb E_{P_{n,h}} \left[ \mathbb I[ d_n(X_n) = m ] \right] \left( \max_{0 \leq m' \leq k} \underline p_{m'}(P_{n,h}) - \underline p_m(P_{n,h}) \right) \,,
		\end{align*}
		where the second sum is zero if $k = M$. If $k < M$, by similar arguments to Case 1 we have
		\[
		\sqrt n \sum_{m=k+1}^M \mathbb E_{P_{n,h}} \left[ \mathbb I[ d_n(X_n) = m ] \right] \left( \max_{0 \leq m' \leq k} \underline p_{m'}(P_{n,h}) - \underline p_m(P_{n,h}) \right) \to + \infty \times \sum_{m = k+1}^M \mathbb E_h[ d^\infty_m(Z)] \,.
		\]
		Moreover, for $m \leq k$ by Assumption \ref{a:multinomial}.1(a) we have
		\begin{align*}
		\sqrt n \left( \max_{0 \leq m' \leq k} \underline p_{m'}(P_{n,h}) - \underline p_m(P_{n,h}) \right) & = \left( \max_{0 \leq m' \leq k} (\sqrt n (\underline p_{m'}(P_{n,h}) - \underline p_{m' 0}) ) - \sqrt n (\underline p_m(P_{n,h}) - \underline p_{m0}) \right) \\
		& \to \max_{0 \leq m' \leq k} \dot{\underline p}_{m',P_0}[h] - \dot{\underline p}_{m,P_0}[h] \,.
		\end{align*} 
		
		\underline{Part (iii):} For Case 1, from part (ii), we see that $d^\infty_{0} = 1$ (almost everywhere) is optimal. For Case 2, from part (ii), we see that $d^\infty_m = 0$ (almost everywhere) is optimal for all $m > k$. For the remaining values of $m$, we have
		\begin{align*}
		& \int \sum_{m=0}^k \mathbb E_h[ d^\infty_{m}(Z)] \left( \max_{0 \leq m' \leq k} \dot{\underline p}_{m',P_0}[h] - \dot{\underline p}_{m,P_0}[h] \right) \, \mathrm d h \\ 
		& \propto \int \int d^\infty_{m,P_0}(z) \left( \max_{0 \leq m' \leq k} \dot{\underline p}_{m',P_0}[h] - \dot{\underline p}_{m,P_0}[h] \right) e^{-\frac{1}{2} (z - h)' I_0 (z - h) } \, \mathrm d z \, \mathrm d h \,.
		\end{align*}
		Changing the order of integration and minimizing pointwise in $z$, we see that if $M(z)$ denotes the set of maximizers of 
		\[
		\int \dot{\underline p}_{m,P_0}[h] e^{-\frac{1}{2} (z - h)' I_0 (z - h) }  \, \mathrm d h\,,
		\]
		then setting $d^\infty_{m}(z) = 0$ for $m \not \in M(z)$ and $d^\infty_{m}(z) \geq 0$ for $m \in M(z)$ with $\sum_{m \in M(z)} d^\infty_{m}(z) = 1$ is optimal. The tie-breaking rule used in part (i) is a special case with $d^\infty_{m}(z) = 1$ if $m = \min M(z)$.
	\end{proof}

	Characterizing $d_{c,mmr}(X_n)$ again requires a tie-breaking rule. In the derivation below we take the smallest element of the set of minimizers. To simplify notation, let $\tau_{m}(P) = \Delta p_{m}(P)$ and $\tau_{m0} = \tau_{m}(P_0)$ for $m = 0,1,\ldots,M$. Without loss of generality, reorder the indices so that $\tau_{00} \leq \tau_{10} \leq \ldots \leq  \tau_{M0}$. There are two cases, namely: case 1, $\tau_{00} < \tau_{10}$; and case 2, $\tau_{00} = \tau_{10} = \ldots = \tau_{k0}$ for some $k \in \{1, \ldots, M\}$ with $\tau_{k0} < \tau_{(k+1)0}$ if $k < M$.

	\begin{lemma}\label{lem:asymptotic.multinomial.regret}
		Let Assumptions \ref{A1}, \ref{a:multinomial}.1(b), \ref{a:multinomial}.2, and \ref{a:multinomial}.3(b) hold. Then: 
		\begin{enumerate}[nosep]
			\item[(i)] $d_{c,mmr}$ has the asymptotic representation
			\[
			d^\infty_{c,mmr,m}(Z) = 
			\left[ \begin{array}{ll}
			\mbox{$1$ if $m = 0$ and $0$ if $m \in \{1,\ldots,M\}$} & \mbox{in case 1,} \\
			\mathbb P_h ( ( \mathbb E^*[ \dot{\tau}_m[Z^* + Z]|Z] < \min_{0 \leq m' \leq m-1}  \mathbb E^*[ \dot{\tau}_{m'}[Z^* + Z]|Z] ) \mbox{ and} \\
			\quad  ( \mathbb E^*[ \dot{\tau}_m[Z^* + Z]|Z] \leq \min_{m+1 \leq m' \leq k} \mathbb E^*[ \dot{\tau}_{m'}[Z^* + Z]|Z] ) ) \\
			\mbox{if $m \in \{0,\ldots,k\}$ and $0$ if $m \in \{k+1,\ldots,M\}$}  & \mbox{in case 2,}
			\end{array} \right.
			\]
			where the minimum over an empty index is $+\infty$;
			\item[(ii)] local asymptotic excess maximum risk of $\{d_n\}_{n \geq 1} \in \mathbb D$ is
			\begin{align*}
			\mathcal L_{mmr}^\infty(d^\infty ; P_0, h) 
			& = 
			\left[ \begin{array}{ll}
			+\infty \times (1 - \mathbb E_h[ d^\infty_0(Z) ])  & \mbox{in case 1,} \\
			\sum_{m=0}^k \mathbb E_h[d^\infty_m(Z)] (\dot \tau_{m,P_0}[h] - \min_{0 \leq m' \leq k} \dot \tau_{m',P_0}[h]) \\
			\quad + \infty \times ( 1 - \sum_{m=0}^k \mathbb E_h[d^\infty_m(Z)]) & \mbox{in case 2,}
			\end{array} \right.
			\end{align*}
			where $\lim_{n \to \infty} \mathbb E_{P_{n,h}}[\mathbb I(d_n(X_n) = m)] = \mathbb E_h[d^\infty_{m}(Z)]$;
			\item[(iii)] $(d_{c,mmr,m}^\infty(Z))_{m=0}^M$ is optimal in the limit experiment.
		\end{enumerate} 
	\end{lemma}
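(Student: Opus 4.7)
The plan is to mirror the proof of Lemma \ref{lem:asymptotic.multinomial.risk}, swapping the roles of $\underline{p}_m$ and $\tau_m=\Delta p_m$ and interchanging the direction of the comparisons (we now minimize, rather than maximize, over $m$). Throughout I will use the identity
\[
\Delta\mathcal R_{mmr}(d,P)=\sum_{m=0}^M \mathbb I[d=m]\Big(\tau_m(P)-\min_{0\le m'\le M}\tau_{m'}(P)\Big),
\]
which follows directly from the definition of excess maximum regret and from $\Delta p_m=\tau_m$.

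For part (i), I would handle Case 1 first. Continuity of each $\tau_m$ (Assumption \ref{a:multinomial}.1(b)) together with posterior consistency (Assumption \ref{a:multinomial}.2) and a contiguity argument via Assumption \ref{A1} imply $\int \tau_m(P)\,\mathrm d\Pi_n \xrightarrow{P_{n,h}} \tau_{m0}$ for each $m$. Since $\tau_{00}<\tau_{m0}$ for all $m\ge 1$, the smallest-index tie-breaking rule yields $F_{n,P_{n,h}}(d_{c,mmr}(X_n)=0)\to 1$. For Case 2, the same convergence gives $F_{n,P_{n,h}}(d_{c,mmr}(X_n)>k)\to 0$ when $k<M$, so on the asymptotically prevailing event I rewrite
\begin{align*}
\mathbb I[d_{c,mmr}(X_n)=m] &= \mathbb I\Big[\textstyle\int\sqrt n(\tau_m(P)-\tau_{m0})\,\mathrm d\Pi_n < \min_{0\le m'<m}\int\sqrt n(\tau_{m'}(P)-\tau_{m'0})\,\mathrm d\Pi_n\Big]\\
&\quad\times \mathbb I\Big[\textstyle\int\sqrt n(\tau_m(P)-\tau_{m0})\,\mathrm d\Pi_n \le \min_{m<m'\le k}\int\sqrt n(\tau_{m'}(P)-\tau_{m'0})\,\mathrm d\Pi_n\Big],
\end{align*}
using $\tau_{m0}=\tau_{m'0}$ for $m,m'\le k$. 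Applying Assumption \ref{a:multinomial}.3(b) to $f=(\tau_0,\dots,\tau_M)$ with the Borel set $A=\{x\in\mathbb R^{M+1}: x_m<x_{m'} \text{ for } m'<m \text{ and } x_m\le x_{m'}\text{ for } m<m'\le k\}$ delivers the stated asymptotic representation.

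For part (ii), Case 1 is immediate: for large $n$, $\min_{m'}\tau_{m'}(P_{n,h})=\tau_0(P_{n,h})$, so $\mathbb E_{P_{n,h}}[\sqrt n\,\Delta\mathcal R_{mmr}]=\sqrt n\sum_{m=1}^M\mathbb E_{P_{n,h}}[\mathbb I[d_n=m]](\tau_m(P_{n,h})-\tau_0(P_{n,h}))$, and $\liminf_n(\tau_m(P_{n,h})-\tau_0(P_{n,h}))>0$ for $m\ge 1$ forces the limit to $+\infty\times(1-\mathbb E_h[d_0^\infty(Z)])$. In Case 2, for $n$ large the $\min$ is attained over $\{0,\dots,k\}$, and I split the sum at $m=k$: for $m\le k$, Assumption \ref{a:multinomial}.1(b) yields
\[
\sqrt n\big(\tau_m(P_{n,h})-\min_{0\le m'\le k}\tau_{m'}(P_{n,h})\big)\to \dot\tau_{m,P_0}[h]-\min_{0\le m'\le k}\dot\tau_{m',P_0}[h],
\]
while for $m>k$ (if $k<M$), $\sqrt n(\tau_m(P_{n,h})-\tau_{00})\to+\infty$, contributing the $+\infty$ penalty on $\sum_{m>k}\mathbb E_h[d_m^\infty(Z)]$.

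For part (iii), in Case 1 the expression in (ii) is minimized by $d_0^\infty\equiv 1$. In Case 2 the $+\infty$ penalty forces $d_m^\infty\equiv 0$ a.e.\ for $m>k$, and the remaining problem is
\[
\min_{\{d_m^\infty\}_{m=0}^k}\int\sum_{m=0}^k \mathbb E_h[d_m^\infty(Z)]\Big(\dot\tau_{m,P_0}[h]-\min_{0\le m'\le k}\dot\tau_{m',P_0}[h]\Big)\mathrm d h,
\]
subject to $\sum_{m=0}^k d_m^\infty(z)=1$ and $d_m^\infty\ge 0$. After writing $\mathbb E_h[d_m^\infty(Z)]$ as an integral against the $N(h,I_0^{-1})$ density and interchanging the order of integration, the problem becomes one of minimizing pointwise in $z$ a linear functional of $(d_0^\infty(z),\dots,d_k^\infty(z))$ over the simplex. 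The optimum places mass on the set $M(z)$ of minimizers of $m\mapsto \int \dot\tau_{m,P_0}[h]\,\exp\!\big(-\tfrac12(z-h)'I_0(z-h)\big)\,\mathrm d h=\mathbb E^*[\dot\tau_{m,P_0}[Z^*+Z]\mid Z=z]$ over $\{0,\dots,k\}$. The smallest-index tie-breaking rule corresponds to $d_m^\infty(z)=1$ iff $m=\min M(z)$, which is precisely the representation derived in part (i).

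The main obstacle is Case 2 of part (i): handling ties correctly and converting the condition $\int \tau_m(P)\,\mathrm d\Pi_n \le \int\tau_{m'}(P)\,\mathrm d\Pi_n$ into a form amenable to the $\sqrt n$-rescaled $\delta$-method via Assumption \ref{a:multinomial}.3(b). Verifying that the appropriate Borel sets are continuity sets for the limiting distribution (to avoid needing a continuity condition analogous to Assumption \ref{a:binary}.1(b)) requires some care, but is handled by the fact that strict inequalities in the comparison yield open sets whose boundaries are null under the relevant Gaussian-convolution law, up to the measure-zero tie-breaking ambiguity that the smallest-index rule eliminates.
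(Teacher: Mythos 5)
Your proposal is correct and takes exactly the route the paper intends: the paper's own proof of this lemma is literally the one-line remark that it ``follows by similar arguments to the proof of Lemma~\ref{lem:asymptotic.multinomial.risk},'' and your argument is precisely that adaptation, with $\underline p_m$ replaced by $\tau_m=\Delta p_m$, maxima replaced by minima, the excess-maximum-regret identity $\Delta\mathcal R_{mmr}(d,P)=\sum_m \mathbb I[d=m](\tau_m(P)-\min_{m'}\tau_{m'}(P))$ playing the role of the risk decomposition, and Assumption~\ref{a:multinomial}.3(b) invoked with the analogous Borel set. (Your closing worry about continuity sets is not even needed, since Assumption~\ref{a:multinomial}.3(b) is stated for arbitrary Borel sets, exactly as the paper uses it in the risk case.)
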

	
	\begin{proof}[Proof of Lemma \ref{lem:asymptotic.multinomial.regret}]
		Follows by similar arguments to the proof of Lemma \ref{lem:asymptotic.multinomial.risk}.
	\end{proof}

	\subsection{Main results}
	
	Theorems \ref{t:optimal.binary} and \ref{t:optimal.multinomial} are proved using the following lemma, which is a very slight generalization of Lemma 1 of \cite{HiranoPorter2009}. We include a proof for completeness. It applies to both minimax risk and regret criteria, so we drop the subscripts $mm$ and $mmr$ on $\mathcal L$, $\Delta \mathcal B$, and $\mathcal R$.
	
	\begin{lemma}\label{lem:HP}
		Let $\mathcal L(\{d_n\}_{n \geq 1}; P_0, h) = \mathcal L^\infty(d^\infty;P_0,h)$ hold for every $P_0 \in \mathcal P$, $h \in \mathbb R^k$, and $\{d_n\}_{n \geq 1} \in \mathbb D$, where $d^\infty$ denotes the asymptotic representation of $\{d_n\}_{n \geq 1} \in \mathbb D$, and let the prior $\Pi$ have a strictly positive, continuously differentiable density $\pi$ on $\mathcal P$. Then: (i) for any $\{d_n\}_{n \geq 1} \in \mathbb D$, 
		\[
		\liminf_{n \to \infty} \Delta \mathcal B^n(d_n;P_0,\pi) \geq \pi(P_0) \inf_{d^\infty} \int \mathcal L^\infty(d^\infty;P_0,h) \, \mathrm d h
		\]
		(ii) If, in addition, $\{d_n^*\}_{n\geq 1} \in \mathbb D$ and its asymptotic representation $d^\infty_*$ solves 
		\[
		\int \mathcal L^\infty(d^\infty_*;P_0,h) \, \mathrm d h = \inf_{d^\infty} \int \mathcal L^\infty(d^\infty;P_0,h) \, \mathrm d h\,,
		\]
		and $d_n^*$ satisfies
		\begin{equation}\label{eq:reverse.fatou}
		\limsup_{n \to \infty} \Delta \mathcal B^n(d_n^*;P_0,\pi) 
		\leq \int \limsup_{n \to \infty}  \mathbb E_{P_{n,h}}\left[ \sqrt n \Delta\mathcal R\left(d_n^*, P_{n,h}; X_n\right) \right] \, \pi \left( P_{n,h} \right) \, \mathrm d h \,,
		\end{equation}
		then: 
		\[
		\lim_{n \to \infty} \Delta \mathcal B^n(d_n^*;P_0,\pi) =  \pi(P_0)  \inf_{d^\infty} \int \mathcal L^\infty(d^\infty;P_0,h) \, \mathrm d h \,,
		\]
		and hence
		\[
		\lim_{n \to \infty} \Delta \mathcal B^n(d_n^*;P_0,\pi) = \inf_{\{d_n\} \in \mathbb D} \liminf_{n \to \infty} \Delta \mathcal B^n(d_n;P_0,\pi) \,.
		\]
	\end{lemma}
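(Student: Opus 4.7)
The structure follows the Hirano–Porter template: establish a lower bound on $\liminf_n \Delta\mathcal B^n$ via Fatou's lemma, establish a matching upper bound on $\limsup_n \Delta\mathcal B^n$ for the candidate optimal sequence via the hypothesized reverse-Fatou condition (\ref{eq:reverse.fatou}), and then exploit optimality of $d^\infty_*$ in the limit experiment to glue the two bounds together. The key observation that makes Fatou available for free in one direction is that the integrand defining $\Delta\mathcal B^n(d_n;P_0,\pi)$ is everywhere nonnegative: $\Delta\mathcal R(d_n, P_{n,h}; X_n)\ge 0$ pathwise, because the oracle $d^o(P_{n,h})$ minimizes the inner $\sup_{\theta\in\Theta_0(P_{n,h})}$ risk (or regret) by construction.

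For part (i), I would apply Fatou's lemma to the nonnegative integrand $f_n(h) := \mathbb E_{P_{n,h}}[\sqrt n\,\Delta\mathcal R(d_n,P_{n,h};X_n)]\,\pi(P_{n,h})$ to get
\[
\liminf_{n\to\infty}\Delta\mathcal B^n(d_n;P_0,\pi) \;\ge\; \int \liminf_{n\to\infty} f_n(h)\, \mathrm d h .
\]
Then I would identify the pointwise $\liminf$: by continuity of $\pi$ at $P_0$ we have $\pi(P_{n,h})\to \pi(P_0)$ for every $h$, and by the standing hypothesis $\mathcal L(\{d_n\};P_0,h)=\mathcal L^\infty(d^\infty;P_0,h)$ the remaining factor converges to $\mathcal L^\infty(d^\infty;P_0,h)$. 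Hence
\[
\liminf_{n\to\infty}\Delta\mathcal B^n(d_n;P_0,\pi) \;\ge\; \pi(P_0)\int \mathcal L^\infty(d^\infty;P_0,h)\,\mathrm d h \;\ge\; \pi(P_0)\inf_{d^\infty}\int \mathcal L^\infty(d^\infty;P_0,h)\,\mathrm d h,
\]
which is the desired inequality.

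For part (ii), I would apply part (i) to the sequence $\{d_n^*\}\in\mathbb D$ itself for the lower bound, and couple it with the hypothesized condition (\ref{eq:reverse.fatou}) for the upper bound: under (\ref{eq:reverse.fatou}),
\[
\limsup_{n\to\infty}\Delta\mathcal B^n(d_n^*;P_0,\pi) \;\le\; \int \limsup_{n\to\infty}\mathbb E_{P_{n,h}}\bigl[\sqrt n\,\Delta\mathcal R(d_n^*,P_{n,h};X_n)\bigr]\,\pi(P_{n,h})\,\mathrm d h,
\]
and since for $\{d_n^*\}\in\mathbb D$ the inner $\limsup$ is actually a limit equal to $\mathcal L^\infty(d^\infty_*;P_0,h)$, with $\pi(P_{n,h})\to\pi(P_0)$, the right-hand side equals $\pi(P_0)\int \mathcal L^\infty(d^\infty_*;P_0,h)\,\mathrm d h$. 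By the assumed optimality of $d^\infty_*$ in the limit experiment, this upper bound coincides with the lower bound from part (i), so $\liminf=\limsup$, the limit exists, and
\[
\lim_{n\to\infty}\Delta\mathcal B^n(d_n^*;P_0,\pi)\;=\;\pi(P_0)\inf_{d^\infty}\int \mathcal L^\infty(d^\infty;P_0,h)\,\mathrm d h.
\]
The final identity $\lim_n\Delta\mathcal B^n(d_n^*)=\inf_{\{d_n\}\in\mathbb D}\liminf_n\Delta\mathcal B^n(d_n)$ then follows by taking the infimum of the part (i) bound over $\{d_n\}\in\mathbb D$, since no admissible sequence can do better in the limit than the best representation $d^\infty$ can do in the limit experiment.

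The only genuine technical obstacle is the direction in which Fatou fails: the $\limsup$ step requires (\ref{eq:reverse.fatou}), which is why it is imposed as a hypothesis on $d_n^*$ rather than derived. In the applications it is verified by a uniform integrability / reverse-Fatou argument exploiting the pathwise boundedness of the loss and of $p_L,p_U\in[0,1]$ (and analogously for $\underline p_m,\Delta p_m$ in the multinomial case), together with the posterior tail control in Assumptions \ref{a:binary}.2(b) and \ref{a:multinomial}.2; this is the remark alluded to after the theorem statements.
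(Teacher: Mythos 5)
Your proposal is correct and follows essentially the same route as the paper's proof: Fatou's lemma (justified by pathwise nonnegativity of the excess maximum risk/regret) for the lower bound in part (i), the hypothesized condition (\ref{eq:reverse.fatou}) plus optimality of $d^\infty_*$ in the limit experiment for the matching upper bound in part (ii), and then sandwiching to obtain the limit and the final infimum identity. Your added remarks on why Fatou applies and how (\ref{eq:reverse.fatou}) is verified in applications simply make explicit what the paper leaves implicit (cf.\ Remark \ref{rmk:fatou}).
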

	
	\begin{remark}\label{rmk:fatou}
		By the reverse Fatou lemma, condition (\ref{eq:reverse.fatou}) holds if there exists a non-negative function $g(h)$ with $ \mathbb E_{P_{n,h}}\left[ \sqrt n \Delta\mathcal R\left(d_n^*, P_{n,h}; X_n\right) \right] \, \pi \left( P_{n,h} \right) \leq g(h)$ for each $n$ and $\int g(h) \, \mathrm d h < \infty$.
	\end{remark}
	
	\begin{proof}[Proof of Lemma \ref{lem:HP}]
		\underline{Part (i):} follows by Fatou's lemma and definition of $\Delta \mathcal B^n(d_n;P_0,\pi)$:
			\begin{align*}
			\liminf_{n \to \infty} \Delta \mathcal B^n(d_n;P_0,\pi) & \geq \int \liminf_{n \to \infty}  \mathbb E_{P_{n,h}}\left[ \sqrt n \Delta\mathcal R\left(d_n, P_0 + n^{-1/2} h; X_n\right) \right] \, \pi \left( P_0 + n^{-1/2} h \right) \, \mathrm d h \\
			& = \pi(P_0) \int \mathcal L^\infty(d^\infty;P_0,h) \, \mathrm d h \,,
			\end{align*}
			where $d^\infty$ denotes the asymptotic representation of $\{d_n\}_{n \geq 1} \in \mathbb D$.
			
			\underline{Part (ii):} By condition (\ref{eq:reverse.fatou}) and optimality of $d^\infty_*$ in the limit experiment, we have
			\begin{align*}
			\limsup_{n \to \infty} \Delta \mathcal B^n(d_n^*;P_0,\pi) & \leq \int \limsup_{n \to \infty}  \mathbb E_{P_{n,h}}\left[ \sqrt n \Delta\mathcal R\left(d_n^*, P_{n,h}; X_n\right) \right] \, \pi \left( P_{n,h} \right) \, \mathrm d h \\
			& = \pi(P_0) \int \mathcal L^\infty(d^\infty_* ;P_0,h) \, \mathrm d h \\
			& = \pi(P_0) \inf_{d^\infty} \int \mathcal L^\infty(d^\infty;P_0,h) \, \mathrm d h \,.
			\end{align*}
			Combining with part (i) applied to $\{d_n^*\}_{n \geq 1}$, we obtain
			\[
			\lim_{n \to \infty} \Delta \mathcal B^n(d_n^*;P_0,\pi) = \pi(P_0) \inf_{d^\infty} \int \mathcal L^\infty(d^\infty;P_0,h) \, \mathrm d h \,.
			\]
			The final result is immediate from part (i).
	\end{proof}

	\begin{proof}[Proof of Theorem \ref{t:optimal.binary}]
		\underline{Part (i):} First note that as $\tilde d_n$ is binary, establishing convergence in distribution under $\{F_{n,P_{n,h}}\}_{n \geq 1}$ is equivalent to characterizing $\lim_{n \to \infty} F_{n,P_{n,h}} (\tilde d_n(X_n) = 1)$. Lemma \ref{lem:asymptotic.binary.risk}(i) establishes that $d_{b,mm}$ converges in distribution along every sequence $\{F_{n,P_{n,h}}\}_{n \geq 1}$. Asymptotic equivalence of $\tilde d_n$ and $d_{b,mm}$ implies $\lim_{n \to \infty} F_{n,P_{n,h}} (\tilde d_n(X_n) = 1) = \lim_{n \to \infty} F_{n,P_{n,h}} (d_{b,mm}(X_n) = 1) $ for all $h \in \mathbb R^k$ and all $P_0 \in \mathcal P$.  Therefore, $\tilde d_n$ has the same asymptotic representation as $d_{b,mm}$ from Lemma \ref{lem:asymptotic.binary.risk}(i). As this asymptotic representation is optimal in the limit experiment (cf. Lemma \ref{lem:asymptotic.binary.risk}(iii)) and $d_n$ satisfies condition (\ref{eq:reverse.fatou}) by assumption, the desired conclusion now follows by Lemma \ref{lem:HP}.
			
			\underline{Part (ii):} Follows similarly by Lemmas \ref{lem:asymptotic.binary.regret} and \ref{lem:HP}. 
	\end{proof}
	
\begin{proof}[Proof of Proposition \ref{prop:inefficient.binary}]
\underline{Part (i):}
By Fatou's lemma and definition of $\Delta \mathcal B^n_{b,mm}(d_n;P_0,\pi)$, we have
\begin{align*}
 \liminf_{n \to \infty} \Delta \mathcal B^n_{b,mm}(\tilde d_n;P_0,\pi) & \geq \int \liminf_{n \to \infty}  \mathbb E_{P_{n,h}}\left[ \sqrt n \Delta\mathcal R_{mm}\left(\tilde d_n, P_0 + n^{-1/2} h; X_n\right) \right] \, \pi \left( P_0 + n^{-1/2} h \right) \, \mathrm d h \\
 & = \pi(P_0) \int \mathcal L^\infty_{mm}(\tilde d^\infty;P_0,h) \, \mathrm d h \,,
\end{align*}
where $\tilde d^\infty$ denotes the asymptotic representation of $\{\tilde d_n\}_{n \geq 1} \in \mathbb D$ and $\pi(P_0) > 0$. By the proof of Theorem \ref{t:optimal.binary} we also have
\[
  \inf_{\{d_n\} \in \mathbb D}  \liminf_{n \to \infty} \Delta \mathcal B_{b,mm}^n(d_n;P_0,\pi) = \lim_{n \to \infty} \Delta \mathcal B^n_{b,mm}(d_{b,mm};P_0,\pi) = \pi(P_0) \int \mathcal L^\infty_{mm}( d^\infty_{b,mm};P_0,h) \, \mathrm d h \,.
\] 
Therefore, it suffices to show that 
\begin{equation} \label{eq:inequality.necessary.binary.risk}
 \int \mathcal L^\infty_{mm}(\tilde d^\infty;P_0,h) \, \mathrm d h > \int \mathcal L^\infty_{mm}( d^\infty_{b,mm};P_0,h) \, \mathrm d h \,.
\end{equation}

First, suppose that $a_{01} p_{L}(P_0) + a_{10} p_{U}(P_0) \neq a_{01}$. This corresponds to Cases 1 and 2 of Lemma \ref{lem:asymptotic.binary.risk}. As asymptotic equivalence fails, we have
\[
 \lim_{n \to \infty} F_{n,P_{n,h_*}}(\tilde d_n(X_n) = 1) \neq  \lim_{n \to \infty} F_{n,P_{n,h_*}}(d_{b,mm}(X_n) = 1)
\]
for some $P_0 \in \mathcal P$ and $h_*$ in $\mathbb R^k$. We may restate the above display in terms of the asymptotic representations:
\[
 \mathbb E_{h_*}[\tilde d^\infty(Z)]  \neq  \mathbb E_{h_*}[ d^\infty_{b,mm}(Z)] \,.
\]
By H\"older's inequality we may deduce that the functions $h \mapsto \mathbb E_{h}[\tilde d^\infty(Z)]$ and $h \mapsto \mathbb E_{h}[ d^\infty_{b,mm}(Z)]$ are both continuous at $h_*$. Therefore, there exists a set $H \subset \mathbb R^k$ with positive Lebesgue measure upon which $\mathbb E_{h}[\tilde d^\infty(Z)]  \neq  \mathbb E_{h}[ d^\infty_{b,mm}(Z)]$ for all $h \in H$.

If $P_0$ is as in Case 1 of Lemma \ref{lem:asymptotic.binary.risk}, then $\mathbb E_h[\tilde d^\infty(Z)] < 1$ for all $h \in H$. This, in turn, implies that $\mathcal L_{mm}^\infty(\tilde d^\infty ; P_0, h) = +\infty$ for all $h \in H$. By contrast, $\mathcal L_{mm}^\infty(d^\infty_{b,mm} ; P_0, h) = 0$ for all $h \in \mathbb R^k$. The proof when $P_0$ satisfies the conditions of Case 2 of Lemma \ref{lem:asymptotic.binary.risk} follows similarly.

Now suppose that $a_{01} p_{L}(P_0) + a_{10} p_{U}(P_0) \neq a_{01}$, which corresponds to Case 3 of Lemma \ref{lem:asymptotic.binary.risk}. Let $f(P) = a_{01}p_L(P) + a_{10}p_U(P) $. By Lemma \ref{lem:asymptotic.binary.risk}(ii), to prove inequality (\ref{eq:inequality.necessary.binary.risk}) it suffices to show
\[
 \int \left(\tilde d^\infty(z) \int (\dot f_{P_0}[h]) e^{-\frac{1}{2}(z-h)'I_0(z-h)}\, \mathrm d h \right)  \mathrm d z < \int \left( d^\infty_{b,mm}(z)  \int (\dot f_{P_0}[h]) e^{-\frac{1}{2}(z-h)'I_0(z-h)}\, \mathrm d h \right) \mathrm d z \,.
\]
The function $\tilde d^\infty_{b,mm}(z) = \mathbb I \left[  \int  (\dot f_{P_0}[h]) e^{-\frac{1}{2} (z - h)' I_0 (z - h) } \, \mathrm d h  \geq 0 \right]$ maximizes
\[
  d(z) \times \int (\dot f_{P_0}[h]) e^{-\frac{1}{2}(z-h)'I_0(z-h)}\, \mathrm d h
\]
over all $[0,1]$-valued functions of $z$, so the preceding inequality holds weakly. To establish a strict inequality, note the functions $\tilde d^\infty(z)$ and $d^\infty_{b,mm}(z)$ must disagree on a set of positive Lebesgue measure, say $\mathcal Z$. For each $z \in \mathcal Z$ we must have one of the following:
\begin{itemize}[nosep]
\item[(i)] $\int (\dot f_{P_0}[h]) e^{-\frac{1}{2}(z-h)'I_0(z-h)}\, \mathrm d h > 0$ and $d^\infty(z) < 1$;
\item[(ii)] $\int (\dot f_{P_0}[h]) e^{-\frac{1}{2}(z-h)'I_0(z-h)}\, \mathrm d h < 0$ and $d^\infty(z) > 0$;
\item[(iii)] $\int (\dot f_{P_0}[h]) e^{-\frac{1}{2}(z-h)'I_0(z-h)}\, \mathrm d h  = 0$.
\end{itemize}
However, the condition $\mathbb E^*[a_{01} \dot p_{L,P_0}[Z^* + Z] + a_{10} \dot p_{U,P_0}[Z^* + Z] |Z ] \neq 0$ a.e.
implies that case (iii) only holds on a set of zero Lebesgue measure. Therefore, for almost every $z \in \mathcal Z$ either case (i) or (ii) must hold, which establishes the desired inequality.

\underline{Part (ii):} This follows by Lemma \ref{lem:asymptotic.binary.regret} using similar arguments to Part (i).
\end{proof}

\begin{proof}[Proof of Theorem \ref{t:optimal.multinomial}]
		The proof follows by similar arguments to Theorem \ref{t:optimal.binary}, using Lemmas \ref{lem:asymptotic.multinomial.risk} and \ref{lem:HP} for part (i) and  Lemmas \ref{lem:asymptotic.multinomial.regret} and \ref{lem:HP} for part (ii).
\end{proof}

	\subsection{Results on Computation}
	
	\begin{proof}[Proof of Proposition \ref{prop:duality.binary}]
		Dropping dependence of the $L$-vector $b$ and $K \times L$ matrix $G$ on the low-dimensional parameter $\phi$, the primal problem is
		\[
		\sup_{\pi \in \mathbb R^L} \; b'\pi \quad \mbox{subject to} \quad (G - r \otimes 1_{1 \times L}') \pi = 0, \; 1_{1 \times L}\pi - 1 = 0, \; \pi \ge 0\,,
		\]
		where the final inequality holds element-wise. The Lagrangian is
		\[
		\sup_{\pi \in \mathbb R^L} \inf_{\mu \in \mathbb R^K, \zeta \in \mathbb R,\kappa \in \mathbb R_+^L} \; \mathcal{L}(\pi,\mu,\zeta,\kappa).
		\]
		Here $\mu$, $\zeta$, and $\kappa$ are the Lagrange multipliers on the three constraints and
		\begin{eqnarray*}
			\mathcal{L}(\pi,\mu,\zeta,\kappa)
			&=& b'\pi + \mu' \left( G - r \otimes 1_{1 \times L}' \right) \pi + \zeta \left( 1_{1 \times L}\pi -  1 \right) + \kappa' \pi \\
			&=& \left(b + \left(G - r \otimes 1_{1 \times L}'\right)' \mu + \zeta 1_{1 \times L}' + \kappa \right)' \pi - \zeta \,.
		\end{eqnarray*}
		By duality, we have
		\[
		\sup_{\pi} \inf_{\mu, \zeta,\kappa}  \; \mathcal{L}(\pi,\mu,\zeta,\kappa)
		= \inf_{\mu, \zeta,\kappa}  \sup_{\pi}  \; \mathcal{L}(\pi,\mu,\zeta,\kappa).
		\]
		
		For fixed $\mu$, $\zeta$, and $\kappa$, consider the problem 
		\[
		\sup_{\pi}  \; \mathcal{L}(\pi,\mu,\zeta,\kappa) = \sup_{\pi}  \; \underbrace{\left(b + \left(G - r \otimes 1_{1 \times L}'\right)' \mu + \zeta 1_{1 \times L}' + \kappa \right)'}_{=:b^*(\mu,\zeta,\kappa)'} \pi - \zeta \,.
		\]
		This value can be made $+\infty$ by assigning arbitrarily large positive values to any element of $\pi$ for which $b^*(\mu,\zeta,\kappa)$ has a positive entry, and an arbitrarily large negative value to any element of $\pi$ for which $b^*(\mu,\zeta,\kappa)$ has a negative entry. The minimizing agent would therefore choose 
		\[
		\kappa^* = \kappa^*(\zeta,\mu) = - \left( \left( b_l + (G_l - r)' \mu + \zeta \right) \wedge 0 \right)_{l \in \{1,\ldots,L\}} \in \mathbb R_+^L
		\]
		so that all entries of $m^*(\mu,\zeta,\kappa^*)$ are non-negative:
		\[
		m^*(\mu,\zeta,\kappa^*) = \left( \left( b_l + (G_l - r)' \mu + \zeta \right) \vee 0 \right)_{l \in \{1,\ldots,L\}} \,,
		\]
		and then choose
		\[
		\zeta^* = \zeta(\mu) = - \max_{l \in \{1,\ldots,L\}} \left( b_l + (G_l - r)' \mu \right)
		\]
		so that every entry of $b^*(\mu,\zeta^*,\kappa^*)$ is zero. Any $\zeta \leq \zeta^*$ will suffice for this purpose, but values of $\zeta$ strictly less than $\zeta^*$ will result in a higher value of the minimizing agent's objective.
		Combining the intermediate results, we obtain
		\begin{eqnarray*}
			\sup_{\pi} \inf_{\mu, \zeta,\kappa} \; \mathcal{L}(\pi,\mu,\zeta,\kappa) 
			&=& \inf_{\mu} \; \bigg( \max_{l \in \{1,\ldots,L\}} \; b_l + \mu'(G_l-r)  \bigg) .
		\end{eqnarray*}
		
		This min-max problem may be restated as a linear program by introducing an additional variable $t \in \mathbb R$ for the minimizing agent:
		\begin{align*}
		\inf_{\mu} \; \bigg( \max_{l \in \{1,\ldots,L\}} \; b_l + \mu'(G_l-r)  \bigg) 
		& = \inf_{\mu,t} \; t \quad \mbox{s.t.} \quad t \geq \left(  b_l + \mu'(G_l-r)  \right) \,,\; l = 1,\ldots,L \\
		& = \inf_{\mu,t} \; t \quad \mbox{s.t.} \quad t 1_{L \times 1} \geq \left(  b + (G' - (1_{L \times 1} \otimes r')) \mu \right) \\
		& = \inf_v \;\; [0_{1\times K}, 1] \,v \quad \mbox{s.t.} \quad A v \leq -b \,,
		\end{align*}
		where $v = [\mu',t]' \in \mathbb R^{K+1}$ and $ A = [G'- (1_{L \times 1} \otimes r'), -1_{L\times 1}]$.
	\end{proof}
	
	\begin{proof}[Proof of Proposition \ref{prop:duality.binary.continuous}]
		The dual representation follows from \cite{CsiszarMatus2012}. Large-$\delta$ behavior is established in \cite{ChristensenConnault2019}.
	\end{proof}

\end{appendix}

\newpage


\end{document}